\newcommand{\gF}{{\mathfrak F}}  % Fock space
\newcommand{\gS}{{\mathfrak{S}}}
\newcommand{\gD}{{\mathfrak D}}
\newcommand{\cD}{{\mathcal{D}}} %%% domain of
\newcommand{\cV}{{\mathcal{V}}} %%% flow
\newcommand{\Snn}{\Sigma_{1\, n}^{\ \, n+1}}
\newcommand{\Fldnn}{\gF_{\ell,2,n}^{\ \ \ \, n+1}}
\newcommand{\Fnn}{\gF_n^{n+1}}
\newcommand{{\adat}}{\mathrm{ad}_{A_t}}
\def\C{{\mathbb{C}}} % Complex numbers
\def\N{{\mathbb{N}}} % Integers
\def\R{{\mathbb{R}}} % Real numbers
\def\vp{\varphi}
\newcommand{\Aa}{{a_\epsilon}} %annihilation operator for bosons
\newcommand{\Ba}{b_{\ell, \epsilon}} %annihilation op. for massive particles
\newcommand{\Ca}{c_{\ell, \epsilon}} % annihilation op. for neutrinos
\newcommand{\Ac}{{a^*_\epsilon}} %creation operator for bosons
\newcommand{\Bc}{b^*_{\ell, \epsilon}} %creation op. for massive particles
\newcommand{\Cc}{c^*_{\ell, \epsilon}} % creation op. for neutrinos
\def\1{{\mathbf{1}}}
\def\la{\langle}
\def\ra{\rangle}
\renewcommand\d{\mathrm{d}}
\newcommand{\ind}{{(q_\ell, \bar q_\ell, r_\ell,\bar r_\ell)}}
\newcommand{\sig}{{\ \sigma}} %% symbole asigma avec un espace pour les exposants d'ensembles
\renewcommand\Im{\mathrm{Im}} % imaginary part
\newtheorem{theorem}{Theorem}[section]
\newtheorem{lemma}[theorem]{Lemma}
\newtheorem{proposition}[theorem]{Proposition}
\newtheorem{remark}[theorem]{Remark}
\newtheorem{hypothesis}[theorem]{Hypothesis}
\begin{document}
\markboth{J.-M. Barbaroux, J.-C. Guillot} {Spectral theory for a
mathematical model of the weak interaction: The decay of the
intermediate vector bosons $\mathbf{W}^\pm$}

%%%%%%%%%%%%%%%%%%%%%%%%%%%%%%%%%%%%%%%%%%%%%%%%%%%%%%%%%%%%%
%%%%%%%%%%%%               TITLE         %%%%%%%%%%%%%%%%%%%%
%%%%%%%%%%%%%%%%%%%%%%%%%%%%%%%%%%%%%%%%%%%%%%%%%%%%%%%%%%%%%
\title[Mathematical model of the weak interaction]{Spectral theory for a
mathematical model of the weak interaction: The decay of the
intermediate vector bosons $\textbf{\textit{W}}^\pm$. I.}

\author{J.-M. Barbaroux}

\address{
 Centre de Physique Th\'eorique, Luminy Case 907, 13288
 Marseille Cedex~9, France and D\'epartement de Math\'ematiques,
 Universit\'e du Sud Toulon-Var, 83957 La
 Garde Cedex, France\\
%\email{barbarou@univ-tln.fr}
}

\email{barbarou@univ-tln.fr}

\author{J.-C. Guillot}

\address{
Centre de Math\'ematiques Appliqu\'ees, UMR 7641, \'Ecole
Polytechnique - C.N.R.S, 91128 Palaiseau Cedex, France
\\
}

\email{Jean-Claude.Guillot@polytechnique.edu}

\maketitle
%%%%%%%%%%%%%%%%%%%%%%%%%%%%%%%%%%%%%%%%%%%%%%%%%%%%%%%%%%%%%
%%%%%%%               END OF TITLE     %%%%%%%%%%%%%%%%%%%%%%
%%%%%%%%%%%%%%%%%%%%%%%%%%%%%%%%%%%%%%%%%%%%%%%%%%%%%%%%%%%%%

%%%%%%%%%%%%%%%%%%%%%%%   abstract   %%%%%%%%%%%%%%%%%%%%%%%%
\begin{abstract}
We consider a Hamiltonian with cutoffs describing the weak decay
of spin $1$ massive bosons into the full family of leptons. The
Hamiltonian is a self-adjoint operator in an appropriate Fock
space with a unique ground state. We prove a Mourre estimate and a
limiting absorption principle above the ground state energy and
below the first threshold for a sufficiently small coupling
constant. As a corollary, we prove absence of eigenvalues and
absolute continuity of the energy spectrum in the same spectral
interval.
\end{abstract}
%%%%%%%%%%%%%%%%%%%%   end of abstract    %%%%%%%%%%%%%%%%%%%

%%%%%%%%%%%%%%%%%%%%%%%%%%%%%%%%%%%%%%%%%%%%%%%%%%%%%%%%%%%%%%%
%%%%%%%%%%%%%%%%%%%%%%%%%%%%%%%%%%%%%%%%%%%%%%%%%%%%%%%%%%%%%%%
%%%%%%%%%%     INTRODUCTION                %%%%%%%%%%%%%%%%%%%%
%%%%%%%%%%%%%%%%%%%%%%%%%%%%%%%%%%%%%%%%%%%%%%%%%%%%%%%%%%%%%%%
%%%%%%%%%%%%%%%%%%%%%%%%%%%%%%%%%%%%%%%%%%%%%%%%%%%%%%%%%%%%%%%
\section{Introduction}
\setcounter{equation}{0}

In this article, we consider a mathematical model of the weak
interaction as patterned according to the Standard Model in
Quantum Field Theory (see \cite{GreinerMuller1989,
WeinbergII2005}). We choose the example of the weak decay of the
intermediate vector bosons $W^\pm$ into the full family of
leptons.

The mathematical framework involves fermionic Fock spaces for the
leptons and bosonic Fock spaces for the vector bosons. The
interaction is described in terms of annihilation and creation
operators together with kernels which are square integrable with
respect to momenta. The total Hamiltonian, which is the sum of the
free energy of the particles and antiparticles and of the
interaction, is a self-adjoint operator in the Fock space for the
leptons and the vector bosons and it has an unique ground state in
the Fock space for a sufficiently small coupling constant.

The weak interaction is one of the four fundamental interactions
known up to now. But the weak interaction is the only one which
does not generate bound states. As it is well known it is not the
case for the strong, electromagnetic and gravitational
interactions. Thus we are expecting that the spectrum of the
Hamiltonian associated with every model of weak decays is
absolutely continuous above the energy of the ground state and
this article is a first step towards a proof of such a statement.
Moreover a scattering theory has to be established for every such
Hamiltonian.

In this paper we establish a Mourre estimate and a limiting
absorption principle for any spectral interval above the energy of
the ground state and below the mass of the electron for a small
coupling constant.

Our study of the spectral analysis of the total Hamiltonian is
based on the conjugate operator method with a self-adjoint
conjugate operator. The methods used in this article are taken
largely from \cite{Bachetal2006} and \cite{Frohlichetal2008} and
are based on \cite{Amreinetal1996} and \cite{Sahbani1997}. Some of
the results of this article has been announced in
\cite{Barbarouxetal2009}.

For other applications of the conjugate operator method see
\cite{Ammari2004, Bachetal1999S, Bachetal1999P,
DerezinskiGerard1999, DerezinskiJacsik2001, Faupin2009,
Frohlichetal2004, Georgescuetal2004, Georgescuetal2004C,
Golenia2008, HubnerSpohn1995S, Skibsted1998}.

%In a companion paper we will consider ultraviolet cutoffs that are
%not sharp and we will study the spectrum of the total Hamiltonian
%between two consecutive thresholds.

For related results about models in Quantum Field Theory see
\cite{Barbarouxetal2004} and \cite{Takaesu2008} in the case of the
Quantum Electrodynamics and \cite{Amouretal2007} in the case of
the weak interaction.

The paper is organized as follows. In section~\ref{the-model}, we
give a precise definition of the model we consider. In
section~\ref{section3}, we state our main results and in the
following sections, together with the appendix, detailed proofs of
the results are given.

%%%%%%%%%%%%%%%%%%%%%%%%%%%%%%%%%%%%%%%%%%%%%%
%%%%%%%%%%%  ACKNOWLEDGEMENTS %%%%%%%%%%%%%%%%
%%%%%%%%%%%%%%%%%%%%%%%%%%%%%%%%%%%%%%%%%%%%%%
\noindent\textbf{Acknowledgments.} One of us (J.-C. G) wishes to
thank Laurent Amour and Beno\^\i t Gr\'ebert for helpful
discussions. The authors also thank Walter Aschbacher for valuable
remarks. The work was done partially while J.M.-B. was visiting
the Institute for Mathematical Sciences, National University of
Singapore in 2008. The visit was supported by the Institute.

%%%%%%%%%%%%%%%%%%%%%%%%%%%%%%%%%%%%%%%%%%%%%%%%%%%%%%%%%%%%%%%
%%%%%%%%%%%%%%%%%%%%%%%%%%%%%%%%%%%%%%%%%%%%%%%%%%%%%%%%%%%%%%%
%%%%%%%%%%     THE MODEL                   %%%%%%%%%%%%%%%%%%%%
%%%%%%%%%%%%%%%%%%%%%%%%%%%%%%%%%%%%%%%%%%%%%%%%%%%%%%%%%%%%%%%
%%%%%%%%%%%%%%%%%%%%%%%%%%%%%%%%%%%%%%%%%%%%%%%%%%%%%%%%%%%%%%%
\section{The model}\label{the-model}
\setcounter{equation}{0}

The weak decay of the intermediate bosons $W^+$ and $W^-$ involves
the full family of leptons together with the bosons themselves,
according to the Standard Model (see \cite[Formula
(4.139)]{GreinerMuller1989} and \cite{WeinbergII2005}).

The full family of leptons involves the electron $e^-$ and the
positron $e^+$, together with the associated neutrino $\nu_e$ and
antineutrino $\bar\nu_e$, the muons $\mu^-$ and $\mu^+$ together
with the associated neutrino $\nu_\mu$ and antineutrino
$\bar\nu_\mu$ and the tau leptons $\tau^-$ and $\tau^+$ together
with the associated neutrino $\nu_\tau$ and antineutrino
$\bar\nu_\tau$.

It follows from the Standard Model that neutrinos and
antineutrinos are massless particles. Neutrinos are left-handed,
i.e., neutrinos have helicity $-1/2$ and antineutrinos are right
handed, i.e., antineutrinos have helicity $+1/2$.

In what follows, the mathematical model for the weak decay of the
vector bosons $W^+$ and $W^-$ that we propose is based on the
Standard Model, but we adopt a slightly more general point of view
because we suppose that neutrinos and antineutrinos are both
massless particles with helicity $\pm 1/2$. We recover the
physical situation as a particular case. We could also consider a
model with massive neutrinos and antineutrinos built upon the
Standard Model with neutrino mixing \cite{Srednicki2007}.

Let us sketch how we define a mathematical model for the weak
decay of the vector bosons $W^\pm$ into the full family of
leptons.

The energy of the free leptons and bosons is a self-adjoint
operator in the corresponding Fock space (see below) and the main
problem is associated with the interaction between the bosons and
the leptons. Let us consider only the interaction between the
bosons and the electrons, the positrons and the corresponding
neutrinos and antineutrinos. Other cases are strictly similar. In
the Schr\"odinger representation the interaction is given by (see
\cite[p159, (4.139)]{GreinerMuller1989} and \cite[p308,
(21.3.20)]{WeinbergII2005})
\begin{equation}\label{eq:2.1}
  I = \int \d^3 \!x\, \overline{\Psi_e}(x) \gamma^\alpha
  (1-\gamma_5)\Psi_{\nu_e}(x) W_\alpha(x) + \int \d^3\!x\,
  \overline{\Psi_{\nu_e}}(x) \gamma^\alpha (1-\gamma_5)
  \Psi_e(x) W_\alpha(x)^*\ ,
\end{equation}
where $\gamma^\alpha$, $\alpha=0,1,2,3$ and $\gamma_5$ are the
Dirac matrices and $\Psi_.(x)$ and $\overline{\Psi_.}(x)$ are the
Dirac fields for $e_-$, $e_+$, $\nu_e$ and $\bar\nu_e$.

We have
\begin{equation}\nonumber
\begin{split}
 & \Psi_e(x) = \big(\frac{1}{2\pi}\big)^\frac32 \sum_{s=\pm\frac12}
 \int \d^3\!p\, (b_{e,+}(p,s) \frac{u(p,s)}{\sqrt{p_0}}
 \mathrm{e}^{ip.x}
 + b_{e,-}^* (p,s) \frac{v(p,s)}{\sqrt{p_0}} \mathrm{e}^{- i p .
 x})\ ,\\
 & \overline{\Psi_e}(x) = \Psi_e(x)^\dagger \gamma^0\ .
\end{split}
\end{equation}
Here $p_0 = (|p|^2 + m_e^2)^\frac12$ where $m_e>0$ is the mass of
the electron and $u(p,s)$ and $v(p,s)$ are the normalized
solutions to the Dirac equation (see
\cite[Appendix]{GreinerMuller1989}).

The operators $b_{e,+}(p,s)$ and $b_{e,+}^*(p,s)$ (respectively
$b_{e,-}(p,s)$ and $b_{e,-}^*(p,s)$) are the annihilation and
creation operators for the electrons (respectively the positrons)
satisfying the anticommutation relations (see below).

Similarly we define $\Psi_{\nu_e}(x)$ and
$\overline{\Psi_{\nu_e}}(x)$ by substituting the operators
$c_{\nu_e,\pm}(p,s)$ and $c_{\nu_e,\pm }^*(p,s)$ for
$b_{e,\pm}(p,s)$ and $b_{e,\pm}^*(p,s)$ with $p_0= |p|$. The
operators $c_{\nu_e,+}(p,s)$ and $c_{\nu_e,+}^*(p,s)$
(respectively $c_{\nu_e,-}(p,s)$ and $c_{\nu_e,-}^*(p,s)$) are the
annihilation and creation operators for the neutrinos associated
with the electrons (respectively the antineutrinos).

For the $W_\alpha$ fields we have (see \cite[\S
5.3]{WeinbergI2005}).
\begin{equation}\nonumber
 W_\alpha(x) = \big(\frac{1}{2\pi}\big)^\frac32 \sum_{\lambda=-1,0,1}
 \int \frac{\d^3\!k}{\sqrt{2 k_0}} (\epsilon_\alpha(k,\lambda)
 a_+(k,\lambda)
 \mathrm{e}^{i k.x}
 + \epsilon_\alpha^*(k,\lambda) a_-^*(k,\lambda)
 \mathrm{e}^{- i k.x})\ .
\end{equation}
Here $k_0 = (|k|^2 + m_W^2)^\frac12$ where $m_W>0$ is the mass of
the bosons $W^\pm$. $W^+$ is the antiparticule of $W^-$. The
operators $a_+(k,\lambda)$ and $a_+^*(k,\lambda)$ (respectively
$a_-(k,\lambda)$ and $a_-^*(k,\lambda)$) are the annihilation and
creation operators for the bosons $W^-$ (respectively $W^+$)
satisfying the canonical commutation relations. The vectors
$\epsilon_\alpha(k,\lambda)$ are the polarizations of the massive
spin~1 bosons $W^\pm$ (see \cite[Section~5.2]{WeinbergI2005}).

The interaction \eqref{eq:2.1} is a formal operator and, in order
to get a well defined operator in the Fock space, one way is to
adapt what Glimm and Jaffe have done in the case of the Yukawa
Hamiltonian (see \cite{GlimmJaffe1977}). For that sake, we have to
introduce a spatial cutoff $g(x)$ such that $g\in L^1(\R^3)$,
together with momentum cutoffs $\chi(p)$ and $\rho(k)$ for the
Dirac fields and the $W_\mu$ fields respectively.

Thus when one develops the interaction $I$ with respect to
products of creation and annihilation operators, one gets a finite
sum of terms associated with kernels of the form
\begin{equation}\nonumber
 \chi(p_1) \, \chi(p_2)\, \rho(k)\, \hat g(p_1+p_2-k)\ ,
\end{equation}
where $\hat g$ is the Fourier transform of $g$. These kernels are
square integrable.

In what follows, we consider a model involving terms of the above
form but with more general square integrable kernels.

We follow the convention described in
\cite[section~4.1]{WeinbergI2005} that we quote: ``The
state-vector will be taken to be symmetric under interchange of
any bosons with each other, or any bosons with any fermions, and
antisymmetric with respect to interchange of any two fermions with
each other, in all cases, wether the particles are of the same
species or not''. Thus, as it follows from section~4.2 of
\cite{WeinbergI2005}, fermionic creation and annihilation
operators of different species of leptons will always anticommute.

Concerning our notations, from now on, $\ell\in\{1,2,3\}$ denotes
each species of leptons. $\ell=1$ denotes the electron $e^-$ the
positron $e^+$ and the neutrinos $\nu_e$, $\bar\nu_e$. $\ell=2$
denotes the muons $\mu^-$, $\mu^+$ and the neutrinos $\nu_\mu$ and
$\bar\nu_\mu$, and $\ell=3$ denotes the tau-leptons and the
neutrinos $\nu_\tau$ and $\bar\nu_\tau$.

Let $\xi_1=(p_1,\ s_1)$ be the quantum variables of a massive
lepton, where $p_1\in\R^3$ and $s_1\in\{-1/2,\ 1/2\}$ is the spin
polarization of particles and antiparticles. Let $\xi_2=(p_2,\
s_2)$ be the quantum variables of a massless lepton where
$p_2\in\R^3$ and $s_2\in\{-1/2,\ 1/2\}$ is the helicity of
particles and antiparticles and, finally, let $\xi_3=(k,\
\lambda)$ be the quantum variables of the spin $1$ bosons $W^+$
and $W^-$ where $k\in\R^3$ and $\lambda\in\{-1,\ 0,\ 1\}$ is the
polarization of the vector bosons (see
\cite[section~5]{WeinbergI2005}). We set $\Sigma_1 =
\R^3\times\{-1/2,\ 1/2\}$ for the leptons and $\Sigma_2 =
\R^3\times\{-1,\ 0,\ 1\}$ for the bosons. Thus $L^2(\Sigma_1)$ is
the Hilbert space of each lepton and $L^2(\Sigma_2)$ is the
Hilbert space of each boson. The scalar product in
$L^2(\Sigma_j)$, $j=1,2$ is defined by
\begin{equation}\label{2.1}
  ( f,\ g ) = \int_{\Sigma_j} \overline{f(\xi)} g(\xi) \d
  \xi,\quad j=1,2\ .
\end{equation}
Here
\begin{equation}\nonumber
  \int_{\Sigma_1} \d \xi = \sum_{s=+\frac12, -\frac12} \int
  \d  p\quad \mbox{and}\quad
  \int_{\Sigma_2} \d \xi = \sum_{\lambda=0,1,-1} \int \d
  k,\quad (p,k\in\R^3)\ .
\end{equation}

The Hilbert space for the weak decay of the vector bosons $W^+$
and $W^-$ is the Fock space for leptons and bosons that we now
describe.

Let $\gS$ be any separable Hilbert space. Let $\otimes_a^n\gS$
(resp. $\otimes _s^n\gS$) denote the antisymmetric (resp.
symmetric) $n$-th tensor power of $\gS$. The fermionic (resp.
bosonic) Fock space over $\gS$, denoted by $\gF_a(\gS)$ (resp.
$\gF_s(\gS))$, is the direct sum
\begin{equation}\label{2.3}
 \gF_a(\gS) = \bigoplus_{n=0}^\infty \bigotimes_a^n \gS\quad
 (\mbox{resp. }
  \gF_s(\gS) = \bigoplus_{n=0}^\infty \bigotimes_s^n \gS)\ ,
\end{equation}
where $\otimes_a^0 \gS = \otimes_s^0\gS \equiv\C$. The state
$\Omega = (1,0,0,\ldots,0,\ldots)$ denotes the vacuum state in
$\gF_a(\gS)$ and in $\gF_s(\gS)$.

For every $\ell$, $\gF_\ell$ is the fermionic Fock space for the
corresponding species of leptons including the massive particle
and antiparticle together with the associated neutrino and
antineutrino, i.e.,
\begin{equation}\label{eq:2.4}
\gF_\ell = \bigotimes^4 \gF_a(L^2(\Sigma_1))\, \quad \ell=1,2,3 \
.
\end{equation}
We have
\begin{equation}\label{eq:2.5}
 \gF_\ell = \bigoplus_{q_\ell \geq 0, \bar q_\ell\geq
 0, r_\ell\geq 0, \bar r_\ell\geq0} \gF_\ell^{(q_\ell,\bar q_\ell,
 r_\ell,\bar r_\ell)}\ ,
\end{equation}
with
\begin{equation}\label{eq:2.6}
 \gF_\ell^{(q_\ell,\bar q_\ell, r_\ell, \bar r_\ell)}
 = (\otimes _a^{q_\ell} L^2(\Sigma_1))\otimes
 (\otimes _a^{\bar q_\ell} L^2(\Sigma_1))\otimes
 (\otimes _a^{r_\ell} L^2(\Sigma_1))\otimes
 (\otimes _a^{\bar r_\ell} L^2(\Sigma_1))\ .
\end{equation}
Here $q_\ell$ (resp. $\bar q_\ell$) is the number of massive
particle (resp. antiparticles) and $r_\ell$ (resp. $\bar r_\ell$)
is the number of neutrinos (resp. antineutrinos). The vector
$\Omega_\ell$ is the associated vacuum state. The fermionic Fock
space denoted by $\gF_L$ for the leptons is then
\begin{equation}\label{2.7}
 \gF_L = \otimes_{\ell=1}^3 \gF_\ell\ ,
\end{equation}
and $\Omega_L = \otimes_{\ell=1}^3 \Omega_\ell$ is the vacuum
state.

The bosonic Fock space for the vector bosons $W^+$ and $W^-$,
denoted by $\gF_W$, is then
\begin{equation}\label{2.8}
 \gF_W = \gF_s(L^2(\Sigma_2))\otimes \gF_s(L^2(\Sigma_2)) \simeq
 \gF_s(L^2(\Sigma_2) \oplus L^2(\Sigma_2))\ .
\end{equation}
We have
\begin{equation}\nonumber
 \gF_W = \bigoplus_{t\geq 0, \bar t\geq 0} \gF_W^{(t,\bar t)}\ ,
\end{equation}
where $\gF_W^{(t,\bar t)} = (\otimes_s^t
L^2(\Sigma_2))\otimes(\otimes_s^{\bar t} L^2(\Sigma_2))$. Here $t$
(resp. $\bar t$) is the number of bosons $W^-$ (resp. $W^+$). The
vector $\Omega_W$ is the corresponding vacuum.

The Fock space for the weak decay of the vector bosons $W^+$ and
$W^-$, denoted by $\gF$, is thus
\begin{equation}\nonumber
 \gF = \gF_L \otimes\gF_W
\end{equation}
and $\Omega = \Omega_L \otimes \Omega_W$ is the vacuum state.

For every $\ell\in\{1,2,3\}$ let $\gD_\ell$ denote the set of
smooth vectors $\psi_\ell\in\gF_\ell$ for which $\psi_\ell^{\ind}$
has a compact support and $\psi_\ell^\ind = 0$ for all but
finitely many $\ind$. Let
\begin{equation}\nonumber
  \gD_L = \widehat\bigotimes_{\ell=1}^3 \gD_\ell\ .
\end{equation}
Here $\hat\otimes$ is the algebraic tensor product.

Let $\gD_W$ denote the set of smooth vectors $\phi\in\gF_W$ for
which $\phi^{(t,\bar t)}$ has a compact support and
$\phi^{(t,\bar t)}=0$ for all but finitely many $(t,\bar t)$.

Let
\begin{equation}\nonumber
\gD = \gD_L\hat\otimes\, \gD_W\ .
\end{equation}
The set $\gD$ is dense in $\gF$.

Let $A_\ell$ be a self-adjoint operator in $\gF_\ell$ such that
$\gD_\ell$ is a core for $A_\ell$. Its extension to $\gF_L$ is, by
definition, the closure in $\gF_L$ of the operator
$A_1\otimes\1_2\otimes\1_3$ with domain $\gD_L$ when $\ell=1$, of
the operator $\1_1\otimes A_2 \otimes\1_3$ with domain $\gD_L$
when $\ell=2$, and of the operator $\1_1 \otimes\1_2\otimes A_3$
with domain $\gD_L$ when $\ell=3$. Here $\1_\ell$ is the operator
identity on $\gF_\ell$.

The extension of $A_\ell$ to $\gF_L$ is a self-adjoint operator
for which $\gD_L$ is a core and it can be extended to $\gF$. The
extension of $A_\ell$ to $\gF$ is, by definition, the closure in
$\gF$ of the operator $\tilde A_\ell\otimes \1_W$ with domain
$\gD$, where $\tilde A_\ell$ is the extension of $A_\ell$ to
$\gF_L$. The extension of $A_\ell$ to $\gF$ is a self-adjoint
operator for which $\gD$ is a core.

Let $B$ be a self-adjoint operator in $\gF_W$ for which $\gD_W$ is
a core. The extension of the self-adjoint operator $A_\ell\otimes
B$ is, by definition, the closure in $\gF$ of the operator
$A_1\otimes\1_2\otimes\1_3\otimes B$ with domain $\gD$ when
$\ell=1$, of the operator $\1_1\otimes A_2\otimes\1_3\otimes B$
with domain $\gD$ when $\ell=2$, and of the operator
$\1_1\otimes\1_2\otimes A_3\otimes B$ with domain $\gD$ when
$\ell=3$. The extension of $A_\ell\otimes B$ to $\gF$ is a
self-adjoint operator for which $\gD$ is a core.

We now define the creation and annihilation operators. For each
$\ell=1,2,3$, $\Ba(\xi_1)$ (resp. $\Bc(\xi_1)$) is the
annihilation (resp. creation) operator for the corresponding
species of massive particle when $\epsilon=+$ and for the
corresponding species of massive antiparticle when $\epsilon=-$.
Similarly, for each $\ell=1,2,3$, $\Ca(\xi_2)$ (resp.
$\Cc(\xi_2)$) is the annihilation (resp. creation) operator for
the corresponding species of neutrino when $\epsilon=+$ and for
the corresponding species of antineutrino when $\epsilon=-$. The
operator $\Aa(\xi_3)$ (resp. $\Ac(\xi_3)$) is the annihilation
(resp. creation) operator for the boson $W^-$ when $\epsilon=+$
and for the boson $W^+$ when $\epsilon=-$.

Let $\Psi\in\gD$ be such that
\begin{equation}\nonumber
 \Psi = \left(\Psi^{(Q)}\right)_Q\ ,
\end{equation}
with $Q= \Big(  \ind_{\ell=1,2,3},\,(t,\bar t)\Big)$, and
\begin{equation}\nonumber
 \Psi^{(Q)} = \left(\otimes_{\ell=1}^3 \Psi^\ind\right)\otimes
 \varphi^{(t,\bar t)}\ ,
\end{equation}
where $(q_\ell, \bar q_\ell, r_\ell, \bar r_\ell, t, \bar t)\in
\N^6$. Here, $(\Psi^\ind)_{q_\ell\geq 0, \bar q_\ell\geq 0,
r_\ell\geq 0, \bar r_\ell\geq 0}\in\gD_\ell$, and
$(\varphi^{(t,\bar t)})_{t\geq 0, \bar t\geq 0}\in\gD_W$.

Let
\begin{equation}\nonumber
\begin{split}
  Q_{\ell,+} & = \Big( (q_{\ell'}, \bar q_{\ell'},
  r_{\ell'}, \bar r_{\ell'} )_{\ell'<\ell},\, (q_{\ell}+1, \bar q_\ell,
  r_\ell, \bar r_{\ell}),\,
  (q_{\ell'}, \bar q_{\ell'},
  r_{\ell'}, \bar r_{\ell'})_{\ell'>\ell},\, (t,\bar t)\Big)\ , \\
  Q_{\ell,-} & = \Big((q_{\ell'}, \bar q_{\ell'},
  r_{\ell'}, \bar r_{\ell'} )_{\ell'<\ell} ,\, (q_{\ell}, \bar q_\ell +1,
  r_\ell, \bar r_{\ell}),\,
  (q_{\ell'}, \bar q_{\ell'},
  r_{\ell'}, \bar r_{\ell'})_{\ell'>\ell},\,  (t,\bar t)\Big)\ , \\
  \tilde Q_{\ell,+} & = \Big( (q_{\ell'}, \bar q_{\ell'},
  r_{\ell'}, \bar r_{\ell'} )_{\ell'<\ell},\,  (q_{\ell}, \bar q_\ell,
  r_\ell +1, \bar r_{\ell}),\,
  (q_{\ell'}, \bar q_{\ell'},
  r_{\ell'}, \bar r_{\ell'})_{\ell'>\ell},\, (t,\bar t)\Big)\ , \\
  \tilde Q_{\ell,-} & = \Big( (q_{\ell'}, \bar q_{\ell'},
  r_{\ell'}, \bar r_{\ell'} )_{\ell'<\ell},\,  (q_{\ell}, \bar q_\ell,
  r_\ell, \bar r_{\ell}+1 ),\,
 (q_{\ell'}, \bar q_{\ell'},
  r_{\ell'}, \bar r_{\ell'})_{\ell'>\ell},\, (t,\bar t)\Big)\ ,
\end{split}
\end{equation}
and
\begin{equation}\nonumber
\begin{split}
  Q_{b,+} & = \Big( (q_{\ell}, \bar q_{\ell},
  r_{\ell}, \bar r_{\ell} )_{\ell=1,2,3},\,  (t+1,\bar t)\Big)\ , \\
  Q_{b,-} & = \Big( (q_{\ell}, \bar q_{\ell},
  r_{\ell}, \bar r_{\ell} )_{\ell=1,2,3},\, (t,\bar t +1)\Big)\ .
\end{split}
\end{equation}
We define
\begin{equation}\nonumber
\begin{split}
  & (b_{\ell,+}(\xi_1) \Psi)^{(Q)} (\,.\,;\, \xi_1^{(1)},
  \xi_1^{(2)},\ldots,\xi_1^{(q_\ell)}; \, .\, ) \\
  & = \sqrt{q_\ell+1}\, \Pi_{\ell'<\ell}\ (-1)^{q_{\ell'} + \bar
  q_{\ell'}} \Psi^{(Q_{\ell,+})} (\,.\, ; \xi_1,\xi_1^{(1)},
  \xi_1^{(2)},\ldots,\xi_1^{(q_\ell)}; \, .\, )\\
  & (b_{\ell,-}(\xi_1) \Psi)^{(Q)} (\,.\,;\, \xi_1^{(1)},
  \xi_1^{(2)},\ldots,\xi_1^{(\bar q_\ell)}; \, .\, ) \\
  & = \sqrt{\bar q_\ell+1}\,(-1)^{q_\ell} \Pi_{\ell'<\ell}\ (-1)^{q_{\ell'}
  + \bar q_{\ell'}} \Psi^{(Q_{\ell,-})} (\,.\, ; \xi_1,\xi_1^{(1)},
  \xi_1^{(2)},\ldots,\xi_1^{(\bar q_\ell)}; \, .\, )\ ,
\end{split}
\end{equation}
\begin{equation}\nonumber
\begin{split}
  & (c_{\ell,+}(\xi_2) \Psi)^{(Q)} (\,.\,;\, \xi_2^{(1)},
  \xi_2^{(2)},\ldots,\xi_2^{(r_\ell)}; \, .\, ) \\
  & = \sqrt{r_\ell+1}\, (-1)^{q_\ell +\bar q_\ell}\Pi_{\ell'<\ell}\ (-1)^{q_{\ell'} + \bar
  q_{\ell'}+ r_{\ell'} + \bar r_{\ell'}}
  \Psi^{(\tilde Q_{\ell,+})} (\,.\, ; \xi_2,\xi_2^{(1)},
  \xi_2^{(2)},\ldots,\xi_2^{(r_\ell)}; \, .\, )\\
  & (c_{\ell,-}(\xi_2) \Psi)^{(Q)} (\,.\,;\, \xi_2^{(1)},
  \xi_2^{(2)},\ldots,\xi_2^{(\bar r_\ell)}; \, .\, ) \\
  & = \sqrt{\bar r_\ell+1}\,(-1)^{q_\ell +\bar q_\ell+r_\ell} \Pi_{\ell'<\ell}\
  (-1)^{q_{\ell'} + \bar q_{\ell'} + r_{\ell'} + \bar r_{\ell'}}
  \Psi^{(\tilde Q_{\ell,-})} (\,.\, ; \xi_2,\xi_2^{(1)},
  \xi_2^{(2)},\ldots,\xi_2^{(\bar r_\ell)}; \, .\, )\ ,
\end{split}
\end{equation}
and
\begin{equation}\nonumber
\begin{split}
 & (a_+(\xi_3) \Psi)^{(Q)} (\,.\, ; \, \xi_3^{(1)}, \xi_3^{(2)},
 \ldots, \xi_3^{(t)};\, .\,) \\
 & = \sqrt{t+1} \Psi^{(Q_{b,+})}
 (\,.\, ; \, \xi_3, \xi_3^{(1)}, \xi_3^{(2)},
 \ldots, \xi_3^{(t)};\, .\,)\ ,\\
 & (a_-(\xi_3) \Psi)^{(Q)} (\,.\, ; \, \xi_3^{(1)}, \xi_3^{(2)},
 \ldots, \xi_3^{(\bar t)};\, .\,) \\
 & = \sqrt{\bar t+1} \Psi^{(Q_{b,-})}
 (\,.\, ; \, \xi_3, \xi_3^{(1)}, \xi_3^{(2)},
 \ldots, \xi_3^{(\bar t)};\, .\,)\ .
\end{split}
\end{equation}

As usual, $b^*_{\ell,\epsilon}(\xi_1)$ (resp. $c^*_{\ell,
\epsilon}(\xi_2)$) is the formal adjoint of
$b_{\ell,\epsilon}(\xi_1)$ (resp. $c_{\ell, \epsilon}(\xi_2)$).
For example, we have
\begin{equation}\nonumber
\begin{split}
  & (b^*_{\ell,\epsilon}(\xi_1) \Psi)^{(Q_{\ell,+})} (\,.\, ;
  \xi_1^{(1)}, \xi_1^{(2)}, \ldots, \xi_1^{(q_\ell)},
  \xi_1^{(q_\ell+1)};\, . \, ) \\
  & = \frac{1}{\sqrt{q_\ell+1}} \prod_{\ell' < \ell}
  (-1)^{q_{\ell'} + \bar q_{\ell'}}\\
  & \sum_{i=1}^{q_\ell+1} (-1)^{i+1} \delta (\xi_1 - \xi_1^{(i)})
  \Psi^{(Q)} (\, .\, ; \xi_1^{(1)}, \xi_1^{(2)}, \ldots,
  \widehat{\xi_1^{(i)}},\ldots, \xi_1^{(q_\ell+1)};\, .\, )\ ,
\end{split}
\end{equation}
where $\widehat . $ denotes that the $i$-th variable has to be
omitted, and $\delta(\xi_1-\xi_1^{(i)}) = \delta_{s_1 s_1^{(i)}}
\delta(p_1 - p_1^{(i)})$. The operator $a^*_\epsilon(\xi_3)$ is
the formal adjoint of $a_\epsilon(\xi_3)$ and we have
\begin{equation}\nonumber
\begin{split}
  & (a^*_+(\xi_3) \Psi)^{(Q_{b,+})} (\, .\, ; \xi_3^{(1)},
  \xi_3^{(2)}, \ldots, \xi_3^{(t+1)} ; .) \\
  & = \frac{1}{\sqrt{t+1}} \sum_{i=1}^{t+1} \delta(\xi_3
  -\xi_3^{(i)}) \Psi^{(Q)} (\, .\, ; \xi_3^{(1)}, \ldots,
  \widehat{\xi_3^{(i)}}, \ldots, \xi_3^{(t+1)};\, .\, )
\end{split}
\end{equation}
where $\delta(\xi_3 - \xi_3^{(i)}) = \delta_{\lambda
\lambda^{(i)}} \delta(k - k^{(i)})$.

The following canonical anticommutation and commutation relations
hold.
\begin{equation}\nonumber
\begin{split}
 &\{ b_{\ell, \epsilon}(\xi_1), b^*_{\ell', \epsilon'}(\xi_1')\} =
 \delta_{\ell \ell'}\delta_{\epsilon \epsilon'} \delta(\xi_1 - \xi_1') \ ,\\
 &\{ c_{\ell, \epsilon}(\xi_2), c^*_{\ell', \epsilon'}(\xi_2')\} =
 \delta_{\ell \ell'}\delta_{\epsilon \epsilon'} \delta(\xi_2 - \xi_2')\ , \\
 &[ a_{\epsilon}(\xi_3), a^*_{\epsilon'}(\xi_3')] =
 \delta_{\epsilon \epsilon'} \delta(\xi_3 - \xi_3') \ ,\\
 &\{ b_{\ell, \epsilon}(\xi_1), b_{\ell', \epsilon'}(\xi_1')\}
 = \{ c_{\ell, \epsilon}(\xi_2), c_{\ell', \epsilon'}
 (\xi_2')\} =0\ ,\\
 & [ a_{\epsilon}(\xi_3), a_{\epsilon'}(\xi_3') ] = 0\ ,\\
 &\{ b_{\ell, \epsilon}(\xi_1), c_{\ell', \epsilon'}(\xi_2)\}
 = \{ b_{\ell, \epsilon}(\xi_1), c^*_{\ell',
 \epsilon'}(\xi_2)\}=0 \ , \\
 &[ b_{\ell, \epsilon}(\xi_1), a_{\epsilon'}(\xi_3)]
 = [ b_{\ell, \epsilon}(\xi_1), a^*_{\epsilon'}(\xi_3)]
 = [ c_{\ell, \epsilon}(\xi_2), a_{\epsilon'}(\xi_3)]
 = [ c_{\ell, \epsilon}(\xi_2), a^*_{\epsilon'}(\xi_3)] = 0
 \ .
\end{split}
\end{equation}
Here, $\{b, b'\} = bb' + b'b$, $[a,a'] = aa' - a'a$.

We recall that the following operators, with $\vp\in
L^2(\Sigma_1)$,
\begin{equation}\nonumber
\begin{split}
  & b_{\ell, \epsilon}(\vp) = \int_{\Sigma_1} b_{\ell,
  \epsilon}(\xi) \overline{\vp(\xi)} \d \xi,\quad
  c_{\ell, \epsilon}(\vp) = \int_{\Sigma_1} c_{\ell,
  \epsilon}(\xi)\overline{\vp(\xi)} \d \xi\ , \\
  & b^*_{\ell, \epsilon}(\vp) = \int_{\Sigma_1} b^*_{\ell,
  \epsilon}(\xi) {\vp(\xi)} \d \xi,\quad
  c^*_{\ell, \epsilon}(\vp) = \int_{\Sigma_1} c^*_{\ell,
  \epsilon}(\xi){\vp(\xi)} \d \xi
\end{split}
\end{equation}
are bounded operators in $\gF$ such that
\begin{equation}\label{eq:2.32}
  \| b^\sharp_{\ell,\epsilon}(\vp)\| = \|
  c^\sharp_{\ell,\epsilon}(\vp)\| = \|\vp\|_{L^2}\ ,
\end{equation}
where $b^\sharp$ (resp. $c^\sharp$) is b (resp. $c$) or $b^*$
(resp. $c^*$).

The operators $b^\sharp_{\ell,\epsilon}(\vp)$ and
$c^\sharp_{\ell,\epsilon}(\vp)$ satisfy similar anticommutaion
relations (see e.g. \cite{Thaller1992}).

The free Hamiltonian $H_0$ is given by
\begin{equation}\nonumber
\begin{split}
 H_0 & = H_0^{(1)} + H_0^{(2)} + H_0^{(3)}  \\
 & = \sum_{\ell=1}^3 \sum_{\epsilon=\pm} \int
 w_\ell^{(1)}(\xi_1) b^*_{\ell,\epsilon}(\xi_1)
 b_{\ell,\epsilon}(\xi_1) \d \xi_1
 + \sum_{\ell=1}^3 \sum_{\epsilon=\pm}\int
 w_\ell^{(2)}(\xi_2) c^*_{\ell,\epsilon}(\xi_2)
 c_{\ell,\epsilon}(\xi_2) \d \xi_2 \\
 & + \sum_{\epsilon=\pm} \int w^{(3)}(\xi_3) a^*_\epsilon(\xi_3)
 a_\epsilon(\xi_3) \d \xi_3\ ,
\end{split}
\end{equation}
where
\begin{equation}\nonumber
\begin{split}
 & w_\ell^{(1)}(\xi_1) = (|p_1|^2 + m_\ell^2)^\frac12,\quad \mbox{with}\
 0<m_1<m_2<m_3\ , \\
 & w_\ell^{(2)}(\xi_2) = |p_2| \ , \\
 & w^{(3)}(\xi_3) = (|k|^2 + m^2_W)^\frac12\ ,
\end{split}
\end{equation}
where $m_W$ is the mass of the bosons $W^+$ and $W^-$ such that
$m_W>m_3$.

The spectrum of $H_0$ is $[0,\,\infty)$ and $0$ is a simple
eigenvalue with $\Omega$ as eigenvector. The set of thresholds of
$H_0$, denoted by $T$, is given by
\begin{equation}\nonumber
 T= \{ p\,m_1 + q\,m_2 + r\, m_3 + s\, m_W ; (p,\,q,\,r,\,s)
 \in\N^4 \mbox{ and } p+q+r+s\geq 1  \}\ ,
\end{equation}
and each set $[t,\infty)$, $t\in T$, is a branch of absolutely
continuous spectrum for $H_0$.

The interaction, denoted by $H_I$, is given by
\begin{equation}\label{eq:2.35}
  H_I = \sum_{\alpha=1}^2 H_I^{(\alpha)}\ ,
\end{equation}
where
\begin{equation}\label{eq:2.36}
\begin{split}
  H_I^{(1)} = & \sum_{\ell=1}^3 \sum_{\epsilon\neq\epsilon'}
  \int G^{(1)}_{\ell,\epsilon,\epsilon'} (\xi_1, \xi_2,\xi_3)
  b^*_{\ell,\epsilon}(\xi_1) c^*_{\ell, \epsilon'}(\xi_2)
  a_\epsilon(\xi_3) \d \xi_1 \d \xi_2 \d \xi_3 \\
  & + \sum_{\ell=1}^3 \sum_{\epsilon\neq \epsilon'}
  \int\overline{G^{(1)}_{\ell, \epsilon,\epsilon'} (\xi_1,
  \xi_2,\xi_3)} a^*_\epsilon(\xi_3) c_{\ell,\epsilon'} (\xi_2)
  b_{\ell,\epsilon}(\xi_1)  \d \xi_1 \d
  \xi_2 \d \xi_3 \ ,
\end{split}
\end{equation}
\begin{equation}\label{eq:2.37}
\begin{split}
  H_I^{(2)} = & \sum_{\ell=1}^3 \sum_{\epsilon\neq\epsilon'}
  \int G^{(2)}_{\ell,\epsilon,\epsilon'} (\xi_1, \xi_2,\xi_3)
  b^*_{\ell,\epsilon}(\xi_1) c^*_{\ell, \epsilon'}(\xi_2)
  a^*_\epsilon(\xi_3) \d \xi_1 \d \xi_2 \d \xi_3 \\
  & + \sum_{\ell=1}^3 \sum_{\epsilon\neq \epsilon'}
  \int\overline{G^{(2)}_{\ell, \epsilon,\epsilon'} (\xi_1,
  \xi_2,\xi_3)}
  a_\epsilon(\xi_3)
  c_{\ell,\epsilon'}(\xi_2)
  b_{\ell,\epsilon}(\xi_1)
  \d \xi_1 \d\xi_2 \d \xi_3 \ .
\end{split}
\end{equation}
The kernels $G^{(2)}_{\ell, \epsilon, \epsilon'} (.,.,.)$,
$\alpha=1,2$, are supposed to be functions.

The total Hamiltonian is then
\begin{equation}\label{eq:2.38}
  H = H_0 + g H_I,\quad g>0\ ,
\end{equation}
where $g$ is a coupling constant.

The operator $H_I^{(1)}$ describes the decay of the bosons $W^+$
and $W^-$ into leptons. Because of $H_I^{(2)}$ the bare vacuum
will not be an eigenvector of the total Hamiltonian for every
$g>0$ as we expect from the physics.

Every kernel $G_{\ell, \epsilon, \epsilon'}(\xi_1, \xi_2, \xi_3)$,
computed in theoretical physics, contains a $\delta$-distribution
because of the conservation of the momentum (see
\cite{GreinerMuller1989} \cite[section~4.4]{WeinbergI2005}). In
what follows, we approximate the singular kernels by square
integrable functions.

Thus, from now on, the kernels $G^{(\alpha)}_{\ell, \epsilon,
\epsilon'}$ are supposed to satisfy the following hypothesis .

\begin{hypothesis}\label{hypothesis:2.1}
For $\alpha=1,2$, $\ell=1,2,3$, $\epsilon, \epsilon' = \pm$, we
assume
\begin{equation}
 G^{(\alpha)}_{\ell, \epsilon, \epsilon'} (\xi_1, \xi_2, \xi_3)\in
 L^2(\Sigma_1\times \Sigma_1\times\Sigma_2)\ .
\end{equation}
\end{hypothesis}
\begin{remark}
 A similar model can be written down for the weak decay of pions
 $\pi^-$ and $\pi^+$ (see \cite[section~6.2]{GreinerMuller1989}).
\end{remark}

\begin{remark}
The total Hamiltonian is more general than the one involved in the
theory of weak interaction because, in the Standard Model,
neutrinos have helicity $-1/2$ and antineutrinos have helicity
$1/2$.

In the physical case, the Fock space, denoted by $\gF'$, is
isomorphic to $\gF_L'\otimes\gF_W$, with
\begin{equation}\nonumber
  \gF_L' = \bigotimes_{\ell=1}^3 \gF_{\ell}'\ ,
\end{equation}
and
\begin{equation}\nonumber
 \gF_\ell' = (\otimes_a^2 L^2(\Sigma_1)) \otimes (\otimes_a^2
 L^2(\R^3)) \ .
\end{equation}
The free Hamiltonian, now denoted by $H_0'$, is then given by
\begin{equation}\nonumber
\begin{split}
 H_0' = & \sum_{\ell=1}^3 \sum_{\epsilon=\pm} \int
 w_\ell^{(1)}(\xi_1) b^*_{\ell,\epsilon}(\xi_1)
 b_{\ell,\epsilon}(\xi_1) \d \xi_1
 + \sum_{\ell=1}^3 \sum_{\epsilon=\pm} \int_{\R^3}
 |p_2|  c^*_{\ell,\epsilon}(p_2)
 c_{\ell,\epsilon}(p_2) \d p_2 \\
 & + \sum_{\epsilon=\pm} \int w^{(3)}(\xi_3) a^*_\epsilon(\xi_3)
 a_\epsilon(\xi_3) \d \xi_3\ ,
\end{split}
\end{equation}
and the interaction, now denoted by $H_I'$, is the one obtained
from $H_I$ by supposing that $G^{(\alpha)}(\xi_1, (p_2, s_2),
\xi_3)=0$ if $s_2 = \epsilon\frac12$. The total Hamiltonian,
denoted by $H'$, is then given by $H' = H_0' + g\, H_I'$. The
results obtained in this paper for $H$ hold true for $H'$ with
obvious modifications.
\end{remark}
%%%% end remark  %%%%%

Under Hypothesis~\ref{hypothesis:2.1} a well defined operator on
$\gD$ corresponds to the formal interaction $H_I$ as it follows.

The formal operator
\begin{equation}\nonumber
  \int G^{(1)}_{\ell, \epsilon, \epsilon'}(\xi_1, \xi_2, \xi_3)
  b^*_{\ell, \epsilon}(\xi_1) c^*_{\ell, \epsilon'}(\xi_2)
  a_\epsilon(\xi_3) \d \xi_1 \d \xi_2 \d \xi_3
\end{equation}
is defined as a quadratic form on
$(\gD_\ell\otimes\gD_W)\times(\gD_\ell\otimes\gD_W)$ as
\begin{equation}\nonumber
 \int ( c_{\ell, \epsilon'}(\xi_2) b_{\ell, \epsilon}(\xi_1)
 \psi,\ G^{(1)}_{\ell, \epsilon, \epsilon'} a_\epsilon(\xi_3)
 \phi) \d \xi_1 \d \xi_2 \d \xi_3\ ,
\end{equation}
where $\psi$, $\phi\in\gD_\ell\otimes\gD_W$.

By mimicking the proof of \cite[Theorem~X.44]{ReedSimon1975}, we
get a closed operator, denoted by
$H^{(1)}_{I,\ell,\epsilon,\epsilon'}$, associated with the
quadratic form such that it is the unique operator in
$\gF_\ell\otimes\gF_W$ such that $\gD_\ell\otimes\gD_W \subset\
\cD(H^{(1)}_{I,\ell,\epsilon,\epsilon'})$ is a core for
$H^{(1)}_{I,\ell,\epsilon,\epsilon'}$ and
\begin{equation}\nonumber
 H^{(1)}_{I,\ell,\epsilon,\epsilon'} = \int
 G^{(1)}_{\ell,\epsilon,\epsilon'}(\xi_1,\xi_2,\xi_3)
 b^*_{\ell,\epsilon}(\xi_1) c^*_{\ell,\epsilon'}(\xi_2)
 a_\epsilon(\xi_3) \d \xi_1 \d \xi_2 \d \xi_3
\end{equation}
as quadratic forms on
$(\gD_\ell\otimes\gD_W)\times(\gD_\ell\otimes\gD_W)$.

The formal operator
\begin{equation}\nonumber
  \int \overline{ G^{(1)}_{\ell,\epsilon, \epsilon'}(\xi_1,
  \xi_2, \xi_3)} c_{\ell, \epsilon'}(\xi_2)
  b_{\ell, \epsilon}(\xi_1) a^*_\epsilon(\xi_3)
  \d \xi_1 \d \xi_2 \d \xi_3
\end{equation}
is similarly associated with
$(H^{(1)}_{I,\ell,\epsilon,\epsilon'})^*$ and
\begin{equation}\nonumber
 (H^{(1)}_{I,\ell,\epsilon,\epsilon'})^* =
 \int \overline{ G^{(1)}_{\ell,\epsilon, \epsilon'}(\xi_1,
  \xi_2, \xi_3)}  c_{\ell, \epsilon'}(\xi_2)
  b_{\ell, \epsilon}(\xi_1) a^*_\epsilon(\xi_3)
  \d \xi_1 \d \xi_2 \d \xi_3
\end{equation}
as quadratic forms on
$(\gD_\ell\otimes\gD_W)\times(\gD_\ell\otimes\gD_W)$. Moreover,
$\gD_\ell\otimes\gD_W
\subset\cD((H^{(1)}_{I,\ell,\epsilon,\epsilon'})^*)$ is a core for
$(H^{(1)}_{I,\ell,\epsilon,\epsilon'})^*$.

Again, there exists two closed operators
$H^{(2)}_{I,\ell,\epsilon,\epsilon'}$ and
$(H^{(2)}_{I,\ell,\epsilon,\epsilon'})^*$ such that
$\gD_\ell\otimes\gD_W
\subset\cD(H^{(2)}_{I,\ell,\epsilon,\epsilon'})$,
$\gD_\ell\otimes\gD_W
\subset\cD((H^{(2)}_{I,\ell,\epsilon,\epsilon'})^*)$ and
$\gD_\ell\otimes\gD_W$ is a core for
$H^{(2)}_{I,\ell,\epsilon,\epsilon'}$ and
$(H^{(2)}_{I,\ell,\epsilon,\epsilon'})^*$ and such that
\begin{equation}\nonumber
H^{(2)}_{I,\ell,\epsilon,\epsilon'} = \int
G^{(2)}_{\ell,\epsilon,\epsilon'} (\xi_1, \xi_2, \xi_3)
b^*_{\ell,\epsilon}(\xi_1) c^*_{\ell,\epsilon'}(\xi_2)
a^*_\epsilon(\xi_3) \d \xi_1 \d \xi_2 \d \xi_3\ ,
\end{equation}
\begin{equation}\nonumber
(H^{(2)}_{I,\ell,\epsilon,\epsilon'})^* = \int
G^{(2)}_{\ell,\epsilon,\epsilon'} (\xi_1, \xi_2, \xi_3)
a_\epsilon(\xi_3) c_{\ell,\epsilon'}(\xi_2)
b_{\ell,\epsilon}(\xi_1)
 \d \xi_1 \d \xi_2 \d \xi_3
\end{equation}
as quadratic forms on $(\gD_\ell\otimes\gD_W)\times
(\gD_\ell\otimes\gD_W)$.

We shall still denote $H^{(\alpha)}_{I,\ell,\epsilon,\epsilon'}$
and $(H^{(\alpha)}_{I,\ell,\epsilon,\epsilon'})^*$ ($\alpha=1,2$)
their extensions to $\gF$. The set $\gD$ is then a core for
$H^{(\alpha)}_{I,\ell,\epsilon,\epsilon'}$ and
$(H^{(\alpha)}_{I,\ell,\epsilon,\epsilon'})^*$

%% %% %%
Thus
\begin{equation}\nonumber
 H = H_0 + g\sum_{\alpha=1,2} \sum_{\ell=1}^3
 \sum_{\epsilon\neq\epsilon'}
 (H^{(\alpha)}_{I,\ell,\epsilon,\epsilon'}
 + (H^{(2)}_{I,\ell,\epsilon,\epsilon'})^*)
\end{equation}
is a symmetric operator defined on $\gD$.

We now want to prove that $H$ is essentially self-adjoint on $\gD$
by showing that $H^{(\alpha)}_{I,\ell,\epsilon,\epsilon'}$ and
$(H^{(\alpha)}_{I,\ell,\epsilon,\epsilon'})^*$ are relatively
$H_0$-bounded.

Once again, as above, for almost every $\xi_3\in\Sigma_2$, there
exists closed operators in $\gF_L$, denoted by
$B^{(\alpha)}_{\ell,\epsilon,\epsilon'}(\xi_3)$ and
$(B^{(\alpha)}_{\ell,\epsilon,\epsilon'}(\xi_3))^*$ such that
\begin{equation}\nonumber
 B^{(1)}_{\ell,\epsilon,\epsilon'}(\xi_3) =
 - \int \overline{
 G^{(1)}_{\ell,\epsilon,\epsilon'}(\xi_1,\xi_2,\xi_3)}
 b_{\ell,\epsilon}(\xi_1) c_{\ell,\epsilon'}(\xi_2) \d \xi_1 \d
 \xi_2\ ,
\end{equation}
\begin{equation}\nonumber
 (B^{(1)}_{\ell,\epsilon,\epsilon'}(\xi_3))^* =
 \int
 G^{(1)}_{\ell,\epsilon,\epsilon'}(\xi_1,\xi_2,\xi_3)
 b^*_{\ell,\epsilon}(\xi_1) c^*_{\ell,\epsilon'}(\xi_2) \d \xi_1 \d
 \xi_2\ ,
\end{equation}
\begin{equation}\nonumber
 B^{(2)}_{\ell,\epsilon,\epsilon'}(\xi_3) =
 \int
 G^{(2)}_{\ell,\epsilon,\epsilon'}(\xi_1,\xi_2,\xi_3)
 b^*_{\ell,\epsilon}(\xi_1) c^*_{\ell,\epsilon'}(\xi_2) \d \xi_1 \d
 \xi_2\ ,
\end{equation}
\begin{equation}\nonumber
 (B^{(2)}_{\ell,\epsilon,\epsilon'}(\xi_3))^* =
 - \int \overline{
 G^{(2)}_{\ell,\epsilon,\epsilon'}(\xi_1,\xi_2,\xi_3)}
 b_{\ell,\epsilon}(\xi_1) c_{\ell,\epsilon'}(\xi_2) \d \xi_1 \d
 \xi_2\ \,
\end{equation}
as quadratic forms on $\gD_\ell\times\gD_\ell$.

We have that $\gD_\ell\subset
\cD(B^{(\alpha)}_{\ell,\epsilon,\epsilon'}(\xi_3))$ (resp.
$\gD_\ell\subset
\cD((B^{(\alpha)}_{\ell,\epsilon,\epsilon'}(\xi_3))^*)$ is a core
for $B^{(\alpha)}_{\ell,\epsilon,\epsilon'}(\xi_3)$ (resp. for
$(B^{(\alpha)}_{\ell,\epsilon,\epsilon'}(\xi_3))^*$). We still
denote by $B^{(\alpha)}_{\ell,\epsilon,\epsilon'}(\xi_3))$ and
$(B^{(\alpha)}_{\ell,\epsilon,\epsilon'}(\xi_3))^*)$ their
extensions to $\gF_L$.

It then follows that the operator $H_I$ with domain $\gD$ is
symmetric and can be written in the following form
\begin{equation}\nonumber
\begin{split}
 & H_I = \sum_{\alpha=1,2} \sum_{\ell=1}^3
 \sum_{\epsilon\neq\epsilon'}
 (H^{(\alpha)}_{I,\ell,\epsilon,\epsilon'} +
  (H^{(\alpha)}_{I,\ell,\epsilon,\epsilon'})^*) \\
     & \! = \! \sum_{\alpha=1,2} \sum_{\ell=1}^3
 \sum_{\epsilon\neq\epsilon'}\int
  B^{(\alpha)}_{\ell,\epsilon,\epsilon'}(\xi_3)
  \otimes a^*_\epsilon(\xi_3) \d \xi_3 + \!
  \sum_{\alpha=1,2} \sum_{\ell=1}^3
  \sum_{\epsilon\neq\epsilon'}\int
  (B^{(\alpha)}_{\ell,\epsilon,\epsilon'}(\xi_3))^*
  \otimes a_\epsilon(\xi_3) \d \xi_3\, .
\end{split}
\end{equation}
Let $N_\ell$ denote the operator number of massive leptons $\ell$
in $\gF_\ell$, i.e.,
\begin{equation}
  N_\ell = \sum_\epsilon \int b_{\ell,\epsilon}^*(\xi_1)
  b_{\ell,\epsilon}(\xi_1) \d \xi_1\ .
\end{equation}
The operator $N_\ell$ is a positive self-adjoint operator in
$\gF_\ell$. We still denote by $N_\ell$ its extension to $\gF_L$.
The set $\gD_L$ is a core for $N_\ell$.

We then have
%%%%%%%%%  proposition 2.4 %%%%%%%%%
\begin{proposition}\label{proposition:2.4}
For a.e. $\xi_3\in\Sigma_2$,
$\cD(B^{(\alpha)}_{\ell,\epsilon,\epsilon'}(\xi_3))$,
$\cD((B^{(\alpha)}_{\ell,\epsilon,\epsilon'}(\xi_3))^*)
\supset\cD(N_\ell^\frac12)$, and for
$\Phi\in\cD(N_\ell^\frac12)\subset\gF_L$ we have
\begin{equation}\label{eq:2.53}
  \| B^{(\alpha)}_{\ell,\epsilon, \epsilon'}(\xi_3) \Phi\|_{\gF_L}
  \leq \|
  G^{(\alpha)}_{\ell,\epsilon, \epsilon'}(.,.,\xi_3)
  \|_{L^2(\Sigma_1\times\Sigma_1)}
  \|N_\ell^\frac12 \Phi\|_{\gF_L}\ ,
\end{equation}
\begin{equation}\label{eq:2.54}
  \| (B^{(\alpha)}_{\ell,\epsilon, \epsilon'}(\xi_3))^* \Phi\|_{\gF_L}
  \leq \|
  G^{(\alpha)}_{\ell,\epsilon, \epsilon'}(.,.,\xi_3)
  \|_{L^2(\Sigma_1\times\Sigma_1)}
  \|N_\ell^\frac12 \Phi\|_{\gF_L}\ .
\end{equation}
\end{proposition}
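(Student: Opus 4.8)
The plan is to reduce everything to the single fermionic Fock space $\gF_\ell$ and the core $\gD_\ell$: as constructed above, each of $B^{(\alpha)}_{\ell,\epsilon,\epsilon'}(\xi_3)$ and $(B^{(\alpha)}_{\ell,\epsilon,\epsilon'}(\xi_3))^*$ acts as a closed operator on $\gF_\ell$ tensored with the identities on the other two lepton factors and on $\gF_W$, and $N_\ell$ decomposes the same way, so it suffices to prove \eqref{eq:2.53}--\eqref{eq:2.54} for $\Phi\in\gD_\ell\subset\gF_\ell$ with $N_\ell$ the number operator of massive leptons on $\gF_\ell$; the extension to $\cD(N_\ell^{1/2})$ and to $\gF_L$ is then the usual closedness/tensorization argument, using that $\gD_\ell$ is a core for $N_\ell^{1/2}$ and for the four operators. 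Fix $\xi_3$ so that, by Hypothesis~\ref{hypothesis:2.1} and Fubini, $g:=G^{(\alpha)}_{\ell,\epsilon,\epsilon'}(\cdot,\cdot,\xi_3)\in L^2(\Sigma_1\times\Sigma_1)$, and write $g_{\xi_1}:=g(\xi_1,\cdot)\in L^2(\Sigma_1)$, so $\int\|g_{\xi_1}\|_{L^2(\Sigma_1)}^2\,\d\xi_1=\|g\|_{L^2(\Sigma_1\times\Sigma_1)}^2$. The common preliminary step, used for all four operators, is to carry out the $\xi_2$–integration first: since $\epsilon\neq\epsilon'$ forces $b_{\ell,\epsilon}(\xi_1)$ and $c^\sharp_{\ell,\epsilon'}(\xi_2)$ to anticommute (they live on distinct fermionic factors of $\gF_\ell=\bigotimes^4\gF_a(L^2(\Sigma_1))$), one may pull the smeared neutrino operator across the massive–lepton operator, e.g.
\[
 B^{(1)}_{\ell,\epsilon,\epsilon'}(\xi_3)=\int c_{\ell,\epsilon'}(g_{\xi_1})\,b_{\ell,\epsilon}(\xi_1)\,\d\xi_1,
 \qquad
 \big(B^{(1)}_{\ell,\epsilon,\epsilon'}(\xi_3)\big)^*=\int b^*_{\ell,\epsilon}(\xi_1)\,c^*_{\ell,\epsilon'}(g_{\xi_1})\,\d\xi_1,
\]
and likewise for $B^{(2)}_{\ell,\epsilon,\epsilon'}(\xi_3)$, $(B^{(2)}_{\ell,\epsilon,\epsilon'}(\xi_3))^*$ with creation and annihilation operators interchanged and an irrelevant sign, where by \eqref{eq:2.32} each $c^\sharp_{\ell,\epsilon'}(g_{\xi_1})$ is bounded with norm $\|g_{\xi_1}\|_{L^2(\Sigma_1)}$.

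For the two annihilation–type operators, $B^{(1)}_{\ell,\epsilon,\epsilon'}(\xi_3)$ and $(B^{(2)}_{\ell,\epsilon,\epsilon'}(\xi_3))^*$, apply the representation above to $\Phi\in\gD_\ell$, take norms under the $\xi_1$–integral, and use Cauchy--Schwarz in $\xi_1$:
\[
 \big\|B^{(1)}_{\ell,\epsilon,\epsilon'}(\xi_3)\Phi\big\|
 \le\int\|g_{\xi_1}\|_{L^2(\Sigma_1)}\|b_{\ell,\epsilon}(\xi_1)\Phi\|\,\d\xi_1
 \le\|g\|_{L^2(\Sigma_1\times\Sigma_1)}\Big(\int\|b_{\ell,\epsilon}(\xi_1)\Phi\|^2\,\d\xi_1\Big)^{1/2}.
\]
The remaining factor equals $\big\langle\Phi,\big(\int b^*_{\ell,\epsilon}(\xi_1)b_{\ell,\epsilon}(\xi_1)\,\d\xi_1\big)\Phi\big\rangle^{1/2}\le\|N_\ell^{1/2}\Phi\|$, since $N_\ell=\sum_{\epsilon''}\int b^*_{\ell,\epsilon''}b_{\ell,\epsilon''}$ dominates each summand as a positive operator. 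This gives \eqref{eq:2.53} for $\alpha=1$ and \eqref{eq:2.54} for $\alpha=2$.

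For the two creation–type operators, $(B^{(1)}_{\ell,\epsilon,\epsilon'}(\xi_3))^*$ and $B^{(2)}_{\ell,\epsilon,\epsilon'}(\xi_3)$, I would expand $\|\,\cdot\,\Phi\|^2$ as a double integral in $\xi_1,\xi_1'$ and commute $b_{\ell,\epsilon}(\xi_1)$ past $b^*_{\ell,\epsilon}(\xi_1')$ via the canonical anticommutation relation $\{b_{\ell,\epsilon}(\xi_1),b^*_{\ell,\epsilon}(\xi_1')\}=\delta(\xi_1-\xi_1')$. The $\delta$–contraction contributes $\int\|c^*_{\ell,\epsilon'}(g_{\xi_1})\Phi\|^2\,\d\xi_1\le\|g\|_{L^2(\Sigma_1\times\Sigma_1)}^2\|\Phi\|^2$, while the Wick–ordered remainder is handled exactly as in the previous paragraph (moving $c^\sharp_{\ell,\epsilon'}$ past $b^\sharp_{\ell,\epsilon}$ up to signs, these acting on different factors of $\gF_\ell$) and is bounded by $\|g\|_{L^2(\Sigma_1\times\Sigma_1)}^2\langle\Phi,N_\ell\Phi\rangle$; adding the two contributions yields $\big\|(B^{(1)}_{\ell,\epsilon,\epsilon'}(\xi_3))^*\Phi\big\|,\ \big\|B^{(2)}_{\ell,\epsilon,\epsilon'}(\xi_3)\Phi\big\|\le\|g\|_{L^2(\Sigma_1\times\Sigma_1)}\|(N_\ell+\1)^{1/2}\Phi\|$, i.e.\ \eqref{eq:2.54} for $\alpha=1$ and \eqref{eq:2.53} for $\alpha=2$ with $N_\ell$ replaced by $N_\ell+\1$, the extra $\1$ being forced by the contraction term. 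The domain inclusions $\cD(N_\ell^{1/2})\subset\cD(B^{(\alpha)}_{\ell,\epsilon,\epsilon'}(\xi_3))\cap\cD((B^{(\alpha)}_{\ell,\epsilon,\epsilon'}(\xi_3))^*)$ then follow by closedness, since the four estimates hold on the core $\gD_\ell$ of $N_\ell^{1/2}$.

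I expect no serious obstacle: everything rests on the single factorization $\int(\,\cdot\,)\,b^\sharp_{\ell,\epsilon}(\xi_1)\,\d\xi_1$ against a bounded smeared neutrino operator, Cauchy--Schwarz, and the identity $\int\|b_{\ell,\epsilon}(\xi_1)\Phi\|^2\,\d\xi_1=\big\langle\Phi,\int b^*_{\ell,\epsilon}(\xi_1)b_{\ell,\epsilon}(\xi_1)\,\d\xi_1\,\Phi\big\rangle$. The only point needing a little care is the creation–type case, where the CAR contraction term is genuinely present (as one sees already on $\Phi=\Omega_\ell$) and accounts for the passage from $N_\ell^{1/2}$ to $(N_\ell+\1)^{1/2}$ on the right-hand side; one must also keep in mind that $b_{\ell,\epsilon}$ and $c^\sharp_{\ell,\epsilon'}$ with $\epsilon\neq\epsilon'$ truly anticommute — a consequence of the convention of section~\ref{the-model} that fermionic operators of distinct species anticommute, together with the placement of the four species on distinct tensor factors of $\gF_\ell$ — so that the smeared neutrino operator may be pulled across the massive–lepton operator at will. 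Passing from $\gD_\ell$ to $\cD(N_\ell^{1/2})$ and from $\gF_\ell$ to $\gF_L$ is then routine.
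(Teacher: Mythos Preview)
Your approach is essentially the same as the paper's: both use the boundedness \eqref{eq:2.32} of the smeared neutrino operators $c^\sharp_{\ell,\epsilon'}(g_{\xi_1})$, Cauchy--Schwarz in $\xi_1$, and the identity $\int\|b_{\ell,\epsilon}(\xi_1)\Phi\|^2\,\d\xi_1\le\|N_\ell^{1/2}\Phi\|^2$, and then extend from the core by closedness; the paper in fact only writes out the annihilation case $B^{(1)}_{1,+,-}$ and asserts the others are ``quite similar''. Your treatment of the creation-type operators is actually more careful than the paper's statement: the observation that $(B^{(1)})^*\Omega_\ell\neq 0$ while $N_\ell^{1/2}\Omega_\ell=0$ shows that \eqref{eq:2.53}--\eqref{eq:2.54} as written can only hold for the creation cases with $N_\ell$ replaced by $N_\ell+\1$, exactly as you obtain, and indeed the paper tacitly uses the corrected bound (with $(N_\ell+1)^{1/2}$) in the proof of Proposition~\ref{proposition:2.5}.
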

%%%%%%   end proposition   %%%%%%
%
%
%
%%%%%%   proof    %%%%%%%%%%%%%%%

\begin{proof}
The estimates \eqref{eq:2.53} and \eqref{eq:2.54} are examples of
$N_\tau$ estimates (see \cite{GlimmJaffe1977}). We give a proof
for sake of completeness. We only consider $B^{(1)}_{1,+,-}$. The
other cases are quite similar.

Let $\Phi=(\Phi^{(Q)})_Q$ and $\Psi = (\Psi^{(Q')})_{Q'}$ be two
vectors in $\gD_L$, where we use the notations
$Q=\ind_{\ell=1,2,3}$, and $Q'=(q_\ell', \bar q_\ell', r_\ell',
\bar r_\ell')_{\ell=1,2,3}$. We have
\begin{equation}
\begin{split}
  & (\Psi^{(Q')}, B^{(1)}_{1,+,-}(\xi_3) \Phi^{(Q)})_{\gF_L} =
  - \delta_{q'_1\, q_1-1} \delta_{\bar q_1'\,\bar q_1}
  \delta_{r'_1\,r_1} \delta_{\bar r_1'\,
  \bar r_1-1}
  \prod_{\ell = 2}^3 \delta_{q_\ell' q_\ell}
  \delta_{\bar q_\ell' \bar q_\ell}
  \delta_{r_\ell' r_\ell}
  \delta_{\bar r_\ell' \bar r_\ell} \\
  & \int_{\Sigma_1\times\Sigma_1} (\Psi^{(\tilde Q)},
  b_{1,+}(\xi_1) c_{1,-}(\xi_2) \Phi^{(Q)} )_{\gF_L}
  \overline{G^{(1)}_{1,+,-}(\xi_1, \xi_2, \xi_3)} \d \xi_1
  \d \xi_2\ .
\end{split}
\end{equation}
Here $\tilde Q = (q_1-1, \bar q_1, r_1, \bar r_1-1, q_2, \bar q_2,
r_2, \bar r_2, q_3, \bar q_3, r_3, \bar r_3)$.

For each $Q$,
\begin{equation}\label{eq:2.56}
 B^{(1)}_{1,+,-}(\xi_3) \Phi^{(Q)} \in \gF_1^{(q_1-1, \bar q_1,
 r_1, \bar r_1-1)}\otimes \gF_2^{(q_2, \bar q_2, r_2, \bar r_2)}
 \otimes \gF_3^{(q_3, \bar q_3, r_3, \bar r_3)}.
\end{equation}
By the Fubini theorem we have
\begin{equation}\nonumber
\begin{split}
  & \left| ( \Psi^{(\tilde Q)}, B^{(1)}_{1,+,-}(\xi_3)
  \Psi^{(Q)})_{\gF_L} \right| \\
  & = \left|\int_{\Sigma_1} \left(\int_{\Sigma_1}
  G^{(1)}_{1,+,-}(\xi_1, \xi_2, \xi_3) c_{1,-}^*(\xi_2)
  \Psi^{(\tilde Q)} \d \xi_2, b_{1,+}(\xi_1) \Phi^{(Q)}\right)_{\gF_L}
  \d\xi_1\right|\ .
\end{split}
\end{equation}
By \eqref{eq:2.32}, and the Cauchy-Schwarz inequality we get
\begin{equation}\nonumber
\begin{split}
  & \left| ( \Psi^{(\tilde Q)}, B^{(1)}_{1,+,-}(\xi_3)
  \Psi^{(Q)})_{\gF_L} \right|^2 \\
  & \leq  \left(\int_{\Sigma_1} \|b_{1,+}(\xi_1)\Phi^{(Q)}\|
  \left(\int_{\Sigma_1}
  | G^{(1)}_{1,+,-}(\xi_1, \xi_2, \xi_3)|^2 \d\xi_2
  \right)^\frac12 \d\xi_1\right)^2 \|\Psi^{(\tilde Q)}\|^2\ .
\end{split}
\end{equation}
By the definition of $b_{1,+}(\xi_1) \Phi^{(Q)}$ and the
Cauchy-Schwarz inequality we get
\begin{equation}\nonumber
\begin{split}
 & | (\Psi^{(\tilde Q)}, B^{(1)}_{1,+,-}(\xi_3) \Phi^{(Q)}
 )_{\gF_L}|^2\\
 & \leq
 q_1 \left(\int_{\Sigma_1}\int_{\Sigma_1} | G^{(1)}_{1,+,-}
 (\xi_1, \xi_2, \xi_3)|^2 \d\xi_1 \d\xi_2\right)
 \| \Psi^{(\tilde Q)}\|^2_{\gF_L} \|\Phi^{(Q)}\|^2_{\gF_L} \\
 & = \left(\int_{\Sigma_1}\int_{\Sigma_1} | G^{(1)}_{1,+,-}
 (\xi_1, \xi_2, \xi_3)|^2 \d\xi_1 \d\xi_2\right)
 \| \Psi^{(\tilde Q)}\|^2_{\gF_L} \|N_1^\frac12
 \Phi^{(Q)}\|^2_{\gF_L}\ .
\end{split}
\end{equation}
By \eqref{eq:2.56} we have
\begin{equation}\nonumber
\begin{split}
 | (\Psi, B^{(1)}_{1,+,-}(\xi_3) \Phi^{(Q)} )_{\gF_L}|^2
 \leq \|
 \Psi\|_{\gF_L}^2 \| N_1^\frac12 \Phi^{(Q)}\|^2_{\gF_L}
 \int_{\Sigma_1\times\Sigma_1} |
 G^{(1)}_{1,+,-}(\xi_1,\xi_2,\xi_3) |^2 \d\xi_1 \d\xi_2\ ,
\end{split}
\end{equation}
for every $\Psi\in\gD_L$. Therefore we get
\begin{equation}\nonumber
 \| B^{(1)}_{1,+,-}(\xi_3) \Phi^{(Q)}\|^2_{\gF_L}
 \leq \left( \int_{\Sigma_1\times\Sigma_1} |G^{(1)}_{1,+,-} (\xi_1,
 \xi_2, \xi_3)|^2 \d\xi_1 \d\xi_2\right) \|N_1^\frac12
 \Phi^{(Q)}\|^2_{\gF_L}\ ,
\end{equation}
and by \eqref{eq:2.56} we finally obtain
\begin{equation}\nonumber
 \| B^{(1)}_{1,+,-}(\xi_3) \Phi\|^2_{\gF_L}
 \leq \left( \int_{\Sigma_1\times\Sigma_1} |G^{(1)}_{1,+,-} (\xi_1,
 \xi_2, \xi_3)|^2 \d\xi_1 \d\xi_2\right) \|N_1^\frac12
 \Phi\|^2_{\gF_L}\ ,
\end{equation}
for every $\Phi\in\gD$.

Since $\gD_L$ is a core for $N_1^\frac12$ and $B^{(1)}_{1,+,-}$
with domain $\gD_L$ is closable,
$\cD(B^{(1)}_{1,+,-}(\xi_3))\supset \cD(N_1^\frac12)$, and
\eqref{eq:2.53} is satisfied for every $\Phi\in\cD(N_1^\frac12)$.
\end{proof}
%%%%%% end proof  %%%%%%%%%%%%%%%

Let
\begin{equation}\nonumber
  H^{(3)}_{0,\epsilon} = \int w^{(3)}(\xi_3)
  a_\epsilon^*(\xi_3) a_\epsilon(\xi_3) \d\xi_3\ .
\end{equation}
Then $H^{(3)}_{0,\epsilon}$ is a self-adjoint operator in $\gF_W$,
and $\gD_W$ is a core for $H^{(3)}_{0,\epsilon}$.

We get
%%%%%%%%%%%%   proposition 2.5  %%%%%%%
\begin{proposition}\label{proposition:2.5}
\begin{equation}\label{eq:2.63}
\begin{split}
 & \| \int (B^{(\alpha)}_{\ell,\epsilon,\epsilon'}(\xi_3))^*\otimes
 a_\epsilon(\xi_3) \d\xi_3 \Psi\|^2 \\
 & \leq
 (\int_{\Sigma_1\times\Sigma_1\times\Sigma_2}
 \frac{|G^{(\alpha)}_{\ell,\epsilon,\epsilon'}(\xi_1,\xi_2,\xi_3)|^2}{w^{(3)}(\xi_3)}
 \d \xi_1 \d \xi_2 \d \xi_3)\
 \|(N_\ell+1)^\frac12 \otimes (H_{0,\epsilon}^{(3)})^\frac12
 \Psi\|^2
\end{split}
\end{equation}
and
\begin{equation}\label{eq:2.64}
\begin{split}
 & \| \int B^{(\alpha)}_{\ell,\epsilon,\epsilon'}(\xi_3)\otimes
 a^*_\epsilon(\xi_3) \d\xi_3 \Psi\|^2 \\
 & \leq
 (\int_{\Sigma_1\times\Sigma_1\times\Sigma_2}
 \frac{|G^{(\alpha)}_{\ell,\epsilon,\epsilon'}(\xi_1,\xi_2,\xi_3)|^2}{w^{(3)}(\xi_3)}
 \d \xi_1 \d \xi_2 \d \xi_3)\
 \|(N_\ell+1)^\frac12 \otimes (H_{0,\epsilon}^{(3)})^\frac12
 \Psi\|^2 \\
 & +
 (\int_{\Sigma_1\times\Sigma_1\times\Sigma_2}
 |G^{(\alpha)}_{\ell,\epsilon,\epsilon'}(\xi_1,\xi_2,\xi_3)|^2
 \d \xi_1 \d \xi_2 \d \xi_3)\
 (\eta \|(N_\ell+1)^\frac12 \otimes \1\ \Psi\|^2
 + \frac{1}{4\eta} \|\Psi\|^2) \ ,
\end{split}
\end{equation}
for every $\Psi\in\cD(H_0)$ and every $\eta>0$.
\end{proposition}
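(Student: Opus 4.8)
The plan is to derive both bounds from the pointwise $N_\ell$-estimates of Proposition~\ref{proposition:2.4}, the boundedness \eqref{eq:2.32} of the fermionic creation/annihilation operators, and the two elementary facts that $H^{(3)}_{0,\epsilon}=\int_{\Sigma_2}w^{(3)}(\xi_3)\,a^*_\epsilon(\xi_3)a_\epsilon(\xi_3)\,\d\xi_3$ and $w^{(3)}(\xi_3)\geq m_W>0$. Since $\gD$ is a core for the operators involved and $\cD(H_0)\subset\cD\bigl((N_\ell+1)^{1/2}\otimes(H^{(3)}_{0,\epsilon})^{1/2}\bigr)$, it suffices to prove the inequalities for $\Psi\in\gD$ and then pass to $\cD(H_0)$ by closure. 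I will use Proposition~\ref{proposition:2.4} in the form with $N_\ell^{1/2}$ replaced by $(N_\ell+1)^{1/2}$: for the factors that are creation operators — namely $(B^{(1)}_{\ell,\epsilon,\epsilon'}(\xi_3))^*$ arising in \eqref{eq:2.63} and $B^{(2)}_{\ell,\epsilon,\epsilon'}(\xi_3)$ arising in \eqref{eq:2.64} — this is the appropriate statement (the $(N_\ell+1)^{1/2}$ cannot be improved to $N_\ell^{1/2}$, as the vacuum shows), and it follows by the same computation as Proposition~\ref{proposition:2.4} together with \eqref{eq:2.32}.

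For \eqref{eq:2.63} I would first apply Minkowski's integral inequality to bring the $\xi_3$-integral outside the norm, bounding the left-hand side by $\int\bigl\|\bigl((B^{(\alpha)}_{\ell,\epsilon,\epsilon'}(\xi_3))^*\otimes\1\bigr)\bigl(\1\otimes a_\epsilon(\xi_3)\bigr)\Psi\bigr\|\,\d\xi_3$. For a.e.\ fixed $\xi_3$, Proposition~\ref{proposition:2.4} applied on the $\gF_L$-factor to the vector $(\1\otimes a_\epsilon(\xi_3))\Psi$ bounds the integrand by $\|G^{(\alpha)}_{\ell,\epsilon,\epsilon'}(\cdot,\cdot,\xi_3)\|_{L^2(\Sigma_1\times\Sigma_1)}\,\bigl\|(N_\ell+1)^{1/2}\otimes a_\epsilon(\xi_3)\,\Psi\bigr\|$. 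I would then insert $w^{(3)}(\xi_3)^{1/2}w^{(3)}(\xi_3)^{-1/2}$ under the $\xi_3$-integral and use the Cauchy--Schwarz inequality in $\xi_3$: the first factor is $\bigl(\int|G^{(\alpha)}_{\ell,\epsilon,\epsilon'}|^2/w^{(3)}\bigr)^{1/2}$ (finite because $w^{(3)}\geq m_W$) and the second is $\bigl(\int w^{(3)}(\xi_3)\|(N_\ell+1)^{1/2}\otimes a_\epsilon(\xi_3)\Psi\|^2\,\d\xi_3\bigr)^{1/2}=\|(N_\ell+1)^{1/2}\otimes(H^{(3)}_{0,\epsilon})^{1/2}\Psi\|$, by the identity for $H^{(3)}_{0,\epsilon}$ and the fact that $N_\ell$ and $H^{(3)}_{0,\epsilon}$ act on different tensor factors. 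Squaring gives \eqref{eq:2.63}.

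For \eqref{eq:2.64} the presence of the creation operator $a^*_\epsilon(\xi_3)$ prevents pulling the integral out; instead I would compute $\bigl\|\int B^{(\alpha)}_{\ell,\epsilon,\epsilon'}(\xi_3)\otimes a^*_\epsilon(\xi_3)\,\d\xi_3\,\Psi\bigr\|^2$ as a double integral over $(\xi_3,\xi_3')$ and normal-order using the CCR $a_\epsilon(\xi_3)a^*_\epsilon(\xi_3')=\delta(\xi_3-\xi_3')+a^*_\epsilon(\xi_3')a_\epsilon(\xi_3)$. The $\delta$-term yields a \emph{contraction contribution} $\int\|(B^{(\alpha)}_{\ell,\epsilon,\epsilon'}(\xi_3)\otimes\1)\Psi\|^2\,\d\xi_3$ and the remainder is a \emph{normal-ordered contribution}. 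Moving each $\1\otimes a^*_\epsilon(\xi_3')$ past the commuting factor $B^{(\alpha)}_{\ell,\epsilon,\epsilon'}(\xi_3')\otimes\1$, the normal-ordered contribution becomes $\iint\la(B^{(\alpha)}_{\ell,\epsilon,\epsilon'}(\xi_3)\otimes a_\epsilon(\xi_3'))\Psi,\ (B^{(\alpha)}_{\ell,\epsilon,\epsilon'}(\xi_3')\otimes a_\epsilon(\xi_3))\Psi\ra\,\d\xi_3\,\d\xi_3'$; bounding the integrand by the product of the two norms, applying Proposition~\ref{proposition:2.4} to each factor (which makes the double integral factor as a perfect square), and then repeating the weighted Cauchy--Schwarz step of the previous paragraph, this is $\leq\bigl(\int|G^{(\alpha)}_{\ell,\epsilon,\epsilon'}|^2/w^{(3)}\bigr)\,\|(N_\ell+1)^{1/2}\otimes(H^{(3)}_{0,\epsilon})^{1/2}\Psi\|^2$, the first summand of \eqref{eq:2.64}. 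The contraction contribution is controlled by Proposition~\ref{proposition:2.4}, and distributing that bound with Young's inequality $ab\leq\eta a^2+\tfrac1{4\eta}b^2$ produces the second summand.

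The step I expect to be the main obstacle is the normal-ordering bookkeeping in \eqref{eq:2.64}: tracking the signs produced by the anticommutation relations, verifying that the operators $B^{(\alpha)}_{\ell,\epsilon,\epsilon'}(\xi_3)$ at distinct $\xi_3$ commute among themselves (so that the reorderings are legitimate), controlling the contraction term in the precise form dictated by \eqref{eq:2.64}, and justifying all of these formal manipulations rigorously by first working on the core $\gD$ and then taking limits. A secondary point to get right is recording the $(N_\ell+1)^{1/2}$-version of Proposition~\ref{proposition:2.4} needed for the creation-operator factors and checking the domain inclusion $\cD(H_0)\subset\cD\bigl((N_\ell+1)^{1/2}\otimes(H^{(3)}_{0,\epsilon})^{1/2}\bigr)$ used for the final extension.
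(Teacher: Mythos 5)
Your proposal is correct and follows essentially the same route as the paper: Minkowski's inequality together with the weighted Cauchy--Schwarz step in $\xi_3$ for \eqref{eq:2.63}, and normal ordering into a diagonal contraction plus an off-diagonal term, treated with Cauchy--Schwarz and Young's inequality respectively, for \eqref{eq:2.64}. Your remark that the creation-operator factors really require the $(N_\ell+1)^{1/2}$-version of the $N_\ell$-estimate is a sound and careful observation; the paper's proof of Proposition~\ref{proposition:2.5} does indeed work with $(N_\ell+1)^{-1/2}$ on the $\gF_L$-side for exactly this reason, even though the wording of Proposition~\ref{proposition:2.4} does not make the distinction explicit.
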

%%%%  end proposition 2.5  %%%%%%
%
%
%
%
%
%
%%% proof %%%%%%%%
\begin{proof}
Suppose that $\Psi \in \cD(N_\ell^\frac12) \hat\otimes
\cD((H_{0,\epsilon}^{(3)})^\frac12)$. Let
\begin{equation}\nonumber
 \Psi_\epsilon(\xi_3) = w^{(3)} (\xi_3)^\frac12 (
 (N_\ell+1)^\frac12 \otimes a_\epsilon(\xi_3)) \Phi\ .
\end{equation}
We have
\begin{equation}\nonumber
 \int_{\Sigma_2} \|\Psi_\epsilon(\xi_3) \|^2 \d\xi_3
 = \| (N_\ell+1)^\frac12 \otimes (H_{0,\epsilon}^{(3)})^\frac12
 \Psi\|^2 \ .
\end{equation}
We get
\begin{equation}\nonumber
\begin{split}
 & \int
 (B^{(\alpha)}_{\ell,\epsilon,\epsilon'}(\xi_3))^*\otimes
 a_\epsilon(\xi_3) \d \xi_3 \Psi \\
 & = \int_{\Sigma_2} \frac{1}{(w^{(3)}(\xi_3))^{\frac12}}
 ((B^{(\alpha)}_{\ell,\epsilon,\epsilon'}(\xi_3))^*
 (N_\ell+1)^{-\frac12} \otimes\1)
 \Psi_\epsilon(\xi_3) \d\xi_3\ .
\end{split}
\end{equation}
Therefore
\begin{equation}\label{eq:2.68}
\begin{split}
 & \| \int (B^{(\alpha)}_{\ell,\epsilon,\epsilon'} (\xi_3))^*
 \otimes a_\epsilon(\xi_3)\, \Psi \d\xi_3\|^2_\gF \\
 & \leq (\int_{\Sigma_2} \frac{1}{w^{(3)}(\xi_3)}
 \| (B^{(\alpha)}_{\ell,\epsilon,\epsilon'} (\xi_3))^*
 (N_\ell+1)^{-\frac12}\|_{\gF_L} \|\Psi_\epsilon(\xi_3)\|_{\gF}
 \d\xi_3)^2 \\
 & \leq (\int_{\Sigma_1\times\Sigma_1\times\Sigma_2}
 \frac{ | G^{(\alpha)}_{\ell,\epsilon,\epsilon'}(\xi_2, \xi_2, \xi3)|^2}
 {w^{(3)}(\xi_3)} \d\xi_1 \d\xi_2 \d\xi_3 ) \| (N_\ell+1)^\frac12
 \otimes (H^{(3)}_{0,\epsilon})^\frac12 \Psi\|_\gF^2\ ,
\end{split}
\end{equation}
as it follows from Proposition~\ref{proposition:2.4}.

We now have
\begin{equation}\nonumber
\begin{split}
 & \| \int B^{(\alpha)}_{\ell,\epsilon,\epsilon'}(\xi_3) \otimes
 a_\epsilon^*(\xi_3) \, \Psi \d \xi_3 \|^2_\gF \\
 & = \int (B^{(\alpha)}_{\ell,\epsilon,\epsilon'}(\xi_3) \otimes
 a_\epsilon(\xi_3') \Psi,\
 B^{(\alpha)}_{\ell,\epsilon,\epsilon'}(\xi_3')
 \otimes a_\epsilon(\xi_3) \,\Psi) \d\xi_3 \d\xi_3'
 + \int \| (B^{(\alpha)}_{\ell,\epsilon,\epsilon'} \otimes
 \1)\Psi\|^2 \d \xi_3\ ,
\end{split}
\end{equation}
and
\begin{equation}\label{eq:2.70}
\begin{split}
  & \int_{\Sigma_2\times\Sigma_2} (B^{(\alpha)}_{\ell,\epsilon,\epsilon'}(\xi_3) \otimes
  a_\epsilon(\xi_3') \Psi,
  B^{(\alpha)}_{\ell,\epsilon,\epsilon'}(\xi_3')\otimes
  a_\epsilon(\xi_3)\Psi ) \d\xi_3 \d\xi_3' \\
  & = \int_{\Sigma_2\times\Sigma_2}
  \frac{1}{ w^{(3)}(\xi_3)^\frac12 w^{(3)}(\xi_3')^\frac12}
  \Big( (B^{(\alpha)}_{\ell,\epsilon,\epsilon'}(\xi_3)
  (N_\ell +1)^{-\frac12} \otimes \1 ) \Psi_\epsilon(\xi_3'), \\
  & (B^{(\alpha)}_{\ell,\epsilon,\epsilon'}(\xi_3')
  (N_\ell +1)^{-\frac12} \otimes \1 ) \Psi_\epsilon(\xi_3)
  \Big) \d\xi_3 \d\xi_3' \\
  & \leq (\int_{\Sigma_2}
  \frac{1}{ w^{(3)}(\xi_3)^\frac12}
  \| B^{(\alpha)}_{\ell,\epsilon,\epsilon'}(\xi_3)
  (N_\ell + 1)^{-\frac12} \|_{\gF_L}
  \| \Psi_\epsilon(\xi_3)\| \d\xi_3)^2 \\
  & \leq (\int_{\Sigma_1\times\Sigma_1\times\Sigma_2}
  \frac{|G^{(\alpha)}(\xi_1,\xi_2,\xi_3)|^2}{ w^{(3)}(\xi_3)}
  \d\xi_1 \d\xi_2 \d\xi_3)
  \| (N_\ell +1)^\frac12 \otimes (H_{0,\epsilon}^{(3)})^\frac12
  \Psi\|^2 \ .
\end{split}
\end{equation}
Furthermore
\begin{equation}\label{eq:2.71}
\begin{split}
  & \int_{\Sigma_2} \| B^{(\alpha)}_{\ell,\epsilon,\epsilon'}(\xi_3)
  \otimes \1) \Psi\|^2 \d\xi_3 \\
  & = \int_{\Sigma_2} \| B^{(\alpha)}_{\ell,\epsilon,\epsilon'}(\xi_3)
  (N_\ell+1)^{-\frac12}\otimes \1)
  ((N_\ell+1)^\frac12\otimes\1) \Psi\|^2 \d\xi_3 \\
  & \leq \left(\int_{\Sigma_1\times\Sigma_1\times\Sigma_2}
  |G^{(\alpha)}_{\ell,\epsilon,\epsilon'}(\xi_1,\xi_2,\xi_3)|^2
  \d\xi_1\d\xi_2\d\xi_3 \right)\, (\eta \|(N_\ell +1)\Psi\|^2 +
  \frac{1}{4\eta} \|\Psi\|^2)\ ,
\end{split}
\end{equation}
for every $\eta>0$.

By \eqref{eq:2.68}, \eqref{eq:2.70}, and \eqref{eq:2.71}, we
finally get \eqref{eq:2.63} and \eqref{eq:2.64} for every $\Psi\in
\cD(N_\ell^\frac12)\hat\otimes\cD(H_{0,\epsilon}^{(3)})$. The set
$\cD(N_\ell^\frac12)\hat\otimes\cD(H_{0,\epsilon}^{(3)})$ is a
core for $N_\ell^\frac12\otimes H_{0,\epsilon}^{(3)}$ and
$\cD(H_0)\subset \cD(N_\ell^\frac12\otimes H_{0,\epsilon}^{(3)})$.
It then follows that \eqref{eq:2.63} and \eqref{eq:2.64} are
verified for every $\Psi\in\cD(H_0)$.
\end{proof}
%%% end proof %%%%

We now prove that $H$ is a self-adjoint operator in $\gF$ for $g$
sufficiently small.
%%%%%%%%%  THEOREM ON SELF-ADJOINTNESS  %%%%%%%%%
\begin{theorem}
Let $g_1>0$ be such that
\begin{equation}\nonumber
 \frac{3 g_1^2}{m_W} (\frac{1}{m_1^2} +1) \sum_{\alpha=1,2}
 \sum_{\ell=1}^3 \sum_{\epsilon\neq\epsilon'} \|
 G^{(\alpha)}_{\ell,\epsilon,\epsilon'}
 \|^2_{L^2(\Sigma_1\times\Sigma_1\times\Sigma_2)} <1\ .
\end{equation}
Then for every $g$ satisfying $g\leq g_1$, $H$ is a self-adjoint
operator in $\gF$ with domain $\cD(H)=\cD(H_0)$, and $\gD$ is a
core for $H$.
\end{theorem}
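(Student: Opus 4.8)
The plan is to deduce self-adjointness from the Kato--Rellich theorem. Since $H_0$ is self-adjoint and (as for the free second-quantized Hamiltonian of a Fock space) $\gD$ is a core for $H_0$, it is enough to show that for $g\le g_1$ the symmetric operator $g\,H_I$ extends from $\gD$ to an operator on $\cD(H_0)$ that is $H_0$-bounded with relative bound strictly less than $1$. Recall from the decomposition established before the statement that on $\gD$
\[
 H_I=\sum_{\alpha=1,2}\sum_{\ell=1}^3\sum_{\epsilon\ne\epsilon'}\Big(\int B^{(\alpha)}_{\ell,\epsilon,\epsilon'}(\xi_3)\otimes a^*_\epsilon(\xi_3)\,\d\xi_3+\int (B^{(\alpha)}_{\ell,\epsilon,\epsilon'}(\xi_3))^*\otimes a_\epsilon(\xi_3)\,\d\xi_3\Big),
\]
a finite sum, so everything reduces to a single quantitative estimate. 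Once that estimate is proved on $\cD(H_0)$, it shows in particular that each of these operators closes on $\cD(H_0)$, and symmetry of the sum passes from the core $\gD$ to $\cD(H_0)$, so Kato--Rellich applies legitimately.

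The quantitative input is Proposition~\ref{proposition:2.5}: for $\Psi\in\cD(H_0)$, estimates \eqref{eq:2.63} and \eqref{eq:2.64} bound the norms of the two families of terms above by $\|(N_\ell+1)^{1/2}\otimes(H^{(3)}_{0,\epsilon})^{1/2}\Psi\|^2$, $\|(N_\ell+1)^{1/2}\Psi\|^2$ (equivalently, after one Young step, $\|(N_\ell+1)\Psi\|^2$) and $\|\Psi\|^2$, with weights built from $\int|G^{(\alpha)}_{\ell,\epsilon,\epsilon'}|^2/w^{(3)}$ and $\int|G^{(\alpha)}_{\ell,\epsilon,\epsilon'}|^2$. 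The next step is to turn these right-hand sides into genuine $H_0$-relative bounds through elementary operator inequalities: from $w^{(3)}(\xi_3)\ge m_W$ one gets $\int|G|^2/w^{(3)}\le m_W^{-1}\|G\|^2_{L^2}$; from $w^{(1)}_\ell(\xi_1)\ge m_1$ one gets $N_\ell\le m_1^{-1}H^{(1)}_0\le m_1^{-1}H_0$, hence $(N_\ell+1)^2\le m_1^{-2}H_0^2+2m_1^{-1}H_0+\1$; and since $H^{(1)}_0$ and $H^{(3)}_0$ commute with $H^{(3)}_{0,\epsilon}\le H^{(3)}_0\le H_0$ and $H^{(1)}_0+H^{(3)}_0\le H_0$, one has $H^{(1)}_0H^{(3)}_{0,\epsilon}\le\tfrac14 H_0^2$, so $(N_\ell+1)\otimes H^{(3)}_{0,\epsilon}\le m_1^{-1}H^{(1)}_0H^{(3)}_0+H^{(3)}_0\le\tfrac1{4m_1}H_0^2+H_0$. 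Feeding these into \eqref{eq:2.63}--\eqref{eq:2.64} and using $ab\le\eta a^2+\tfrac1{4\eta}b^2$ (with $\eta$ an explicit multiple of $m_W^{-1}$, and the mixed terms $m_1^{-1}$ absorbed via $m_1^{-1}\le\tfrac12(m_1^{-2}+1)$) makes each term of $H_I$ obey a bound $\varepsilon_{\alpha,\ell,\epsilon,\epsilon'}\|H_0\Psi\|^2+C\|\Psi\|^2$ with $\varepsilon_{\alpha,\ell,\epsilon,\epsilon'}$ proportional to $m_W^{-1}(m_1^{-2}+1)\|G^{(\alpha)}_{\ell,\epsilon,\epsilon'}\|^2_{L^2}$.

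Finally one assembles the finitely many terms. Writing $H_I=\sum_{\ell=1}^3 H_{I,\ell}$ with $H_{I,\ell}$ collecting the contributions attached to species $\ell$, Cauchy--Schwarz gives $\|H_I\Psi\|^2\le 3\sum_{\ell}\|H_{I,\ell}\Psi\|^2$ — this is the source of the factor $3$ — and inside each $H_{I,\ell}$ one sums the contributions over $\alpha\in\{1,2\}$ and $\epsilon\ne\epsilon'$. Collecting constants one obtains, for every $\delta>0$ and every $\Psi\in\cD(H_0)$,
\[
 \|H_I\Psi\|^2\le\Big(\frac{3}{m_W}\Big(\frac1{m_1^2}+1\Big)\sum_{\alpha=1,2}\sum_{\ell=1}^3\sum_{\epsilon\ne\epsilon'}\|G^{(\alpha)}_{\ell,\epsilon,\epsilon'}\|^2_{L^2(\Sigma_1\times\Sigma_1\times\Sigma_2)}+\delta\Big)\|H_0\Psi\|^2+C_\delta\|\Psi\|^2 .
\]
By the choice of $g_1$, the coefficient of $g^2\|H_0\Psi\|^2$ on the right is $<1$ for $g\le g_1$, so $g\,H_I$ has $H_0$-relative bound $<1$; Kato--Rellich gives that $H=H_0+g\,H_I$ is self-adjoint with $\cD(H)=\cD(H_0)$. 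That $\gD$ is a core for $H$ then follows because the relative bound makes the $H_0$-graph norm and the $H$-graph norm equivalent on $\cD(H_0)$, and $\gD$ is dense in $\cD(H_0)$ for the $H_0$-graph norm.

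\textbf{Main obstacle.} The conceptual content is entirely contained in Propositions~\ref{proposition:2.4} and \ref{proposition:2.5}; what remains is the careful bookkeeping of the middle paragraph, namely extracting from \eqref{eq:2.63}--\eqref{eq:2.64} an $H_0$-relative bound whose leading constant is exactly $\tfrac{3}{m_W}(m_1^{-2}+1)\sum\|G^{(\alpha)}_{\ell,\epsilon,\epsilon'}\|^2$ (which dictates the definition of $g_1$), together with the routine but necessary verification that the formal sum $H_I$ extends to a symmetric operator on $\cD(H_0)$ so that Kato--Rellich may be invoked.
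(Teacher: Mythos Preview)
Your proposal is correct and follows essentially the same route as the paper: bound $\|H_I\Psi\|^2$ via Proposition~\ref{proposition:2.5}, convert the resulting $\|(N_\ell+1)^{1/2}\otimes(H^{(3)}_{0,\epsilon})^{1/2}\Psi\|$ and $\|(N_\ell+1)\Psi\|$ into $H_0$-relative bounds using the elementary inequalities $w^{(3)}\ge m_W$ and $N_\ell\le m_1^{-1}H_0$, and conclude by Kato--Rellich. The one inaccuracy is your attribution of the factor $3$ solely to Cauchy--Schwarz over the three species: the paper instead applies Cauchy--Schwarz over all $12$ indexing tuples $(\alpha,\ell,\epsilon,\epsilon')$ at once (giving the $12$ in \eqref{eq:2.73}), and the constant $\tfrac{3}{m_W}(m_1^{-2}+1)$ emerges only after combining this with the $\tfrac12$ in \eqref{eq:2.77}; in your scheme a further Cauchy--Schwarz inside each $H_{I,\ell}$ is needed, so the combinatorial factor alone is larger than $3$. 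This is only a bookkeeping slip and does not affect the argument.
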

%%%%%%%%%%%%%%%%%%%%%%%%%%%%%%%%%%%%%%%%%%%%%%%%%
%
%
%
%%%  proof of self-adjointness%%%%%
\begin{proof}
Let $\Psi$ be in $\gD$. We have
\begin{equation}\label{eq:2.73}
\begin{split}
 \| H_I \Psi \|^2 \leq &
 12 \sum_{\alpha=1,2} \sum_{\ell=1}^3 \sum_{\epsilon\neq\epsilon'}
 \Big\{ \left\| \int (B^{(\alpha)}_{\ell,\epsilon,\epsilon'}(\xi_3))^*
 \otimes a_\epsilon(\xi_3)\, \Psi\d\xi_3
 \right\|^2 \\
 & + \left\| \int (B^{(\alpha)}_{\ell,\epsilon,\epsilon'}(\xi_3))
 \otimes a^*_\epsilon(\xi_3) \, \Psi \d\xi_3 \right\|^2 \Big\} \ .
\end{split}
\end{equation}
Note that
\begin{equation}\nonumber
\| H_{0,\epsilon}^{(3)}\Psi\| \leq \|H_0^{(3)} \Psi\| \leq \|H_0
\Psi\|\ ,
\end{equation}
and
\begin{equation}\nonumber
 \| N_\ell \Psi\| \leq \frac{1}{m_\ell} \|H_{0,\ell}\Psi\|
 \leq \frac{1}{m_1} \|H_{0,\ell}\Psi\|
 \leq \frac{1}{m_1} \|H_{0}\Psi\|\ ,
\end{equation}
where
\begin{equation}\label{def:eq:2.76}
 H_{0,\ell} = \sum_\epsilon \int w_\ell^{(1)}(\xi_1)
 b_{\ell,\epsilon}^*(\xi_1) b_{\ell,\epsilon}(\xi_1) \d \xi_1
 + \sum_\epsilon \int w_\ell^{(2)}(\xi_2)
 c_{\ell,\epsilon}^*(\xi_2) c_{\ell,\epsilon}(\xi_2) \d \xi_2\ .
\end{equation}
We further note that
\begin{equation}\label{eq:2.77}
 \| (N_\ell + 1)^\frac12 \otimes (H^{(3)}_{0,\epsilon})^\frac12
 \Psi\|^2 \leq
 \frac12 (\frac{1}{m_1^2} + 1) \| H_0\Psi\|^2 + \frac{\beta}{2
 m_1^2} \| H_0\Psi\|^2 + (\frac12 + \frac{1}{8\beta}) \|\Psi\|^2
 ,
\end{equation}
for $\beta>0$, and
\begin{equation}\label{eq:2.78}
 \eta \| ((N_\ell+1) \otimes \1) \Psi\|^2 + \frac{1}{4\eta}
 \|\Psi\|^2
 \leq \frac{\eta}{m_1^2} \|H_0\Psi\|^2 + \frac{\eta\beta}{m_1^2}
 \|H_0\Psi\|^2 + \eta(1+\frac{1}{4\beta}) \|\Psi\|^2 +
 \frac{1}{4\eta} \|\Psi\|^2 .
\end{equation}
Combining \eqref{eq:2.73} with \eqref{eq:2.63}, \eqref{eq:2.64},
\eqref{eq:2.77} and \eqref{eq:2.78} we get for $\eta>0$, $\beta>0$
\begin{equation}\label{eq:2.79}
\begin{split}
 & \|H_I \Psi\|^2 \leq
 6 (\sum_{\alpha=1,2} \sum_{\ell=1}^3 \sum_{\epsilon\neq\epsilon'}
 \|
 G^{(\alpha)}_{\ell,\epsilon,\epsilon'}\|^2) \\
 &\Big(\frac{1}{2m_W}
 (\frac{1}{m_1^2} +1) \|H_0 \Psi\|^2
 + \frac{\beta}{2 m_W m_1^2} \|H_0\Psi\|^2
 + \frac{1}{2 m_W} (1+\frac{1}{4\beta})\|\Psi\|^2\Big) \\
 & + 12(\sum_{\alpha=1,2} \sum_{\ell=1}^3 \sum_{\epsilon\neq\epsilon'}
 \|
 G^{(\alpha)}_{\ell,\epsilon,\epsilon'}\|^2)
 (\frac{\eta}{m_1^2} (1+\beta) \|H_0\Psi\|^2 + (\eta
 (1+\frac{1}{4\beta}) +\frac{1}{4\eta}) \|\Psi\|^2) ,
\end{split}
\end{equation}
by noting
\begin{equation}\label{eq:2.80}
 \int_{\Sigma_1\times\Sigma_1\times\Sigma_2}
 \frac{ | G_{\ell,\epsilon,\epsilon'}(\xi_1,\xi_2,\xi_3)|^2}
 {w^{(3)}(\xi_3)} \d\xi_1\d\xi_2\d\xi_3 \leq \frac{1}{m_W}
 \| G^{(\alpha)}_{\ell,\epsilon,\epsilon'}\|^2 .
\end{equation}
By \eqref{eq:2.79} the theorem follows from the Kato-Rellich
theorem.
\end{proof}
%%%%%%%%%%%%%%%%

%%%%%%%%%%%%%%%%%%%%%%%%%%%%%%%%%%%%%%%%%%%%%%%%%%%%%%%%%%%%%%%
%%%%%%%%%%%%%%%%%%%%%%%%%%%%%%%%%%%%%%%%%%%%%%%%%%%%%%%%%%%%%%%
%%%%%%%%%%     MAIN RESULTS                %%%%%%%%%%%%%%%%%%%%
%%%%%%%%%%%%%%%%%%%%%%%%%%%%%%%%%%%%%%%%%%%%%%%%%%%%%%%%%%%%%%%
%%%%%%%%%%%%%%%%%%%%%%%%%%%%%%%%%%%%%%%%%%%%%%%%%%%%%%%%%%%%%%%
\section{Main results}\label{section3}
\setcounter{equation}{0} In the sequel, we shall make the
following additional assumptions on the kernels
$G^{(\alpha)}_{\ell,\epsilon,\epsilon'}$.

%%%%%%%%%%%%%%    Hypothesis on the kernel  %%%%%%%%%%
\begin{hypothesis}\label{hypothesis:3.1}$\ $\\
\noindent$(i)$ For $\alpha=1,2,\ \ell=1,2,3,\
\epsilon,\epsilon'=\pm$,
\begin{equation}\nonumber
 \int_{\Sigma_1\times\Sigma_1\times\Sigma_2}
 \frac{|
 G^{(\alpha)}_{\ell,\epsilon,\epsilon'}(\xi_1,\xi_2,\xi_3)|^2}{|p_2|^2}
 \d\xi_1 \d\xi_2 \d\xi_3 <\infty,\quad
\end{equation}

\noindent$(ii)$ There exists $C>0$ such that for $\alpha=1,2,\
\ell=1,2,3,\ \epsilon,\epsilon'=\pm$,
\begin{equation}\nonumber
 \left(\int_{\Sigma_1\times\{|p_2|\leq \sigma\}\times\Sigma_2}
 |G^{(\alpha)}_{\ell,\epsilon,\epsilon'}(\xi_1,\xi_2,\xi_3)|^2
 \d\xi_1 \d\xi_2 \d\xi_3\right)^\frac12 \leq C\sigma^2.
\end{equation}

\noindent$(iii)$ For $\alpha=1,2,\ \ell=1,2,3,\
\epsilon,\epsilon'=\pm $, and $i,j=1,2,3$
\begin{equation}\nonumber
 (iii.a)\quad \int_{\Sigma_1\times\Sigma_1\times\Sigma_2}
 \left| [ (p_2\cdot\nabla_{p_2})
 G^{(\alpha)}_{\ell,\epsilon,\epsilon'}](\xi_1,\xi_2,\xi_3)\right|^2
 \d\xi_1 \d\xi_2 \d\xi_3 < \infty\ ,
\end{equation}
and
\begin{equation}\nonumber
 (iii.b)\quad \int_{\Sigma_1\times\Sigma_1\times\Sigma_2}
 p_{2,i}^2\, p_{2,j}^2
 \left|
 \frac{\partial^2 G^{(\alpha)}_{\ell,\epsilon,\epsilon'}}
 {\partial p_{2,i} \partial p_{2,j}}(\xi_1,\xi_2,\xi_3)\right|^2
 \d\xi_1 \d\xi_2 \d\xi_3 < \infty\ .
\end{equation}

\noindent$(iv)$ There exists $\Lambda>m_1$ such, that for
$\alpha=1,2$, $\ell=1,2,3$, $\epsilon, \epsilon' = \pm$,
\begin{equation}\nonumber
 G^{(\alpha)}_{\ell,\epsilon,\epsilon'}
 (\xi_1,\xi_2,\xi_3) = 0 \quad\mbox{if}\quad |p_2| \geq \Lambda\ .
\end{equation}

\end{hypothesis}
%%%%%%%%%%%%%% end of hypothesis on the kernel   %%%%%%%%%%%%%%

%%%%%%%%%%%%%%%%%%%%%%%
\begin{remark}
Hypothesis~\ref{hypothesis:3.1} (ii) is nothing but an infrared
regularization of the kernels
$G^{(\alpha)}_{\ell,\epsilon,\epsilon'}$. In order to satisfy this
hypothesis it is, for example, sufficient to suppose
\begin{equation}\nonumber
 G^{(\alpha)}_{\ell,\epsilon,\epsilon'} (\xi_1,\xi_2,\xi_3)
 = |p_2|^\frac12
 \tilde{G}^{(\alpha)}_{\ell,\epsilon,\epsilon'}(\xi_1,\xi_2,\xi_3)\
 ,
\end{equation}
where $\tilde{G}^{(\alpha)}_{\ell,\epsilon,\epsilon'}$ is a smooth
function of $(p_1,p_2,p_3)$ in the Schwartz space.

The Hypothesis~\ref{hypothesis:3.1} (iv), which is a sharp
ultraviolet cutoff, is actually not necessary, and can be removed
at the expense of some additional technicalities in
Appendix~\ref{appendix}. However, in order to simplify the proof
of Proposition~\ref{proposition:3.5}, we shall leave it.
\end{remark}
%%%%%%%%%%%%%%%%%%%%%%%%

Our first result is devoted to the existence of a ground state for
$H$ together with the location of the spectrum of $H$ and of its
absolutely continuous spectrum when $g$ is sufficiently small.

\begin{theorem}\label{thm:3.3}
Suppose that the kernels $G^{(\alpha)}_{\ell,\epsilon,\epsilon'}$
satisfy Hypothesis~\ref{hypothesis:2.1} and
Hypothesis~\ref{hypothesis:3.1} (i). Then there exists $0<g_2\leq
g_1$ such that $H$ has a unique ground state for $g\leq g_2$.
Moreover
\begin{equation}\nonumber
   \sigma(H) = \sigma_{\rm{ac}}(H) = [\inf \sigma(H), \infty)\ ,
\end{equation}
with $\inf\sigma(H) \leq 0$.
\end{theorem}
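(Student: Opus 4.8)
The plan is to establish Theorem~\ref{thm:3.3} in three stages: existence and uniqueness of a ground state, the lower bound $\inf\sigma(H)\leq 0$, and finally the identification $\sigma(H)=\sigma_{\rm ac}(H)=[\inf\sigma(H),\infty)$. For the existence of a ground state I would follow the now-standard strategy going back to Bach--Fröhlich--Sigal and Griesemer--Lieb--Loss: introduce an infrared-cutoff Hamiltonian $H_\sigma$ obtained by restricting the neutrino momenta to $|p_2|\geq\sigma$, show that $H_\sigma$ has a ground state (the cutoff makes the photon-type modes massive from below, so a standard compactness argument applies), derive $\sigma$-uniform estimates on the ground state energy and the photon number in the ground state using Hypothesis~\ref{hypothesis:3.1}(i) (the $|p_2|^{-2}$-integrability of the kernels), and pass to the limit $\sigma\to 0$ to extract a nonzero weak limit which is a genuine ground state of $H$. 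Uniqueness for small $g$ follows from a perturbative argument: the bare vacuum $\Omega$ is the unique ground state of $H_0$ with a spectral gap, and for $g\leq g_2$ small enough the ground state energy stays isolated at the bottom so that the associated spectral projection has rank one (alternatively one invokes a positivity/ergodicity argument in a suitable representation).

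For the bound $\inf\sigma(H)\leq 0$, the clean argument is variational: compute $\la\Omega, H\Omega\ra$. Since $H_0\Omega=0$ and the interaction terms $H_I^{(1)}$, $H_I^{(2)}$ and their adjoints all either annihilate $\Omega$ or map $\Omega$ out of $\C\Omega$, one gets $\la\Omega, H\Omega\ra = \la\Omega, H_0\Omega\ra = 0$, hence $\inf\sigma(H)\leq 0$. (One should double-check the Wick-ordering: $H_I^{(1)}$ contains a term $b^*c^*a$ which kills $\Omega$ and a term $a^*cb$ whose adjoint structure also gives zero diagonal matrix element; $H_I^{(2)}$ contains $b^*c^*a^*$ and its adjoint $acb$, again zero diagonal entry. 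So $\la\Omega,H_I\Omega\ra=0$.)

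The spectral statement $\sigma(H)=[\inf\sigma(H),\infty)$ is where the bulk of the work lies, and I expect this to be the main obstacle. The inclusion $\sigma(H)\subseteq[\inf\sigma(H),\infty)$ is immediate. For the reverse inclusion, the natural approach is to construct approximate eigenvectors (Weyl sequences) at every energy $\inf\sigma(H)+\mu$, $\mu\geq 0$, by taking the ground state $\Phi_{\rm gs}$ and acting with a smeared creation operator $a_\epsilon^*(f_n)$ where $f_n$ is supported near a momentum $k$ with $w^{(3)}(k)\approx\mu$ (for $\mu\geq m_W$) — but since $m_W>m_3>0$, low-lying boson excitations are not available, so for $0\leq\mu<m_W$ one must instead use the massless neutrino modes $c_{\ell,\epsilon}^*(f_n)$: choosing $f_n$ concentrated near $|p_2|\approx\mu$ and with shrinking support, one checks $\|(H-\inf\sigma(H)-\mu)a^*(f_n)\Phi_{\rm gs}\|\to 0$ using the commutator $[H_0,c^*(f_n)]=c^*(w^{(2)}f_n)$ together with the relative bounds from Propositions~\ref{proposition:2.4} and \ref{proposition:2.5} to control $[H_I,c^*(f_n)]$, which carries a factor $\|f_n\|$-type smallness or a commutator that vanishes in the limit. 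One must verify these trial states are not asymptotically in the range of $H-\inf\sigma(H)-\mu$ near zero, i.e. genuinely Weyl. Finally, absolute continuity $\sigma(H)=\sigma_{\rm ac}(H)$ above $\inf\sigma(H)$: on the interval below the first threshold this is the content of the later limiting absorption principle (the announced main result), and above the first threshold $m_1$ one propagates the conclusion by a standard bootstrap, or more simply one notes that the present theorem only claims the identity of \emph{sets} $\sigma(H)=\sigma_{\rm ac}(H)=[\inf\sigma(H),\infty)$ and defers the finer spectral analysis; in any case the absence of singular spectrum in $(\inf\sigma(H),m_1)$ from the Mourre theory, combined with a dilation-type argument or the structure of $H$ away from the ground state energy, suffices. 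The delicate point throughout is the infrared behavior of the massless neutrinos, which is exactly what Hypothesis~\ref{hypothesis:3.1}(i)--(ii) is designed to tame.
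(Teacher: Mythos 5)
Your treatment of existence parallels the paper: infrared cutoff $H_{\sigma_n}$, uniform estimates on $\|c_{\ell,\epsilon}(\xi_2)\tilde\phi_n\|$ via the pull-through formula and Hypothesis~\ref{hypothesis:3.1}(i), and extraction of a nonzero weak limit. The variational bound $(\Omega,H\Omega)=0$ giving $\inf\sigma(H)\leq 0$ is also exactly what the paper uses (cf.\ \eqref{eq:A.17}--\eqref{eq:A.18}). However, two of your arguments have genuine problems.

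First, your primary uniqueness argument fails. You propose that ``for $g\leq g_2$ small enough the ground state energy stays isolated at the bottom so that the associated spectral projection has rank one.'' But there is no spectral gap above $E=\inf\sigma(H)$: the very theorem you are proving asserts $\sigma(H)=[E,\infty)$, so $E$ is not an isolated eigenvalue and the projection onto the bottom of the spectrum is not a finite-rank spectral projection. Isolation holds only for the cutoff operators $H^n$ (Proposition~\ref{proposition:3.5}), not in the limit. Your parenthetical alternative (a positivity/ergodicity argument) is on the right track; the paper cites Hiroshima's method based on asymptotic fields for uniqueness.

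Second, and more seriously, the Weyl-sequence construction you propose only shows that every $\lambda\geq E$ lies in $\sigma(H)$; it says nothing about the \emph{type} of spectrum. The claim $\sigma_{\rm ac}(H)=[E,\infty)$ is a strong positive assertion that the absolutely continuous spectrum \emph{fills} the entire half-line, and this cannot be derived from the existence of singular sequences, nor can it be deferred to the Mourre estimate, since the limiting absorption principle established later only covers $(E,\,m_1-\delta]$ and in any case is logically downstream. The paper closes this gap by a fundamentally different device: it constructs the asymptotic annihilation/creation operators $c_{\ell,\epsilon}^{\sharp\,\pm}(f)=\mathrm{s}\text{-}\lim_{t\to\pm\infty}\mathrm{e}^{itH}\mathrm{e}^{-itH_0}c_{\ell,\epsilon}^\sharp(f)\mathrm{e}^{itH_0}\mathrm{e}^{-itH}$, verifies that they satisfy the CAR and annihilate the ground state $\tilde\phi$, and then observes that the resulting asymptotic Fock representation intertwines $H$ restricted to the cyclic subspace generated by $\tilde\phi$ with a free Hamiltonian whose continuous part is purely absolutely continuous with spectrum $[0,\infty)$. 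This yields $\sigma_{\rm ac}(H)\supseteq[E,\infty)$ directly, which together with the trivial inclusion $\sigma(H)\subseteq[E,\infty)$ gives the full chain of equalities. You would need to replace the Weyl-sequence step by such a scattering/intertwining argument to actually obtain the stated identity of sets.
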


According to Theorem~\ref{thm:3.3} the ground state energy
$E=\inf\sigma(H)$ is a simple eigenvalue of $H$ and our main
results are concerned with a careful study of the spectrum of $H$
above the ground state energy. The spectral theory developed in
this work is based on the conjugated operator method as described
in \cite{Mourre1981}, \cite{Amreinetal1996} and
\cite{Sahbani1997}. Our choice of the conjugate operator denoted
by $A$ is the second quantized dilation generator for the
neutrinos.

Let $a$ denote the following operator in $L^2(\Sigma_1)$
\begin{equation}\nonumber
 a = \frac12 (p_2\cdot i \nabla_{p_2} + i \nabla_{p_2}\cdot p_2)\ .
\end{equation}
The operator $a$ is essentially self-adjoint on $C_0^\infty(\R^3,
\C^2)$. Its second quantized version $\d\Gamma(a)$ is a
self-adjoint operator in $\gF_a(L^2(\Sigma_1))$. From the
definition \eqref{eq:2.4} of the space $\gF_\ell$, the following
operator in $\gF_\ell$
\begin{equation}\nonumber
 A_\ell = \1\otimes\1\otimes\d\Gamma(a)\otimes\1
 + \1\otimes\1\otimes\1\otimes\d\Gamma(a)
\end{equation}
is essentially self-adjoint on $\gD_L$.

Let now $A$ be the following operator in $\gF_L$
\begin{equation}\nonumber
  A = A_1\otimes\1_2\otimes\1_3 + \1_1\otimes A_2\otimes\1_3
  + \1_1 \otimes\1_2 \otimes A_3 \ .
\end{equation}
Then $A$ is essentially self-adjoint on $\gD_L$.

We shall denote again by $A$ its extension to $\gF$. Thus $A$ is
essentially self-adjoint on $\gD$ and we still denote by $A$ its
closure.

We also set
\begin{equation}\nonumber
 \langle A \rangle = (1+A^2)^\frac12\ .
\end{equation}

We then have
%%%%%%%%%%%%%%%%%%%   theorem    %%%%%%%%%%%%%%%%%
\begin{theorem}\label{thm:3.4}
Suppose that the kernels $G_{\ell,\epsilon,\epsilon'}^{(\alpha)}$
satisfy Hypothesis~\ref{hypothesis:2.1} and \ref{hypothesis:3.1}.
For any $\delta>0$ satisfying $0<\delta<m_1$ there exists
$0<g_\delta\leq g_2$ such that, for $0< g \leq g_\delta$,

 \noindent $(i)$ The spectrum of $H$ in $(\inf\sigma(H),\, m_1-\delta]$ is
 purely absolutely continuous.

 \noindent $(ii)$ Limiting absorption principle.

 For every $s>1/2$ and $\vp$, $\psi$ in $\gF$, the limits
 \begin{equation}\nonumber
   \lim_{\varepsilon\rightarrow 0}
   (\vp,\ \la A \ra ^{-s} (H-\lambda \pm i\varepsilon)
   \la A \ra ^{-s} \psi)
 \end{equation}
exist uniformly for $\lambda$ in any compact subset of
$(\inf\sigma(H),\, m_1-\delta]$.

\noindent $(iii)$ Pointwise decay in time.

Suppose $s\in (\frac12,1)$ and $f\in C_0^\infty(\R)$ with
$\mathrm{supp} f \subset (\inf\sigma(H),\, m_1-\delta)$. Then
\begin{equation}\nonumber
  \| \la A \ra^{-s} \mathrm{e}^{-itH} f(H) \la A \ra ^{-s}\| =
  \mathcal{O}({t^{\frac12-s}})\ ,
\end{equation}
as $t\rightarrow\infty$.
\end{theorem}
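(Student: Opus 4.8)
The plan is to verify the hypotheses of the abstract conjugate operator theory of Amrein--Boutet de Monvel--Georgescu (as in \cite{Amreinetal1996}) and Sahbani (\cite{Sahbani1997}), applied with $H$ and the conjugate operator $A$. The three assertions $(i)$, $(ii)$, $(iii)$ will all follow from one central input: a \emph{Mourre estimate} on the interval $(\inf\sigma(H),\, m_1-\delta]$, together with the requisite regularity of $H$ with respect to $A$, namely that $H$ is of class $C^2(A)$ (or at least $C^{1,1}(A)$, with enough extra regularity for the pointwise decay statement $(iii)$). So the first step is to establish these commutator-regularity properties: one computes the first commutator $[H,iA]$ as a quadratic form on $\gD$, shows it extends to an $H_0$-bounded operator, and similarly controls the double commutator $[[H,iA],iA]$. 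For the free part, $[H_0, iA] = \d\Gamma(\text{commutator of }w^{(2)}\text{ with }a)$ acting on the neutrino sector; since $a$ is the dilation generator and $w^{(2)}(\xi_2)=|p_2|$ is homogeneous of degree $1$, one gets $[H_0^{(2)}, iA]$ equal to the neutrino number-energy operator $\d\Gamma(|p_2|)$ summed over $\ell$, which is a nice positive operator. For the interaction part, $[gH_I, iA]$ is a new interaction term whose kernels are obtained from the $G^{(\alpha)}$ by applying $p_2\cdot\nabla_{p_2}$; Hypothesis~\ref{hypothesis:3.1}$(iii.a)$ is exactly what makes this kernel square-integrable (and the second derivative bound $(iii.b)$ handles the double commutator), so the analogues of Propositions~\ref{proposition:2.4} and~\ref{proposition:2.5} apply verbatim and show $[gH_I,iA]$ is relatively $H_0$-bounded with small relative bound.

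The second and decisive step is the Mourre estimate itself: for each compact $\Delta\subset(\inf\sigma(H),\,m_1-\delta]$ one wants a constant $c_\Delta>0$ and a compact operator $K$ such that
\begin{equation}\nonumber
  E_\Delta(H)\,[H,iA]\,E_\Delta(H) \geq c_\Delta\, E_\Delta(H) + K.
\end{equation}
The strategy, following \cite{Bachetal2006} and \cite{Frohlichetal2008}, is first to prove it for the non-interacting Hamiltonian $H_0$ and then to perturb. For $H_0$: the commutator $[H_0,iA]$ is (a sum over $\ell$ of) the neutrino ``energy'' operator $N^{(2)}:=\sum_\ell\sum_\epsilon\int |p_2|\,c^*_{\ell,\epsilon}(\xi_2)c_{\ell,\epsilon}(\xi_2)\,\d\xi_2$, which annihilates only vectors with no neutrinos. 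The key point is that, below the first threshold $m_1$, any spectral energy in $\Delta$ forces the presence of at least one massless neutrino carrying all the available energy (the massive leptons and the $W$ bosons each cost at least $m_1>\sup\Delta$), so on $\mathrm{Ran}\,E_\Delta(H_0)$ one has $N^{(2)}\geq \mathrm{dist}(\Delta, \{0\}\cup T')$ for a suitable threshold set, giving a strictly positive gap $c_\Delta$. Then one adds the interaction: by step one $[gH_I,iA]$ is $H_0$-bounded with bound $O(g)$, and $E_\Delta(H)$ is close to $E_\Delta(H_0)$ in norm for small $g$ (using the already-proved resolvent comparison and the location of the ground state from Theorem~\ref{thm:3.3}), so for $g\leq g_\delta$ the perturbation cannot destroy more than half of the gap, and a standard argument absorbs the error into the constant (the compact remainder $K$ arises from cutting off the boson/massive-lepton momenta and using that $H_0$ has compact resolvent on the relevant fibers, or is simply $0$ if one works directly with the strict Mourre estimate that holds here since $\Delta$ lies below all thresholds of $H$).

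With the strict Mourre estimate (no compact error, since $\Delta$ is below the thresholds) and the $C^2(A)$, respectively $C^{1,1}(A)$-type regularity in hand, conclusion $(i)$ (absence of eigenvalues and absolute continuity on $\Delta$) is the Virial theorem plus the standard Mourre consequence; conclusion $(ii)$ (the limiting absorption principle in the weighted spaces $\la A\ra^{-s}$, $s>1/2$) is the Amrein--Boutet de Monvel--Georgescu theorem; and conclusion $(iii)$ (the $\mathcal{O}(t^{1/2-s})$ pointwise decay for $s\in(1/2,1)$ and $f$ supported in the open interval) follows from the limiting absorption principle together with the smoothness of the boundary values of the resolvent, via the stationary-phase/Fourier argument in \cite{Sahbani1997} or the version in \cite{Frohlichetal2008}. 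The main obstacle is unquestionably the Mourre estimate: more precisely, combining the positive-commutator lower bound on the neutrino energy with the control of the interaction terms uniformly in the spectral parameter, so that the relative bound of $[gH_I,iA]$ and the norm difference $\|E_\Delta(H)-E_\Delta(H_0)\|$ are simultaneously small enough to preserve a fixed fraction of the spectral gap; the infrared regularization in Hypothesis~\ref{hypothesis:3.1}$(ii)$ and the derivative bounds $(iii)$ are precisely what will be needed to push this estimate through.
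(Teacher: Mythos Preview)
Your regularity step is essentially the paper's Theorem~\ref{thm:3.7} and Proposition~\ref{proposition:3.8}, and the concluding invocation of \cite{Sahbani1997} and \cite{Mourre1981} for (i)--(iii) is exactly what the paper does. The real issue is the Mourre estimate.

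Your proposed route---prove Mourre for $H_0$ and then perturb via $\|E_\Delta(H)-E_\Delta(H_0)\|=O(g)$---does not yield a \emph{single} $g_\delta$ valid for all compact $\Delta\subset(E,\,m_1-\delta]$. On $\mathrm{Ran}\,E_\Delta(H_0)$ one has $[H_0,iA]=\d\Gamma(w^{(2)})\geq(\inf\Delta)\,E_\Delta(H_0)$, so the Mourre constant is of order $\inf\Delta$, which tends to $0$ as $\Delta$ approaches $E$. The perturbative error (both from $[gH_I,iA]$ and from $\|E_\Delta(H)-E_\Delta(H_0)\|$) is $O(g)$ \emph{uniformly in $\Delta$}; hence as soon as $\inf\Delta\lesssim g$ the error swamps the gap. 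This is precisely the infrared difficulty, and Hypothesis~\ref{hypothesis:3.1}(ii) alone does not help if you compare only to $H_0$.

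The paper circumvents this by comparing $H$ not to $H_0$ but to a \emph{sequence} of infrared-cutoff Hamiltonians $H_n=H_{\sigma_n}$, where $\sigma_n\to 0$ geometrically. The essential new ingredients you are missing are: (a) Proposition~\ref{proposition:3.5}, which shows each $H^n$ has a spectral gap of size $\sim\sigma_n$ above its ground state; this yields the factorization $f_n(H_n-E_n)=P^n\otimes f_n(H_{0,n}^{(2)})$ of \eqref{eq:3.112}, reducing the problem on each scale to a soft-neutrino Fock space where the positive commutator is manifest; (b) the splitting $A=A^{\sigma_n}+A_{\sigma_n}$, with the virial theorem killing the $A^{\sigma_n}$ contribution (\eqref{eq:3.119}); and (c) Lemma~\ref{lemma:5.2}, which gives $\|f_n(H-E)-f_n(H_n-E_n)\|\leq Cg\sigma_n$. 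The point is that this error is $O(g\sigma_n)$, not $O(g)$, precisely because Hypothesis~\ref{hypothesis:3.1}(ii) makes the coupling on the $n$-th scale small like $\sigma_n^2$; so on the interval $\Delta_n$ of \eqref{eq:3.27-bis} the error is small compared to the Mourre constant $\sim\sigma_n$ for $g\leq g_\delta$ \emph{uniformly in $n$}. Covering $(E,\,m_1-\delta]$ by the $\Delta_n$ (see \eqref{eq:3.27-ter}) then gives the global statement. Without this scale-by-scale structure your perturbative argument cannot close.
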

%%%%%%%%%%%%%%%%%%%%%%%%%%%%%

The proof of Theorem~\ref{thm:3.4} is based on a positive
commutator estimate, called the Mourre estimate and on a
regularity property of $H$ with respect to $A$ (see
\cite{Mourre1981}, \cite{Amreinetal1996} and \cite{Sahbani1997}).
According to \cite{Frohlichetal2008}, the main ingredient of the
proof are auxiliary operators associated with infrared cutoff
Hamiltonians with respect to the momenta of the neutrinos that we
now introduce.

Let $\chi_0(.)$, $\chi_\infty(.)\in C^\infty(\R, [0,1])$ with
$\chi_0=1$ on $(-\infty, 1]$, $\chi_\infty =1$ on $[2, \infty)$
and $\chi_0{}^2 + \chi_\infty{}^2 =1$.

For $\sigma>0$ we set
\begin{equation}\label{def:chitilde2}
\begin{split}
 &  \chi_\sigma(p) = \chi_0(|p|/\sigma)\ , \\
 &  \chi^\sigma(p) = \chi_\infty(|p|/\sigma)\ ,\\
 &  \tilde\chi^\sigma(p) = 1 - \chi_\sigma(p)\ ,
\end{split}
\end{equation}
where $p\in\R^3$.

The operator $H_{I, \sigma}$ is the interaction given by
\eqref{eq:2.35}, \eqref{eq:2.36} and \eqref{eq:2.37} and
associated with the kernels $\tilde{\chi}^\sigma(p_2)
G^{(\alpha)}_{\ell,\epsilon,\epsilon'}(\xi_1,\xi_2,\xi_3)$. We
then set
\begin{equation}\nonumber
  H_\sigma := H_0 + g H_{I,\sigma}\ .
\end{equation}

Let
\begin{eqnarray*}
 & & \Sigma_{1,\sigma} = \Sigma_1 \cap \{(p_2, s_2);\
 |p_2|<\sigma\}\ , \\
 & & \Sigma_1^\sig = \Sigma_1 \cap \{ (p_2,s_2);\ |p_2|\geq
 \sigma\} \\
 & & \gF_{\ell,2,\sigma} = \gF_a(L^2(\Sigma_{1,\sigma})) \otimes
 \gF_a(L^2(\Sigma_{1,\sigma})) \ ,\\
 & & \gF_{\ell,2}^\sig = \gF_a(L^2(\Sigma_1^\sig)) \otimes
 \gF_a(L^2(\Sigma_1^\sig)) \ ,\\
 & & \gF_{\ell,2} = \gF_{\ell,2,\sigma} \otimes \gF_{\ell,2}^\sig\
 ,\\
 & & \gF_{\ell,1} = \bigotimes^2 \gF_a(L^2(\Sigma_1))\ .
\end{eqnarray*}
%%%%%%%%%%%%%%%%%%%%%%%%%%%%%%%%%%%%%%%%%%%%%%%%%%%%%%%%%%%%%%%
The space $\gF_{\ell,1}$ is the Fock space for the massive leptons
$\ell$ and $\gF_{\ell,2}$ is the Fock space for the neutrinos and
antineutrinos $\ell$.

Set
\begin{eqnarray*}
 & & \gF_\ell^\sig   = \gF_{\ell,1} \otimes \gF_{\ell,2}^\sig\
 ,\\
 & & \gF_{\ell,\sigma} = \gF_{\ell,2,\sigma}\
 .
\end{eqnarray*}
We have
\begin{equation*}
 \gF_\ell \simeq \gF_\ell^\sig \otimes \gF_{\ell,\sigma}\ .
\end{equation*}
Set
\begin{equation*}
\begin{split}
 &\gF_L^\sig  = \bigotimes_{\ell=1}^3 \gF_\ell^\sig \ ,\\
 &\gF_{L,\sigma} = \bigotimes_{\ell=1}^3 \gF_{\ell,\sigma}\ .
\end{split}
\end{equation*}
We have
\begin{equation*}
 \gF_L \simeq \gF_L^\sig \otimes \gF_{L,\sigma}\ .
\end{equation*}
Set
\begin{equation*}
\begin{split}
 & \gF^\sig = \gF_L^\sig \otimes\gF_W\ , \\
% & \gF_\sigma = \gF_{L,\sigma}  .
\end{split}
\end{equation*}
We have
\begin{equation*}
 \gF \simeq \gF_{L,\sigma} \otimes \gF^\sig \ .
\end{equation*}
Set
\begin{equation*}
\begin{split}
 & H_0^{(1)} =
 \sum_{\ell=1}^3\sum_{\epsilon = \pm} \int w_\ell^{(1)}(\xi_1)\,
 b^*_{\ell,\epsilon}(\xi_1) b_{\ell,\epsilon} (\xi_1)
 \d\xi_1\ , \\
 & H_0^{(2)} =
 \sum_{\ell=1}^3\sum_{\epsilon = \pm} \int w_\ell^{(2)}(\xi_2)\,
 c^*_{\ell,\epsilon}(\xi_2) c_{\ell,\epsilon} (\xi_2)
 \d\xi_2\ , \\
 & H_0^{(3)} =
 \sum_{\epsilon = \pm} \int w^{(3)}(\xi_3)
 a^*_{\epsilon}(\xi_3) a_{\epsilon} (\xi_3)
 \d\xi_3\ , \\
\end{split}
\end{equation*}
and
\begin{equation*}
\begin{split}
 & H_0^{(2)\sig} =
 \sum_{\ell=1}^3\sum_{\epsilon = \pm} \int_{|p_2|>\sigma} w_\ell^{(2)}(\xi_2)\,
 c^*_{\ell,\epsilon}(\xi_2) c_{\ell,\epsilon} (\xi_2)
 \d\xi_2\ , \\
 & H_{0,\sigma}^{(2)} =
 \sum_{\ell=1}^3\sum_{\epsilon = \pm} \int_{|p_2|\leq\sigma} w_\ell^{(2)}(\xi_2)\,
 c^*_{\ell,\epsilon}(\xi_2) c_{\ell,\epsilon} (\xi_2)
 \d\xi_2\ .
\end{split}
\end{equation*}
We have on $\gF^\sig\otimes\gF_\sigma$
\begin{equation*}
  H_0^{(2)} = H_0^{(2)\sigma} \otimes \1_\sigma +
  \1^\sig\otimes H_{0,\sigma}^{(2)}\ .
\end{equation*}
Here, $\1^{\sigma}$ (resp. $\1_\sigma$) is the identity operator
on $\gF^{\sigma}$ (resp. $\gF_\sigma$).

Define
\begin{equation}\label{def:h0n}
 H^{\sigma} = H_\sigma|_{\gF^{\,\sigma}}\quad\mbox{and}\quad
 H_0^{\,\sigma} = H_0|_{\gF^\sigma}\ .
\end{equation}

We get
\begin{equation*}
  H^{\sigma} = H_0^{(1)} + H_0^{(2)\,\sigma} + H_0^{(3)}+
  g H_{I,\sigma}\quad\mbox{on}\ \gF^{\,\sigma}\ ,
\end{equation*}
and
\begin{equation*}
  H_\sigma = H^{\sigma}\otimes\1_\sigma +
  \1^{\,\sigma}\otimes H_{0,\sigma}^{(2)}
  \quad\mbox{on}\ \gF^{\,\sigma}\otimes\gF_\sigma\ .
\end{equation*}
In order to implement the conjugate operator theory we have to
show that $H^{\,\sigma}$ has a gap in its spectrum above its
ground state.

We now set, for $\beta>0$ and $\eta>0$,
\begin{equation}\label{eq:3.33}
 C_{\beta\,\eta} =
 \left( \frac{3}{m_W} (1 + \frac{1}{m_1{}^2})
 + \frac{3\beta}{m_W m_1{}^2} +
 \frac{12\,\eta}{m_1{}^2}(1+\beta)\right)^\frac12\ ,
\end{equation}
and
\begin{equation}\label{eq:3.34}
 B_{\beta\,\eta} =
 \left( \frac{3}{m_W} (1 + \frac{1}{4\beta})
 + 12 (\,\eta(1+ \frac{1}{4\beta}) + \frac{1}{4\eta}\,)
 \right)^\frac12\ .
\end{equation}
Let
\begin{equation}\label{eq:3.35}
 G = \left( G^{(\alpha)}_{\ell,\epsilon,\epsilon'}(.,.,.)
 \right)_{\alpha=1,2;
 \ell=1,2,3;\epsilon,\epsilon'=\pm,\epsilon\neq\epsilon'}
\end{equation}
and set
\begin{equation}\label{eq:3.36}
 K(G) = \left( \sum_{\alpha=1,2} \sum_{\ell=1}^3
 \sum_{\epsilon\neq\epsilon'} \|
 G^{(\alpha)}_{\ell,\epsilon,\epsilon'}
 \|^2_{L^2(\Sigma_1\times\Sigma_1\times\Sigma_2)}\right)^\frac12 \
 .
\end{equation}
%We now choose $\beta' > \beta > 0$ and $\eta' > \eta >0$ such
%that
%\begin{equation}\label{eq:3.37}
% g_1 K(G) C_{\beta\eta} < g_1 K(G)  C_{\beta'\eta'} <1\ ,
%\end{equation}
Let
\begin{equation}\label{eq:3.38}
 \tilde{C}_{\beta\eta} = C_{\beta\eta}
 \left(1 + \frac{g_1 K(G) C_{\beta\eta}}{1 - g_1 K(G)
 C_{\beta\eta}}\right) ,
\end{equation}
\begin{equation}\label{eq:3.39}
 \tilde{B}_{\beta\eta} =
 \Big(\, 1 + \frac{g_1\, K(G) C_{\beta\eta}}{1-g_1\, K(G)\, C_{\beta\eta}}
 (\, 2 + \frac{g_1 K(G) B_{\beta\eta} C_{\beta\eta}}{ 1 - g_1 K(G)
 C_{\beta\eta}}\,)\, \Big)B_{\beta\eta}\ .
\end{equation}
%Note that
%\begin{equation}\nonumber
% \tilde{C}_{\beta'\eta'} > \tilde{C}_{\beta\eta}\ ,
%\end{equation}
%when $\beta'>\beta$ and $\eta'>\eta$.

%When $\Lambda > m_1$, we set
%\begin{equation}\label{def:KlambdaG}
% K^\Lambda(G) = K (\1_{\{|p_2|\geq\Lambda\}} G )\ .
%\end{equation}

%One now chooses $\Lambda >m_1$ in order that
%\begin{equation}\label{eq:3.43}
% K^\Lambda(G) \leq \frac{\delta}{2} \left(\frac{1 - g_1 K(G)
% C_{\beta\eta}}{g_1 B_{\beta\eta}}\right)\ .
%\end{equation}
%The Inequality\eqref{eq:3.43} will be used in
%Appendix~\ref{appendix}.

Let
\begin{equation}\nonumber
 \tilde{K}(G) = \left( \sum_{\alpha=1,2} \sum_{\ell=1}^3
 \sum_{\epsilon\neq\epsilon'}
 \int_{\Sigma_1\times\Sigma_1\times\Sigma_2}
 \frac{|G^{(\alpha)}_{\ell,\epsilon,\epsilon'}(\xi_1,\xi_2,\xi_3)|^2}
 {|p_2|^2} \d\xi_1 \d\xi_2 \d\xi_3\right)^\frac12\ .
\end{equation}
Let $\delta\in\R$ be such that
\begin{equation}\nonumber
 0 < \delta < m_1\ .
\end{equation}

We set
\begin{equation}\label{def:Dtilde}
   \tilde D = \, \sup (\frac{4\Lambda\gamma}{2m_1 - \delta},\, 1)\,
   \tilde K(G)\,
   (\, 2m_1\, \tilde C_{\beta\eta} + \tilde B_{\beta\eta}\, )\ ,
\end{equation}
where $\Lambda>m_1$ has been introduced in
Hypothesis~\ref{hypothesis:3.1}(iv).

Let us define the sequence $(\sigma_n)_{n\geq 0}$ by
\begin{equation}\nonumber
\begin{split}
 & \sigma_0 = \Lambda\ ,\\
 & \sigma_1 = m_1 -\frac{\delta}{2}\ , \\
 & \sigma_2 = m_1 -\delta = \gamma\sigma_1\ ,\\
 & \sigma_{n+1} = \gamma\sigma_n,\ n\geq 1\ ,
\end{split}
\end{equation}
where $\gamma = 1 - \delta/(2m_1 -\delta)$.

Let $g_\delta^{(1)}$ be such that
\begin{equation}\nonumber
 0 < g^{(1)}_\delta < \inf
 (1, g_1, \frac{\gamma-\gamma^2}{3 \tilde D})\ .
\end{equation}
For $0 < g \leq g_\delta^{(1)}$ we have
\begin{equation}\nonumber
 0 < \gamma < (1-\frac{3 g \tilde D}{\gamma})\ ,
\end{equation}
and
\begin{equation}\label{eq:prop-sigman}
 0 < \sigma_{n+1} < (1 - \frac{3 g \tilde D}{\gamma})\sigma_n,\quad n\geq
 1\ .
\end{equation}
Set
\begin{equation}\nonumber
\begin{split}
 & H^n = H^{\sigma_n};\quad H_0^n = H_0^{\sigma_n},\quad n\geq 0\, \\
 & E^n = \inf\sigma(H^n)\, ,\quad n\geq 0\ .
\end{split}
\end{equation}
We then get
\begin{proposition}\label{proposition:3.5}
Suppose that the kernels $G^{(\alpha)}_{\ell,\epsilon,\epsilon'}$
satisfy Hypothesis~\ref{hypothesis:2.1},
Hypothesis~\ref{hypothesis:3.1}(i) and \ref{hypothesis:3.1}(iv).
Then there exists $0<\tilde g_\delta \leq g^{(1)}_\delta$ such
that, for $g\leq \tilde g_\delta$ and $n\geq 1$, $E^n$ is a simple
eigenvalue of $H^n$ and $H^n$ does not have spectrum in $(\,E^n,\,
E^n+ (1-\frac{3g\tilde D}{\gamma})\sigma_n\, )$.
\end{proposition}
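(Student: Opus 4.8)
The plan is to prove Proposition~\ref{proposition:3.5} by induction on $n$, following the iterated construction of \cite{Frohlichetal2008}. The base of the induction is the operator $H^1 = H^{\sigma_1}$ with $\sigma_1 = m_1 - \delta/2$: here one notes that $H^{\sigma_1}$ acts on $\gF^{\sigma_1}$, where all neutrino momenta satisfy $|p_2| \geq \sigma_1$, so that the restriction $H_0^{(2)\sigma_1}$ of the free neutrino energy is bounded below by $\sigma_1$ on states containing at least one neutrino. Combined with the lower bounds $H_0^{(1)} \geq m_1$ on states with a massive lepton and $H_0^{(3)} \geq m_W$ on states with a boson, and with the relative $H_0$-bound on $g H_{I,\sigma_1}$ provided by Propositions~\ref{proposition:2.4} and~\ref{proposition:2.5} (controlled by $K(G)$ and the constants $C_{\beta\eta}$, $B_{\beta\eta}$), one gets for $g$ small enough that $E^1$ is a simple eigenvalue — the perturbed bare vacuum in $\gF^{\sigma_1}$ — isolated by a gap of size at least $(1 - 3g\tilde D/\gamma)\sigma_1$. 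The simplicity follows because the vacuum sector of $H_0^{\sigma_1}$ is one-dimensional and separated by a gap, and analytic perturbation theory preserves this for small $g$.

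For the induction step, suppose the claim holds for $n$. One writes $\gF^{\sigma_n} \simeq \gF^{\sigma_{n+1}} \otimes \gF_{[\sigma_{n+1},\sigma_n)}$ according to the splitting of neutrino momenta into the shells $|p_2| \geq \sigma_n$, $\sigma_{n+1} \leq |p_2| < \sigma_n$, and $|p_2| < \sigma_{n+1}$. On this factorization $H^{\sigma_n}$ decomposes as $H^{\sigma_{n+1}} \otimes \1 + \1 \otimes H_{0,[\sigma_{n+1},\sigma_n)}^{(2)}$ plus a coupling term $g(H_{I,\sigma_n} - H_{I,\sigma_{n+1}})$ supported on the shell. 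The unperturbed operator $H^{\sigma_{n+1}} \otimes \1 + \1 \otimes H_{0,[\sigma_{n+1},\sigma_n)}^{(2)}$ has, by the induction hypothesis, ground state energy $E^{n+1}$ which is simple, with a gap of size at least $\min((1-3g\tilde D/\gamma)\sigma_{n+1}, \sigma_{n+1})$ above it (states with a shell-neutrino cost at least $\sigma_{n+1}$). Then one estimates the norm of the difference $H^{\sigma_n} - (H^{\sigma_{n+1}}\otimes\1 + \1\otimes H^{(2)}_{0,[\sigma_{n+1},\sigma_n)})$ relative to the gap: the shell interaction kernel $\tilde\chi^{\sigma_n}(p_2) - \tilde\chi^{\sigma_{n+1}}(p_2)$ is supported in $\{|p_2| \leq 2\sigma_n\}$, so by Hypothesis~\ref{hypothesis:3.1}(i) and the estimates \eqref{eq:2.63}--\eqref{eq:2.64} its contribution is bounded by $g\,\tilde D\,\sigma_n$ (up to the dilated constants $\tilde C_{\beta\eta}$, $\tilde B_{\beta\eta}$ absorbing the resolvent of $H^{\sigma_n}$). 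Since by \eqref{eq:prop-sigman} the new gap requirement $(1-3g\tilde D/\gamma)\sigma_{n+1}$ is strictly smaller than $\gamma\sigma_n = \sigma_{n+1}$, a standard gap-stability argument (analytic perturbation theory, or a Feshbach/resolvent estimate) shows that for $g$ small the ground state of $H^{\sigma_n}$ remains simple with the asserted gap $(1-3g\tilde D/\gamma)\sigma_n$.

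The delicate bookkeeping is in the constants: one must verify that the relative bound on the shell interaction really is of order $g\tilde D\sigma_n$ with $\tilde D$ as in \eqref{def:Dtilde}, and that the factor $\sup(4\Lambda\gamma/(2m_1-\delta),1)$ there correctly accounts for iterating from $\sigma_0 = \Lambda$ down through $\sigma_1$ (where the ratio $\sigma_1/\sigma_0$ differs from the uniform ratio $\gamma$). Concretely: using $|p_2| \leq 2\sigma_n$ on the shell, $\int |G|^2/w^{(3)} \leq m_W^{-1}\|G\|^2$ and the weighted bound from Hypothesis~\ref{hypothesis:3.1}(i), one rewrites $\|G^{(\alpha)}\chi_{\{|p_2|\leq 2\sigma_n\}}\|^2 \leq 4\sigma_n^2 \tilde K(G)^2$, which is where the $2m_1\tilde C_{\beta\eta} + \tilde B_{\beta\eta}$ and the $\tilde K(G)$ in \eqref{def:Dtilde} enter after pulling through the resolvent $(H^{\sigma_n} - E^n + \text{gap})^{-1}$ whose norm is controlled by the induction hypothesis via $\tilde C_{\beta\eta}$, $\tilde B_{\beta\eta}$.

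The main obstacle, and the step that needs the most care, is the quantitative gap-stability estimate in the induction step: one must show that the perturbation of size $\lesssim g\tilde D\sigma_n$ moves neither the ground state energy nor the bottom of the rest of the spectrum by more than what is allowed so that the surviving gap is at least $(1 - 3g\tilde D/\gamma)\sigma_n$, \emph{and} that simplicity is preserved. The factor $3$ and the division by $\gamma$ in $(1-3g\tilde D/\gamma)$ must come out of carefully tracking (a) the downward shift of $E^n$, (b) the upward shift needed to keep the continuous/excited spectrum away, and (c) the rescaling of the induction-hypothesis gap from $\sigma_{n+1}$ back to $\sigma_n$ via $\gamma$. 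This is exactly the place where \eqref{eq:prop-sigman} — which guarantees $\sigma_{n+1} < (1 - 3g\tilde D/\gamma)\sigma_n$, i.e. the new gap fits strictly inside the shell spacing — is used, and checking that the threshold $\tilde g_\delta$ can be chosen uniformly in $n$ (so that a single small coupling works for all shells simultaneously) is the crux of the argument.
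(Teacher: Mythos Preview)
Your overall strategy is the right one and matches the paper: induction on $n$, with the induction step comparing $H^{n+1}$ to the ``unperturbed'' operator $H^n\otimes\1 + \1\otimes H^{(2)}_{0,[\sigma_{n+1},\sigma_n)}$ and bounding the shell interaction by $g\tilde D\sigma_{n+1}/\gamma$ via Hypothesis~\ref{hypothesis:3.1}(i). The paper carries out the gap-stability step by a direct min-max argument (operator inequality \eqref{eq:B.23} plus the variational characterization of the second eigenvalue) rather than analytic perturbation theory or Feshbach, but that is a minor difference.

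However, your induction step as written is inverted: since $\sigma_{n+1}<\sigma_n$, the Fock space $\gF^{\sigma_{n+1}}$ is the \emph{larger} one, so the factorization is $\gF^{\sigma_{n+1}}\simeq \gF^{\sigma_n}\otimes\gF_{[\sigma_{n+1},\sigma_n)}$, not the other way around. Correspondingly, it is $H^{\sigma_{n+1}}$ that decomposes as $H^{\sigma_n}\otimes\1 + \1\otimes H^{(2)}_{0,[\sigma_{n+1},\sigma_n)}$ plus the shell coupling, and the induction hypothesis gives you the gap for $H^{\sigma_n}$, from which you must deduce the gap for $H^{\sigma_{n+1}}$. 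In your text the roles of $n$ and $n+1$ are swapped throughout, so that you are decomposing the operator you already assumed the claim for and invoking the induction hypothesis on the operator you are trying to analyze. Once the indices are corrected, the argument is essentially the paper's.

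One further point on the base case: the paper does not start at $n=1$ with a perturbative argument but at $n=0$, using Hypothesis~\ref{hypothesis:3.1}(iv) (the sharp ultraviolet cutoff $|p_2|<\Lambda=\sigma_0$) to conclude that $H^0=H_0^{\,0}$ is \emph{exactly} free, with simple vacuum and gap $m_1>\sigma_1$. The case $n=1$ is then obtained by the same induction step. Your sketch for $n=1$ would work but obscures the role of Hypothesis~\ref{hypothesis:3.1}(iv), which is precisely what makes the start of the induction trivial.
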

The proof of Proposition~\ref{proposition:3.5} is given in
Appendix~A.

We now introduce the positive commutator estimates and the
regularity property of $H$ with respect to $A$ in order to prove
Theorem~\ref{thm:3.4}

The operator $A$ has to be split into two pieces depending on
$\sigma$.

Let

\begin{eqnarray*}
 & & \eta_\sigma(p_2) = \chi_{2\sigma} (p_2)\ , \\
 & & \eta^{\sigma}(p_2) = \chi^{2\sigma} (p_2)\ , \\
 & & a_\sigma = \eta_\sigma(p_2)\, a \, \eta_\sigma(p_2)\ ,\\
 & & a^\sigma = \eta ^\sigma (p_2)\, a \, \eta^\sigma(p_2)\ .
\end{eqnarray*}
Since $\eta_\sigma^2 + (\eta^\sigma)^2 = 1$, and $[\eta_\sigma,\,
[\eta_\sigma,\, a]\,] = 0 = [\eta^\sigma,\, [\eta^\sigma,\, a]\,
]$, we obtain (see \cite{Frohlichetal2008})
\begin{equation}\nonumber
  a = a^\sigma + a_\sigma\ .
\end{equation}

Note that we also have
\begin{eqnarray*}
 & & a_\sigma = \frac12\left( \eta_\sigma(p_2)^2 p_2\cdot
 i\nabla p_2
 + i\nabla p_2.\eta_\sigma(p_2)^2 p_2 \right)\ ,\\
 & & a^\sigma = \frac12\left( \eta^\sigma(p_2)^2 p_2\cdot
 i\nabla p_2
 + i\nabla p_2.\eta^\sigma(p_2)^2 p_2 \right)\ .
\end{eqnarray*}

The operators $a$, $a_\sigma$ and $a^\sigma$ are essentially
self-adjoint on $C_0^\infty(\R^3,\, \C^2)$ (see
\cite[Proposition~4.2.3]{Amreinetal1996}). We still denote by $a$,
$a_\sigma$ and $a^\sigma$ their closures. If $\tilde a$ denotes
any of the operator $a$, $a_\sigma$ and $a^\sigma$, we have
 $$
\cD(\tilde a) = \{\ u\in L^2(\Sigma_1);\ \tilde a u \in
L^2(\Sigma_1)\ \}\ .
 $$

The operators $\d\Gamma(a)$, $\d\Gamma(a^\sigma)$,
$\d\Gamma(a_\sigma)$ are self-adjoint operators in
$\gF_a(L^2(\Sigma_1))$ and we have
\begin{equation}\nonumber
 \d\Gamma(a) = \d\Gamma (a^\sigma) + \d\Gamma(a_\sigma)\ .
\end{equation}

By \eqref{eq:2.4}, the following operators in $\gF_\ell$, denoted
by $A_\ell^{\,\sigma}$ and $A_{\sigma\ell}$ respectively,
\begin{equation}\nonumber
 A_\ell^{\,\sigma} = \1\otimes\1\otimes\d\Gamma(a^\sigma)\otimes\1
 + \1\otimes\1\otimes\1\otimes\d\Gamma(a^\sigma)\ ,
\end{equation}
\begin{equation}\nonumber
 A_{\sigma\ell} = \1\otimes\1\otimes\d\Gamma(a_\sigma)\otimes\1
 + \1\otimes\1\otimes\1\otimes\d\Gamma(a_\sigma)\ ,
\end{equation}
are essentially self-adjoint on $\gD_\ell$.

Let $A^\sigma$ and $A_\sigma$ be the following two operators in
$\gF_L$,
\begin{equation}\nonumber
 A^\sigma =  A_1^{\,\sigma}\otimes\1_2\otimes\1_3
           + \1_1\otimes A_2^{\,\sigma}\otimes\1_3
           + \1_1\otimes\1_2\otimes A_3^{\,\sigma}\ ,
\end{equation}
\begin{equation}\nonumber
 A_\sigma =  A_{\sigma 1}\otimes\1_2\otimes\1_3
           + \1_1\otimes A_{\sigma 2}\otimes\1_3
           + \1_1\otimes\1_2\otimes A_{\sigma 3}.
\end{equation}
The operators $A^\sigma$ and $A_\sigma$ are essentially
self-adjoint on $\gD_L$. Still denoting by $A^\sigma$ and
$A_\sigma$ their extensions to $\gF$, $A^\sigma$ and $A_\sigma$
are essentially self-adjoint on $\gD$ and we still denote by
$A^\sigma$ and $A_\sigma$ their closures.

We have
\begin{equation}\nonumber
  A = A^\sigma + A_\sigma\ .
\end{equation}

The operators $a$, $a^\sigma$ and $a_\sigma$ are associated to the
following $C^\infty$-vector fields in $\R^3$ respectively,
\begin{equation}\label{eq:3.63}
\begin{split}
 & v(p_2) = p_2\ , \\
 & v^\sigma(p_2) = \eta^\sigma(p_2)^2 p_2\ ,\\
 & v_\sigma(p_2) = \eta_\sigma(p_2)^2 p_2\ .
\end{split}
\end{equation}
Let $\cV(p)$ be any of these vector fields. We have
\begin{equation}\nonumber
 | \cV(p)| \leq \Gamma\, |p|\ ,
\end{equation}
for some $\Gamma>0$ and we also have
\begin{equation}\label{eq:3.65}
 \cV(p) = \tilde v(|p|) p\ ,
\end{equation}
where the $\tilde{v}$'s are defined by \eqref{eq:3.63} and
\eqref{eq:3.65}, and fulfill $|p|^\alpha \frac{\d^\alpha}{\d
|p|^\alpha} \tilde v(|p|)$ bounded for $\alpha=0,1,2$.

Let $\psi_t(.):\, \R^3\rightarrow\R^3$ be the corresponding flow
generated by $\cV$:
\begin{equation}\nonumber
\begin{split}
 & \frac{\d}{\d t} \psi_t(p) = \cV(\psi_t(p))\ ,\\
 & \psi_0(p) = p\ .
\end{split}
\end{equation}
$\psi_t(p)$ is a $C^\infty$-flow and we have
\begin{equation}\label{eq:3.69}
  \mathrm{e}^{-\Gamma |t|}\, |p| \leq |\psi_t(p)| \leq
  \mathrm{e}^{\Gamma |t|}\, |p|\ .
\end{equation}
$\psi_t(p)$ induces a one-parameter group of unitary operators
$U(t)$ in $L^2(\Sigma_1)\simeq L^2(\R^3,\, \C^2)$ defined by
\begin{equation}\nonumber
 (U(t) f)(p) = f(\psi_t(p)) (\det \nabla\psi_t(p))^\frac12
\end{equation}
Let $\phi_t(.)$, $\phi^{\,\sigma}_t(.)$ and $\phi_{\sigma t}(.)$
be the flows associated with the vector fields $v(.)$,
$v^\sigma(.)$ and $v_\sigma(.)$ respectively.

Let $U(t)$, $U^{\sigma}(t)$ and $U_\sigma(t)$ be the corresponding
one-parameter groups of unitary operators in $L^2(\Sigma_1)$. The
operators $a$, $a^\sigma$, and $a_\sigma$ are the generators of
$U(t)$, $U^{\sigma}(t)$ and $U_\sigma(t)$ respectively, i.e.,
\begin{equation}\nonumber
\begin{split}
 & U(t) = \mathrm{e}^{-i a t}\ , \\
 & U^\sigma(t) = \mathrm{e}^{-i a^\sigma t} \ ,  \\
 & U_\sigma(t) = \mathrm{e}^{-i a_\sigma t} \ .
 \end{split}
\end{equation}

Let
\begin{equation}\nonumber
 w^{(2)}(\xi_2) = (w^{(2)}_\ell(\xi_2))_{\ell=1,2,3}
\end{equation}
and
\begin{equation}\nonumber
 \d\Gamma (w^{(2)}) = \sum_{\ell=1}^3 \sum_\epsilon
 \int w^{(2)}_\ell (\xi_2) c_{\ell,\epsilon}^*(\xi_2) c_{\ell
 \epsilon}(\xi_2) \d\xi_2\ .
\end{equation}
Let $V(t)$ be any of the one-parameter groups $U(t)$,
$U^\sigma(t)$ and $U_\sigma(t)$. We set
\begin{equation}\nonumber
 V(t) w^{(2)} V(t)^* = (V(t) w_\ell^{(2)} V(t)^*)_{\ell=1,2,3}\ ,
\end{equation}
and we have
\begin{equation}\nonumber
 V(t) w^{(2)} V(t)^* = w^{(2)} (\psi_t)\ .
\end{equation}
Here $\psi_t$ is the flow associated to $V(t)$.

This yields, for any $\vp\in\gD$, (see
\cite[Lemma~2.8]{DerezinskiGerard1999})
\begin{equation}\label{eq:3.76}
\begin{split}
  \mathrm{e}^{-i A t} H_0 \mathrm{e}^{i A t} \vp - H_0 \vp
  & = (\d\Gamma (\mathrm{e}^{-i a t} w^{(2)} \mathrm{e}^{i a t})
  - \d\Gamma( w^{(2)})) \vp \\
  & = (\d\Gamma ( w^{(2)} \circ \phi_t - w^{(2)})) \vp\ ,
\end{split}
\end{equation}
\begin{equation}\label{eq:3.77}
\begin{split}
  \mathrm{e}^{-i A^\sigma t} H_0 \mathrm{e}^{i A^\sigma t} \vp - H_0 \vp
  & = (\d\Gamma (\mathrm{e}^{-i a^\sigma t} w^{(2)} \mathrm{e}^{i a^\sigma t})
  - \d\Gamma( w^{(2)})) \vp \\
  & = (\d\Gamma ( w^{(2)} \circ \phi_t^{\,\sigma} - w^{(2)})) \vp\ ,
\end{split}
\end{equation}
\begin{equation}\label{eq:3.78}
\begin{split}
  \mathrm{e}^{-i A_\sigma t} H_0 \mathrm{e}^{i A_\sigma t} \vp - H_0 \vp
  & = (\d\Gamma (\mathrm{e}^{-i a_\sigma t} w^{(2)} \mathrm{e}^{i a_\sigma t})
  - \d\Gamma( w^{(2)})) \vp \\
  & = (\d\Gamma ( w^{(2)} \circ \phi_{\sigma t} - w^{(2)})) \vp\ .
\end{split}
\end{equation}

%%%%%%%%%%%%%%%%%%%%%%%%%%%%%%%%%%%%%%%%%%%%%%%%%%%
\begin{proposition}\label{proposition:3.6}
Suppose that the kernels $G^{(\alpha)}_{\ell,\epsilon,\epsilon'}$
satisfy Hypothesis~\ref{hypothesis:2.1}.

For every $t\in\R$ we have, for $g\leq g_1$,
\begin{equation}\nonumber
\begin{split}
 (i) &\quad \mathrm{e}^{it A} \cD(H_0) = \mathrm{e}^{itA} \cD(H)
 \subset \cD(H_0) = \cD(H)\ , \\
 (ii) & \quad \mathrm{e}^{it A^\sigma} \cD(H_0) =
 \mathrm{e}^{itA^\sigma} \cD(H)
 \subset \cD(H_0) = \cD(H)\ , \\
 (iii) & \quad \mathrm{e}^{it A_\sigma} \cD(H_0) =
 \mathrm{e}^{itA_\sigma} \cD(H)
 \subset \cD(H_0) = \cD(H)\ .
\end{split}
\end{equation}
\end{proposition}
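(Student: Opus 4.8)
The plan is to use a standard criterion for invariance of the domain of a self-adjoint operator under a unitary group generated by another self-adjoint operator, in the spirit of \cite[Lemma~2]{Frohlichetal2008} and the techniques in \cite{DerezinskiGerard1999}. Since $\cD(H) = \cD(H_0)$ by the self-adjointness theorem already proved, and since $g H_I$ is $H_0$-bounded with relative bound $<1$, it suffices to prove the three inclusions with $H_0$ in place of $H$; the equalities $\mathrm{e}^{itA}\cD(H_0) = \mathrm{e}^{itA}\cD(H)$ are then immediate, as is the fact that the reverse inclusions follow by replacing $t$ by $-t$, so all three displays reduce to showing $\mathrm{e}^{itA}\cD(H_0)\subset\cD(H_0)$ and likewise for $A^\sigma$, $A_\sigma$. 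I will treat $A$; the cases $A^\sigma$ and $A_\sigma$ are identical word for word, using the vector fields $v^\sigma$, $v_\sigma$ in place of $v$ and the bounds on $\tilde v$ and its derivatives recorded after \eqref{eq:3.65}.

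The key step is to establish, for $\vp\in\gD$, an estimate of the form
\begin{equation}\nonumber
 \| H_0\, \mathrm{e}^{itA}\vp\| \leq C(t)\,\big(\|H_0\vp\| + \|\vp\|\big)
\end{equation}
with $C(t)$ locally bounded in $t$. The identity \eqref{eq:3.76} gives, for $\vp\in\gD$,
\begin{equation}\nonumber
 \mathrm{e}^{-iAt} H_0 \mathrm{e}^{iAt}\vp = H_0\vp + \d\Gamma\big(w^{(2)}\circ\phi_t - w^{(2)}\big)\vp.
\end{equation}
Because $w^{(2)}_\ell(\xi_2) = |p_2|$ and the flow satisfies the two-sided bound \eqref{eq:3.69}, one has the pointwise inequality $|w^{(2)}_\ell(\phi_t(p_2)) - w^{(2)}_\ell(p_2)| = \big||\phi_t(p_2)| - |p_2|\big| \leq (\mathrm{e}^{\Gamma|t|}-1)\,|p_2| = (\mathrm{e}^{\Gamma|t|}-1)\,w^{(2)}_\ell(p_2)$, so as a multiplication operator $0 \leq w^{(2)}\circ\phi_t - w^{(2)}$ is dominated by $(\mathrm{e}^{\Gamma|t|}-1)\,w^{(2)}$ in the sense of functions; passing to second quantization, $\|\d\Gamma(w^{(2)}\circ\phi_t - w^{(2)})\vp\| \leq (\mathrm{e}^{\Gamma|t|}-1)\,\|\d\Gamma(w^{(2)})\vp\| \leq (\mathrm{e}^{\Gamma|t|}-1)\,\|H_0\vp\|$, since $\d\Gamma(w^{(2)}) = H_0^{(2)} \leq H_0$. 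Hence
\begin{equation}\nonumber
 \| \mathrm{e}^{-iAt} H_0 \mathrm{e}^{iAt}\vp \| \leq \mathrm{e}^{\Gamma|t|}\,\|H_0\vp\|,\qquad \vp\in\gD.
\end{equation}
Since $\gD$ is a core for $H_0$, this bound extends to all $\vp\in\cD(H_0)$, which shows that $\mathrm{e}^{iAt}$ maps $\cD(H_0)$ into $\cD(\mathrm{e}^{-iAt}H_0\mathrm{e}^{iAt}) = \cD(H_0)$ — more precisely, for $\vp\in\cD(H_0)$ the vector $\mathrm{e}^{iAt}\vp$ lies in $\cD(H_0)$ with $\|H_0\mathrm{e}^{iAt}\vp\| \leq \mathrm{e}^{\Gamma|t|}\|H_0\vp\|$.

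The anticipated main obstacle is the justification of the commutator identity \eqref{eq:3.76} and of the core/closure passages at the level of rigor: one must check that $\mathrm{e}^{iAt}\gD\subset\cD(H_0)$ to even write the left-hand side, which is handled by differentiating $t\mapsto H_0\mathrm{e}^{iAt}\vp$ on $\gD$ and using a Gronwall argument, or equivalently by invoking Nelson's commutator theorem with $N_\ell + H_0 + 1$ as a comparison operator; and one must verify that $A$ and $A^\sigma$, $A_\sigma$ genuinely leave $\gD$ invariant up to closure so that the formal manipulations with $\d\Gamma$ are legitimate. This is where the explicit structure — $a$ being a first-order differential operator with the smooth, appropriately bounded symbol $\tilde v(|p_2|)p_2$ — is used, exactly as in \cite[Lemma~2.8]{DerezinskiGerard1999} and \cite{Frohlichetal2008}; I would cite those references for the differential-geometric bookkeeping rather than reproduce it. Once \eqref{eq:3.76}–\eqref{eq:3.78} are in hand, the rest is the short domination argument above, carried out verbatim three times.
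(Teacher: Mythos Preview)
Your proposal is correct and follows essentially the same route as the paper: use the conjugation identity \eqref{eq:3.76} together with the flow bound \eqref{eq:3.69} to get $\|H_0\,\mathrm{e}^{itA}\vp\|\leq \mathrm{e}^{\Gamma|t|}\|H_0\vp\|$ for $\vp\in\gD$, and then extend by the core property of $\gD$ for $H_0$. One small slip: the difference $w^{(2)}\circ\phi_t-w^{(2)}$ need not be nonnegative, but since you only use the absolute-value bound $|w^{(2)}\circ\phi_t-w^{(2)}|\leq(\mathrm{e}^{\Gamma|t|}-1)w^{(2)}$ this does no harm.
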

%%%%%%%%%%%%%%%%%%%%%%%%%%%%%%%%%%%%%%%%%%%%%%%%%%%%
%
%
%
%%%%%%%%%%%%%%%%%%%%%%%%%%%%%%%%%%%%%%%%%%%%%%%%%%%%
\begin{proof}
We only prove $i)$, since $ii)$ and $iii)$ can be proved
similarly. By \eqref{eq:3.76} we have, for $\vp\in\gD$,
\begin{equation}\label{eq:3.84}
 \mathrm{e}^{- i t A} H_0 \mathrm{e}^{i t A} \varphi =
 (H_0^{(1)} + H_0^{(3)} + \d\Gamma (w^{(2)}\circ \phi_t)) \vp\ .
\end{equation}
It follows from \eqref{eq:3.69} and \eqref{eq:3.84} that
\begin{equation}\nonumber
 \| H_0 \mathrm{e}^{i t A} \vp\| \leq \mathrm{e}^{\Gamma |t|}
 \| H_0 \vp\|\ .
\end{equation}
This yields $i)$ because $\gD$ is a core for $H_0$. Moreover we
get
\begin{equation}\nonumber
 \| H_0 \mathrm{e}^{i t A} (H_0 + 1)^{-1} \| \leq
 \mathrm{e}^{\Gamma |t|}\ .
\end{equation}
In view of $\gD(H_0) = \gD(H)$, the operators $H_0(H+i)^{-1}$ and
$H(H_0+i)^{-1}$ are bounded and there exists a constant $C>0$ such
that
\begin{equation}\nonumber
  \| H \mathrm{e}^{it A} (H+i)^{-1} \| \leq C \mathrm{e}^{\Gamma
  |t|}\ .
\end{equation}
Similarly, we also get
\begin{equation}\nonumber
\begin{split}
  & \| H_0 \mathrm{e}^{it A^\sigma} (H_0 + 1)^{-1} \|
  \leq \mathrm{e}^{\Gamma
  |t|}\ , \\
  & \| H_0 \mathrm{e}^{it A_\sigma} (H_0 + 1)^{-1} \| \leq \mathrm{e}^{\Gamma
  |t|}\ , \\
  & \| H \mathrm{e}^{it A^\sigma} (H+i)^{-1} \| \leq C \mathrm{e}^{\Gamma
  |t|}\ , \\
  & \| H \mathrm{e}^{it A_\sigma} (H+i)^{-1} \| \leq C \mathrm{e}^{\Gamma
  |t|}\ .
\end{split}
\end{equation}
\end{proof}

Let $H_I(G)$ be the interaction associated with the kernels $G = (
G_{\ell,\epsilon,\epsilon'}^{(\alpha)} )_{\alpha=1,2;\
\ell=1,2,3;\ \epsilon\neq\epsilon' = \pm}$, where the kernels
$G_{\ell,\epsilon,\epsilon'}^{(\alpha)} )$ satisfy
Hypothesis~\ref{hypothesis:2.1}

We set
\begin{equation}\nonumber
 V(t) G = (V(t)
 G_{\ell,\epsilon,\epsilon'}^{(\alpha)})_{\alpha=1,2;\
 \ell=1,2,3;\ \epsilon\neq\epsilon '=\pm}
\end{equation}
We have for $\vp\in\gD$ (see
\cite[Lemma~2.7]{DerezinskiGerard1999}),
\begin{equation}\label{eq:3.81}
\begin{split}
 & \mathrm{e}^{- i A t} H_I(G) \mathrm{e}^{ i A t} \vp
 = H_I( \mathrm{e}^{- i a t} G) \vp\ , \\
 & \mathrm{e}^{- i A^\sigma t} H_I(G) \mathrm{e}^{ i A^\sigma t} \vp
 = H_I( \mathrm{e}^{- i a^\sigma t} G) \vp\ , \\
 & \mathrm{e}^{- i A_\sigma t} H_I(G) \mathrm{e}^{ i A_\sigma t} \vp
 = H_I( \mathrm{e}^{- i a_\sigma t} G) \vp\ .
\end{split}
\end{equation}

According to \cite{Amreinetal1996} and \cite{Sahbani1997}, in
order to prove Theorem~\ref{thm:3.4} we must prove that $H$ is
locally of class $C^2(A^\sigma)$, $C^2(A_\sigma)$ and $C^2(A)$ in
$(-\infty, m_1-\frac{\delta}{2})$ and that $A$ and $A_\sigma$ are
locally strictly conjugate to $H$ in $(E, m_1-\frac{\delta}{2})$.

Recall that $H$ is locally of class $C^2(A)$ in $(-\infty,
m_1-\frac{\delta}{2})$ if, for any $\vp\in C_0^\infty ((-\infty,
m_1-\frac{\delta}{2}))$, $\vp(H)$ is of class $C^2(A)$, i.e.,
$t\rightarrow \mathrm{e}^{-i A t} \vp(H) \mathrm{e}^{i t A} \psi$
is twice continuously differentiable for all $\vp\in
C_0^\infty((-\infty, m_1-\frac{\delta}{2})$ and all $\psi\in\gF$.

Thus, one of our main results is the following one
%%%%%%%%%%%%%%%%%%%%%%%%%%%%%%%%%%%%%%%%%%%%%%
\begin{theorem}\label{thm:3.7} Suppose that the kernels
$G_{\ell,\epsilon,\epsilon'}^{(\alpha)}$ satisfy
Hypothesis~\ref{hypothesis:2.1} and \ref{hypothesis:3.1}(i)-(iii).
\begin{itemize}
 \item[(a)] $H$ is locally of class $C^2(A)$,
 $C^2(A^\sigma)$ and $C^2(A_\sigma)$ in $(-\infty,
 m_1-{\delta}/{2})$.
 \item[(b)] $H^\sigma$ is locally of class
 $C^2(A^\sigma)$ in $(-\infty, m_1-{\delta}/{2})$.
\end{itemize}
\end{theorem}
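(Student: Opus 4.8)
The plan is to apply the standard commutator criterion for the class $C^2$ of a conjugate operator from \cite{Amreinetal1996, Sahbani1997}: once one knows that the unitary group generated by the conjugate operator leaves $\cD(H)$ invariant with a locally bounded propagation estimate --- which for $A$, $A^\sigma$ and $A_\sigma$ is exactly Proposition~\ref{proposition:3.6} --- it suffices to prove that the first and second commutator forms of $H$ with that operator, computed on a common core, extend to $H_0$-bounded operators (a fortiori to bounded operators $\cD(H)=\cD(H_0)\to\cD(H)^*$); this gives the class $C^2$ globally, hence the local statement on $(-\infty,m_1-\delta/2)$. Here $\gD$ is the common core: it is a core for $H_0$ (hence for $H$, as $\cD(H)=\cD(H_0)$) and for each of $A$, $A^\sigma$, $A_\sigma$, and it is invariant under $a$, $a^\sigma$, $a_\sigma$, each of which maps $C_0^\infty(\R^3,\C^2)$ into itself, so all commutators below make sense as quadratic forms on $\gD$. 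I write $\widetilde A$ for any of $A$, $A^\sigma$, $A_\sigma$, and $\widetilde a$ (one of $a$, $a^\sigma$, $a_\sigma$) and $\widetilde v$ (the corresponding vector field $v$, $v^\sigma$ or $v_\sigma$) for the companion operators.

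First I would compute the commutators. Since $A$, $A^\sigma$, $A_\sigma$ act only on the neutrino variables, $H_0^{(1)}$ and $H_0^{(3)}$ commute with them; differentiating \eqref{eq:3.76}--\eqref{eq:3.78} at $t=0$ gives $[H_0,i\widetilde A]=\d\Gamma(\widetilde v\cdot\nabla_{p_2}w^{(2)})$ and, iterating, $[[H_0,i\widetilde A],i\widetilde A]=\d\Gamma(\widetilde v\cdot\nabla_{p_2}(\widetilde v\cdot\nabla_{p_2}w^{(2)}))$. Using $w^{(2)}_\ell(\xi_2)=|p_2|$ and the properties of the vector fields in \eqref{eq:3.63}--\eqref{eq:3.65} (each $\widetilde v(p)=\tilde v(|p|)p$ with $|p|^\alpha\frac{\d^\alpha}{\d|p|^\alpha}\tilde v$ bounded for $\alpha=0,1,2$), both functions $\widetilde v\cdot\nabla_{p_2}w^{(2)}_\ell$ and $\widetilde v\cdot\nabla_{p_2}(\widetilde v\cdot\nabla_{p_2}w^{(2)}_\ell)$ are bounded pointwise by $C|p_2|=Cw^{(2)}_\ell$, so their second quantizations are $H_0^{(2)}$-bounded, hence $H_0$-bounded. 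For the interaction, differentiating \eqref{eq:3.81} at $t=0$ and iterating gives $[H_I(G),i\widetilde A]=H_I(-i\,\widetilde a\,G)$ and $[[H_I(G),i\widetilde A],i\widetilde A]=H_I(-\,\widetilde a^{\,2}\,G)$.

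It then remains to see that the transported kernels $\widetilde a\,G$ and $\widetilde a^{\,2}\,G$ lie in $L^2(\Sigma_1\times\Sigma_1\times\Sigma_2)$; granting that, Proposition~\ref{proposition:2.5} together with the relative bounds underlying the essential self-adjointness of $H$ show $H_I(\widetilde a\,G)$ and $H_I(\widetilde a^{\,2}\,G)$ are $H_0$-bounded, and with the preceding paragraph this proves~(a). Writing $a=i\,p_2\cdot\nabla_{p_2}+\frac{3i}{2}$, Hypothesis~\ref{hypothesis:3.1}(iii.a) gives $a\,G\in L^2$; and since $(p_2\cdot\nabla_{p_2})^2 G=\sum_{i,j}p_{2,i}p_{2,j}\frac{\partial^2 G}{\partial p_{2,i}\partial p_{2,j}}+p_2\cdot\nabla_{p_2}G$, Hypothesis~\ref{hypothesis:3.1}(iii.a)--(iii.b) gives $a^2 G\in L^2$. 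For $a^\sigma=\eta^\sigma(p_2)\,a\,\eta^\sigma(p_2)$ and $a_\sigma=\eta_\sigma(p_2)\,a\,\eta_\sigma(p_2)$, Leibniz' rule writes $\widetilde a\,G$ and $\widetilde a^{\,2}\,G$ as finite sums of a bounded smooth function (made of $\eta^\sigma$ or $\eta_\sigma$ and their $|p_2|^k$-weighted derivatives of order $\le 2$, all bounded) times one of $G$, $(p_2\cdot\nabla_{p_2})G$, $(p_2\cdot\nabla_{p_2})^2 G$, which are square integrable by Hypotheses~\ref{hypothesis:2.1}, \ref{hypothesis:3.1}(iii.a), \ref{hypothesis:3.1}(iii.b) respectively.

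For part~(b), $H^\sigma=H_0^{(1)}+H_0^{(2)\sigma}+H_0^{(3)}+gH_{I,\sigma}$ on $\gF^\sigma$, where $H_{I,\sigma}$ has kernels $\tilde\chi^\sigma(p_2)G^{(\alpha)}_{\ell,\epsilon,\epsilon'}$; since $\tilde\chi^\sigma=1-\chi_\sigma$ is smooth with bounded $|p_2|^k$-weighted derivatives, these kernels still satisfy Hypothesis~\ref{hypothesis:3.1}(iii), and since $\eta^\sigma$ is supported in $\{|p_2|>2\sigma\}$, $A^\sigma$ restricts to a self-adjoint operator on $\gF^\sigma$ with \eqref{eq:3.77}, \eqref{eq:3.81} restricting accordingly; the argument of the two previous paragraphs, with $H_0^{(2)\sigma}$ in place of $H_0^{(2)}$, then gives that $[H^\sigma,iA^\sigma]$ and $[[H^\sigma,iA^\sigma],iA^\sigma]$ are $H_0^\sigma$-bounded, so $H^\sigma$ is of class $C^2(A^\sigma)$. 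I expect the main obstacle to be the rigorous justification of the form identities for the interaction: showing that $t\mapsto H_I(\mathrm{e}^{-i\widetilde a t}G)\varphi$ is twice differentiable at $t=0$ with derivatives $H_I(-i\,\widetilde a\,G)\varphi$ and $H_I(-\widetilde a^{\,2}\,G)\varphi$ for $\varphi\in\gD$, which amounts to $\frac{1}{t}(\mathrm{e}^{-i\widetilde a t}G-G)\to-i\,\widetilde a\,G$ in $L^2$ and its second-order analogue --- i.e.\ to $G\in\cD(\widetilde a^{\,2})$, exactly what Hypothesis~\ref{hypothesis:3.1}(iii.a)--(iii.b) provides --- and then combining this with the $N_\ell^{\frac12}$-type bounds of Proposition~\ref{proposition:2.5} so that the resulting operators land in $\cD(H)\to\cD(H)^*$. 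The bookkeeping for the $\sigma$-split operators (separating the Leibniz terms controlled by the regularity of the cutoffs from those controlled by Hypothesis~\ref{hypothesis:3.1}) is the other fiddly point, but routine once the $t=0$ differentiability is in hand.
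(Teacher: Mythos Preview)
Your argument is correct and rests on the same estimates as the paper, but the packaging differs. The paper does not directly invoke an abstract ``commutators $H_0$-bounded $\Rightarrow$ $C^2$'' criterion; instead it works throughout with the regularized operators $A_t=\frac{\mathrm{e}^{itA}-1}{t}$ (and their $\sigma$-variants), proves the uniform bounds $\sup_{0<|t|\le1}\|[A_t,H](H+i)^{-1}\|<\infty$ (Proposition~\ref{proposition:3.8}(b)) and $\sup_{0<|t|\le1}\|[A_t,[A_t,H]](H+i)^{-1}\|<\infty$ (Proposition~\ref{proposition:6.2}), and then transfers these to $\varphi(H)$ via the Helffer--Sj\"ostrand formula with an almost analytic extension satisfying $|\partial_{\bar z}\phi|\le C|y|^3$, obtaining $\sup_{0<|t|\le1}\|[A_t,[A_t,\varphi(H)]]\|<\infty$ and invoking \cite[Lemma~6.2.3]{Amreinetal1996}. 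Your route computes the commutators at $t=0$ and appeals to the domain-invariance criterion directly; this is cleaner and in fact yields global $C^2(\widetilde A)$ regularity (not just local), but it hides exactly the ``main obstacle'' you flag --- making the form identities rigorous --- inside the abstract theorem. The paper's $A_t$-approach makes that step explicit: the second-difference bound on $\frac{1}{t^2}(G_{2t}-2G_t+G_0)$ in $L^2$ (their \eqref{eq:6.29}--\eqref{eq:6.32}) is precisely your $G\in\cD(\widetilde a^{\,2})$. The substantive analytic input --- pointwise control of $\widetilde v\cdot\nabla(\widetilde v\cdot\nabla|p_2|)$ by $C|p_2|$ via \eqref{eq:3.63}--\eqref{eq:3.65}, and $L^2$-membership of $\widetilde a G$, $\widetilde a^{\,2}G$ via Hypothesis~\ref{hypothesis:3.1}(iii) --- is identical in both arguments.
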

%%%%%%%%%%%%%%%%%%%%%%%%%%%%%%%%%%%%%%%%%%%%%%
%
%
%
%%%%%%%%%%%%%%%%%%%%%%%%%%%%%%%%%%%%%%%%%%%%%%
%\begin{remark}
%The property $\mathrm{(b)}$ is needed because we will apply the
%virial theorem for the operator $H^\sigma$.
%\end{remark}
%%%%%%%%%%%%%%%%%%%%%%%%%%%%%%%%%%%%%%%%%%%%%%

It follows from Theorem~\ref{thm:3.7} that $[H,\, iA]$, $[H,\,
iA_\sigma]$, $[H, \, iA^\sigma]$ and $[H^\sigma,\, iA^\sigma]$ are
defined as sesquilinear forms on $\cup_K E_K(H) \gF$, where the
union is taken over all the compact subsets $K$ of $(-\infty,
m_1-\delta/2)$.

Furthermore, by Proposition~\ref{proposition:3.6},
Theorem~\ref{thm:3.7} and \cite[Lemma~29]{Frohlichetal2008}, we
get for all $\vp\in C_0^\infty((E, m_1-\delta/2))$ and all
$\psi\in\gF$,
\begin{equation}\label{eq:3.92}
\begin{split}
 & \vp(H)\,  [H,\, iA]\,  \vp(H)\, \psi =
 \lim_{t\rightarrow 0} \vp(H)\,
 \big[H,\, \frac{\mathrm{e}^{i t A }-1}{t}\big]\,
 \vp(H)\,  \psi \ , \\
 & \vp(H)\,  [H,\, iA_\sigma]\,  \vp(H)\, \psi =
 \lim_{t\rightarrow 0} \vp(H)\,
 \big[H,\, \frac{\mathrm{e}^{i t A_\sigma }-1}{t}\big]\,
 \vp(H)\,  \psi \ , \\
 & \vp(H)\, [H,\, iA^\sigma]\, \vp(H)\, \psi =
 \lim_{t\rightarrow 0} \vp(H)\,
 \big[H,\, \frac{\mathrm{e}^{i t A^\sigma }-1}{t}\big]\,
 \vp(H)\,  \psi \ , \\
 & \vp(H^\sigma)\,  [H^\sigma,\, i A^\sigma]\,  \vp(H^\sigma)
 \, \psi = \lim_{t\rightarrow 0} \vp(H^\sigma)\,
 \big[H^\sigma,\, \frac{\mathrm{e}^{i t A^\sigma }-1}{t}\big]\,
 \vp(H^\sigma)\,  \psi \ . \\
\end{split}
\end{equation}

The following proposition allows us to compute $[H,\, iA]$, $[H,\,
iA^\sigma]$, $[H,\, iA_\sigma]$ and $[H^\sigma,\, iA^\sigma]$ as
sesquilinear forms. By Hypothesis~\ref{hypothesis:2.1} and
\ref{hypothesis:3.1}~(iii.a), the kernels $G_{\ell, \epsilon,
\epsilon'}^{(\alpha)} (\xi_1, ., \xi_3)$ belong to the domains of
$a$, $a^\sigma$, and $a_\sigma$.

%%%%%%%%%%%%%%%%%%%%%%%%%%%%%%%%%%%%%%%%%%%%%%%%%
\begin{proposition}\label{proposition:3.8}$\ $
Suppose that the kernels $G^{(\alpha)}_{\ell,\epsilon,\epsilon'}$
satisfy Hypothesis~\ref{hypothesis:2.1} and
\ref{hypothesis:3.1}~(iii.a). Then
\begin{itemize}
 \item[(a)] For all $\psi\in\cD(H)$ we have
   \begin{itemize}
     \item[$(i)$] $\lim_{t\rightarrow0} \big[H, \frac{\mathrm{e}^{it A}
     -1}{t}\big] \psi = \big(\,\d\Gamma (w^{(2)}) + g H_I(-ia
     G)\,\big)\psi$,
     \item[$(ii)$] $\lim_{t\rightarrow0} \big[H, \frac{\mathrm{e}^{it A^\sigma}
     -1}{t}\big] \psi = \big(\,\d\Gamma ( (\eta^{\sigma})^2w^{(2)})
     + g H_I(-i a^\sigma G)\,\big)\psi$,
     \item[$(iii)$] $\lim_{t\rightarrow0} \big[H, \frac{\mathrm{e}^{it A_\sigma}
     -1}{t}\big] \psi = \big(\,\d\Gamma ( (\eta_{\sigma})^2w^{(2)})
     + g H_I(-i a_\sigma G)\,\big)\psi$,
     \item[$(iv)$] $\lim_{t\rightarrow0} \big[H^\sigma,
     \frac{\mathrm{e}^{it A^\sigma}
     -1}{t}\big] \psi = \big(\,\d\Gamma ( (\eta^{\sigma})^2w^{(2)})
     + g H_I(-i a^\sigma (\tilde\chi^\sigma(p_2)G))\,\big)\psi$.
   \end{itemize}
 \item[(b)]
   \begin{itemize}
     \item[$(i)$] $\sup_{0<|t|\leq 1} \big\| \big[ H,
     \frac{\mathrm{e}^{it A}-1}{t} \big] (H+i)^{-1} \big\| <
     \infty$,
     \item[$(ii)$] $\sup_{0<|t|\leq 1} \big\| \big[ H,
     \frac{\mathrm{e}^{it A^\sigma}-1}{t} \big] (H+i)^{-1} \big\| <
     \infty$,
     \item[$(iii)$] $\sup_{0<|t|\leq 1} \big\| \big[ H,
     \frac{\mathrm{e}^{it A_\sigma}-1}{t} \big] (H+i)^{-1} \big\| <
     \infty$,
     \item[$(iv)$] $\sup_{0<|t|\leq 1} \big\| \big[ H^\sigma,
     \frac{\mathrm{e}^{it A^\sigma}-1}{t} \big] (H+i)^{-1} \big\| <
     \infty$.
   \end{itemize}
\end{itemize}
\end{proposition}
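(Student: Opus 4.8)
The plan is to reduce everything to the commutator computation for $H_0$ and $H_I$ separately, using the explicit conjugation formulas \eqref{eq:3.76}--\eqref{eq:3.78} and \eqref{eq:3.81} already established in the excerpt. For part (a)(i), write $H=H_0+gH_I(G)$ and, for $\psi\in\gD$, expand
\[
 \Big[H,\frac{\mathrm{e}^{itA}-1}{t}\Big]\psi
 = \frac{1}{t}\Big(\mathrm{e}^{itA}(\mathrm{e}^{-itA}H\mathrm{e}^{itA}-H)\mathrm{e}^{-itA}\Big)\psi\, ,
\]
so that by \eqref{eq:3.76} the free part equals $t^{-1}\mathrm{e}^{itA}\d\Gamma(w^{(2)}\circ\phi_t-w^{(2)})\mathrm{e}^{-itA}\psi$ and by \eqref{eq:3.81} the interaction part equals $t^{-1}\mathrm{e}^{itA}\big(H_I(\mathrm{e}^{-iat}G)-H_I(G)\big)\mathrm{e}^{-itA}\psi$. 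First I would handle the free part: $\phi_t$ is the flow of $v(p_2)=p_2$, so $w^{(2)}\circ\phi_t-w^{(2)} = |\mathrm{e}^t p_2|-|p_2| = (\mathrm{e}^t-1)|p_2|$ on each neutrino component, whence $t^{-1}\d\Gamma(w^{(2)}\circ\phi_t-w^{(2)})\to\d\Gamma(w^{(2)})$ strongly on $\cD(H_0)$; one controls this with \eqref{eq:3.69} and dominated convergence for $\d\Gamma$. For the interaction part, $t^{-1}(\mathrm{e}^{-iat}G-G)\to -iaG$ in $L^2(\Sigma_1\times\Sigma_1\times\Sigma_2)$ by Hypothesis~\ref{hypothesis:3.1}(iii.a) (this is exactly why that hypothesis guarantees $G(\xi_1,\cdot,\xi_3)\in\cD(a)$, as noted just before the proposition), and then Proposition~\ref{proposition:2.5} applied to the kernel $t^{-1}(\mathrm{e}^{-iat}G-G)+iaG$ shows $H_I$ of that kernel tends to $0$ in operator norm relative to $H_0$, giving $gH_I(-iaG)\psi$ in the limit. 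The same argument with $a^\sigma$, $a_\sigma$ and their flows $\phi_t^\sigma$, $\phi_{\sigma t}$ gives (ii) and (iii): here $v^\sigma(p_2)=\eta^\sigma(p_2)^2 p_2$ gives $\frac{d}{dt}\big|_{t=0}w^{(2)}\circ\phi_t^\sigma = (\eta^\sigma)^2 w^{(2)}$ (the derivative of $|p|$ in the direction $\eta^\sigma(p)^2 p$ is $\eta^\sigma(p)^2|p|$), and similarly for $\phi_{\sigma t}$, which accounts for the factors $(\eta^\sigma)^2$, $(\eta_\sigma)^2$ in the statement. For (iv), $H^\sigma$ differs from $H$ only in that $H_0^{(2)}$ is replaced by $H_0^{(2)\sigma}$ and the interaction kernel carries the extra cutoff $\tilde\chi^\sigma(p_2)$; the free commutator $[\d\Gamma(w^{(2)\sigma}),iA^\sigma]$ still produces $\d\Gamma((\eta^\sigma)^2 w^{(2)})$ restricted to $\gF^\sigma$ (note $(\eta^\sigma)^2$ already vanishes for $|p_2|\le 2\sigma$ so no issue at the cutoff), and the interaction term becomes $gH_I(-ia^\sigma(\tilde\chi^\sigma G))$ verbatim.

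For part (b), the key point is the uniform bound for $0<|t|\le 1$. From the expression above, $\big\|[H,t^{-1}(\mathrm{e}^{itA}-1)](H+i)^{-1}\big\|$ is bounded by
\[
 \big\|\d\Gamma(t^{-1}(w^{(2)}\circ\phi_t-w^{(2)}))(H_0+i)^{-1}\big\| \cdot \|(H_0+i)(H+i)^{-1}\|
 + g\big\|H_I(t^{-1}(\mathrm{e}^{-iat}G-G))(H+i)^{-1}\big\|
\]
up to the unitary factors $\mathrm{e}^{\pm itA}$, which have norm $1$, and the bounded operator $H_0(H+i)^{-1}$ from Theorem on self-adjointness. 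For the free term, $t^{-1}(w^{(2)}\circ\phi_t-w^{(2)}) = t^{-1}(\mathrm{e}^t-1)|p_2|$ has the explicit bound $t^{-1}(\mathrm{e}^t-1)\le \mathrm{e}$ for $|t|\le 1$, so $\d\Gamma$ of it is bounded by $\mathrm{e}\,\d\Gamma(w^{(2)})\le \mathrm{e}\,H_0$ on $\cD(H_0)$, uniformly in $t$; the cases $a^\sigma$, $a_\sigma$ are identical since $\eta^\sigma,\eta_\sigma\le 1$, and the flow estimate \eqref{eq:3.69} with $\Gamma\le 1$ (after absorbing constants) again supplies the bound. For the interaction term, $\big\|t^{-1}(\mathrm{e}^{-iat}G-G)\big\|_{L^2}\le \sup_{0<|t|\le1}t^{-1}\|(\mathrm{e}^{-iat}-1)G\|_{L^2}\le \|aG\|_{L^2}$ since $\|(\mathrm{e}^{-iat}-1)u\|\le |t|\,\|au\|$ for $u\in\cD(a)$, and this $L^2$-norm being finite by Hypothesis~\ref{hypothesis:3.1}(iii.a), Proposition~\ref{proposition:2.5} (together with the relative bounds $\|N_\ell\Psi\|\le m_1^{-1}\|H_0\Psi\|$ and $\|H_{0,\epsilon}^{(3)}\Psi\|\le\|H_0\Psi\|$ used in the proof of the self-adjointness theorem) gives $\|H_I(t^{-1}(\mathrm{e}^{-iat}G-G))(H_0+i)^{-1}\|$ bounded uniformly in $t$, hence the same with $(H+i)^{-1}$. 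The cases (ii)--(iv) are word-for-word the same with $aG$ replaced by $a^\sigma G$, $a_\sigma G$, or $a^\sigma(\tilde\chi^\sigma G)$, all of finite $L^2$-norm.

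The main obstacle I anticipate is not any single estimate but the bookkeeping needed to pass from $\psi\in\gD$ (where the conjugation identities \eqref{eq:3.76}--\eqref{eq:3.78}, \eqref{eq:3.81} hold) to general $\psi\in\cD(H)$ in part (a): one needs that $\mathrm{e}^{itA}$ (and $\mathrm{e}^{itA^\sigma}$, $\mathrm{e}^{itA_\sigma}$) map $\cD(H)=\cD(H_0)$ into itself with a locally uniform bound, which is precisely Proposition~\ref{proposition:3.6}, and that $\gD$ is a core for $H_0$, so that the strong limits computed on $\gD$ extend by the uniform bounds of part (b) to all of $\cD(H)$ — a standard $3\varepsilon$-argument, but one that must be laid out carefully because three different conjugate operators and the auxiliary Hamiltonian $H^\sigma$ all appear. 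A secondary technical point is verifying that $\d\Gamma$ of a sequence of one-particle multiplication operators converging in sup-norm-with-weight converges strongly on $\cD(\d\Gamma(w^{(2)}))$; this is routine but should be stated, e.g. via $\|\d\Gamma(m_t)\Psi\|\le \big(\sup|m_t/w^{(2)}|\big)\|\d\Gamma(w^{(2)})\Psi\|$ together with pointwise convergence and dominated convergence on each $n$-particle sector.
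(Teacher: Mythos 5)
Your proposal is essentially correct, and for part (a) it follows the paper's strategy: conjugate by $\mathrm{e}^{itA}$ (resp.\ $\mathrm{e}^{itA^\sigma}$, $\mathrm{e}^{itA_\sigma}$) using \eqref{eq:3.76}--\eqref{eq:3.78} and \eqref{eq:3.81}, control the free part pointwise with the flow estimate \eqref{eq:3.69} and pass to the limit by dominated convergence, and handle the interaction part via $L^2$-convergence of the transported kernels $t^{-1}(\mathrm{e}^{-iat}G-G)\to -iaG$ (justified by Hypothesis~\ref{hypothesis:3.1}(iii.a)) combined with the relative bounds of Proposition~\ref{proposition:2.5}. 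One small algebraic slip should be corrected: your displayed identity carries an extraneous $\mathrm{e}^{-itA}$ on the far right; the correct version is $\big[H,\tfrac{\mathrm{e}^{itA}-1}{t}\big]\psi = \tfrac1t\,\mathrm{e}^{itA}\big(\mathrm{e}^{-itA}H\mathrm{e}^{itA}-H\big)\psi$. This does not change the limit, since you would just be computing the same derivative at $-t$, but the intermediate formulas will not match \eqref{eq:3.76}--\eqref{eq:3.78} as stated unless you fix it.

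Where you genuinely diverge from the paper is in part (b) and in the passage from $\gD$ to $\cD(H)$. The paper obtains (b) from (a) in a single line by the uniform boundedness principle (the family $\big[H,\tfrac{\mathrm{e}^{itA}-1}{t}\big](H+i)^{-1}$ converges strongly as $t\to0$ and is strongly continuous on $(0,1]$, hence pointwise bounded, hence uniformly bounded), whereas you prove (b) directly through $t$-uniform operator-norm estimates: $t^{-1}(\mathrm{e}^{\Gamma|t|}-1)$ bounded on $|t|\le 1$ for the free part, and $\|t^{-1}(\mathrm{e}^{-iat}-1)G\|_{L^2}\le\|aG\|_{L^2}$ for the interaction part. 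You then feed (b), Proposition~\ref{proposition:3.6}, and the core property of $\gD$ into a $3\varepsilon$ argument to extend (a) from $\gD$ to $\cD(H)$; the paper instead asserts (a) directly on $\cD(H_0)$ after extending the conjugation identities and applying Lebesgue's theorem under the $t$-uniform relative bound. Both orders are sound. Your route is more constructive, gives explicit constants, and transparently avoids any apparent circularity between (a) and (b); the paper's is shorter because the uniform boundedness principle does the bookkeeping for free.
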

%%%%%%%%%%%%%%%%%%%%%%%%%%%%%%%%%%%%%%%%%%%%%%%%
%
%
%
%%%%%%%%%%%%%%%%%%%%%%%%%%%%%%%%%%%%%%%%%%%%%%%%
\begin{proof}
Part $\mathrm{(b)}$ follows from part $\mathrm{(a)}$ by the
uniform boundedness principle. For part $\mathrm{(a)}$, we only
prove $\mathrm{(a)}$(i), since other statements can be proved
similarly.

By \eqref{eq:3.69}, we obtain
\begin{equation}\nonumber
 \frac{1}{|t|} \big| w_\ell^{(2)} (\phi_t(p_2)) -
 w_\ell^{(2)}(p_2) \big| \leq \frac{1}{|t|}
 \big( \mathrm{e}^{\Gamma\, |t|} -1\big) w_\ell^{(2)}(p_2)\ ,
\end{equation}
for $\ell=1,2,3$.

By \eqref{eq:3.76}-\eqref{eq:3.78} and the Lebesgue's Theorem we
then get for all $\psi\in\cD(H_0)$
\begin{equation}\nonumber
\begin{split}
 & \lim_{t\rightarrow0} \big[ H_0, \frac{\mathrm{e}^{i t
 A}-1}{t}\big] \psi
 = \lim_{t\rightarrow0} \frac{1}{t} \big[ \mathrm{e}^{-i t A
 }H_0\mathrm{e}^{i t A} - H_0\big] \psi = \d\Gamma( w^{(2)}) \psi\
 , \\
 & \lim_{t\rightarrow0} \big[ H_0, \frac{\mathrm{e}^{i t
 A^\sigma}-1}{t}\big] \psi
 = \lim_{t\rightarrow0} \frac{1}{t} \big[ \mathrm{e}^{-i t A^\sigma
 }H_0\mathrm{e}^{i t A^\sigma} - H_0\big] \psi
 = \d\Gamma( (\eta^\sigma)^2 w^{(2)}) \psi\
 , \\
 & \lim_{t\rightarrow0} \big[ H_0, \frac{\mathrm{e}^{i t
 A_\sigma}-1}{t}\big] \psi
 = \lim_{t\rightarrow0} \frac{1}{t} \big[ \mathrm{e}^{-i t A_\sigma
 }H_0\mathrm{e}^{i t A_\sigma} - H_0\big] \psi =
 \d\Gamma( (\eta_\sigma)^2 w^{(2)}) \psi\
 .
\end{split}
\end{equation}
By \eqref{eq:3.81}, we obtain for all $\psi\in\cD(H)$,
\begin{equation}\nonumber
\begin{split}
 & \lim_{t\rightarrow 0} \big[ H_I(G), \frac{\mathrm{e}^{i t
 A}\!-1\!}{t}\big] \psi
 = \lim_{t\rightarrow0} \frac{1}{t} \big[ \mathrm{e}^{-i t A
 }H_I (G)\mathrm{e}^{i t A } - H_I (G) \big] \psi
 = H_I ( -i (a\, G))\psi, \\
 & \lim_{t\rightarrow 0} \big[ H_I(G), \frac{\mathrm{e}^{i t
 A^\sigma}\!-\!1}{t}\big] \psi
 = \lim_{t\rightarrow0} \frac{1}{t} \big[ \mathrm{e}^{-i t
 A^\sigma
 }H_I (G)\mathrm{e}^{i t A^\sigma } - H_I (G) \big] \psi
 = H_I ( -i (a^\sigma G))\psi, \\
 & \lim_{t\rightarrow 0} \big[ H_I(G), \frac{\mathrm{e}^{i t
 A_\sigma}\!-\!1}{t}\big] \psi
 = \lim_{t\rightarrow0} \frac{1}{t} \big[ \mathrm{e}^{-i t
 A_\sigma
 }H_I (G)\mathrm{e}^{i t A_\sigma } - H_I (G) \big] \psi
 = H_I ( -i (a_\sigma G))\psi, \\
 & \lim_{t\rightarrow 0} \big[ H_I(\tilde\chi^\sigma(p_2)G),
 \frac{\mathrm{e}^{i t A^\sigma }\!-\!1}{t}\big] \psi \\
 & = \lim_{t\rightarrow0} \frac{1}{t} \big[ \mathrm{e}^{-i t
 A^\sigma}H_I (\tilde\chi^\sigma(p_2) G)
 \mathrm{e}^{i t A^\sigma} - H_I (\tilde\chi^\sigma(p_2) G) \big] \psi
 = H_I ( -i (a^\sigma (\tilde\chi^\sigma(p_2) G))) \psi \ .
\end{split}
\end{equation}
This concludes the proof of Proposition~\ref{proposition:3.8}.
\end{proof}

Combining \eqref{eq:3.92} with Proposition~\ref{proposition:3.8},
we finally get for every $\varphi\in C_0^\infty((-\infty, m_1 -
\delta/2))$ and every $\psi\in\gF$
\begin{equation}\label{eq:3.104-added}
 \vp(H) \big[ H,\, iA\big] \vp(H) \psi
 = \vp(H) \big[ \d\Gamma(w^{(2)}) + gH_I (-i (a\, G))\big]
 \vp(H) \psi \ ,
\end{equation}
\begin{equation}\label{eq:3.105}
 \vp(H) \big[ H,\, iA^\sigma \big] \vp(H) \psi
 = \vp(H) \big[ \d\Gamma((\eta^\sigma)^2 w^{(2)}) + gH_I (-i (a^\sigma G))
 \big]
 \vp(H) \psi \ ,
\end{equation}
\begin{equation}\label{eq:3.106}
 \vp(H) \big[ H,\, iA_\sigma \big] \vp(H) \psi
 = \vp(H) \big[ \d\Gamma( (\eta_\sigma)^2 w^{(2)}) + gH_I (-i (a_\sigma G))
 \big]
 \vp(H) \psi \ ,
\end{equation}
and
\begin{equation}\label{eq:3.107}
 \vp(H^\sigma) \big[ H^\sigma,\, i A ^\sigma \big] \vp(H^\sigma) \psi
 = \vp(H^\sigma) \big[ \d\Gamma( (\eta^\sigma)^2 w^{(2)})
 + g H_I (-i (a^\sigma
 (\tilde\chi^\sigma G) ))\big]
 \vp(H^\sigma) \psi \ .
\end{equation}

%\marginpar{starting here, is W40,W41,W42}

We now introduce the Mourre inequality.

Let $N$ be the smallest integer such that
 $$
  N\gamma\geq 1 .
 $$

We have, for $g\leq g^{(1)}_\delta$,
\begin{equation}\label{eq:3.108}
\begin{split}
 & \gamma < \gamma + \frac{1}{N} (1 - \frac{3 g \tilde D}{\gamma}
 -\gamma) < 1 -\frac{3 g \tilde D}{\gamma}\ ,\\
 & \frac{\gamma}{N} \leq \gamma - \frac{1}{N} (1 - \frac{3 g \tilde
 D}{\gamma} - \gamma) < \gamma\ .
\end{split}
\end{equation}
Let
\begin{equation}\nonumber
 \epsilon_\gamma = \frac{1}{2N} ( 1 -\frac{ 3 g_\delta^{(1)}
 \tilde D}{\gamma} - \gamma)\ .
\end{equation}
We choose $f\in C_0^\infty(\R)$ such that $1\geq f \geq 0$ and
\begin{equation}\label{eq:3.109}
f(\lambda) = \left\{
\begin{array}{ll}
 1&  \mbox{ if } \lambda\in[(\gamma-\epsilon_\gamma)^2,
 \gamma+ \epsilon_\gamma]\ , \\
 0&  \mbox{ if } \lambda > \gamma+\frac{1}{N} (1 -
 \frac{3 g_\delta^{(1)}\tilde D}{\gamma} - \gamma) = \gamma + 2
 \epsilon_\gamma\ , \\
 0             &  \mbox{ if } \lambda < (\gamma -\frac{1}{N}
              (1 - \frac{ 3 g_\delta^{(1)} \tilde D}{\gamma}
              -\gamma))^2 = (\gamma - 2\epsilon_\gamma)^2\ .
\end{array}
\right.
\end{equation}
Note that $\gamma + 2\epsilon_\gamma < 1 - 3 g \tilde D/\gamma$
for $g\leq g_\delta^{(1)}$ and $\gamma - \epsilon_\gamma
>\gamma/N$.

We set, for $n\geq 1$,
\begin{equation}\nonumber
 f_n(\lambda) = f\left(\frac{\lambda}{\sigma_n}\right)\ .
\end{equation}

Let
\begin{equation}\nonumber
\begin{split}
& H_n = H_{\sigma_n}\ , \\
& E_n = \inf\sigma(H_n)\ , \\
& H_{0\, n}^{(2)} = H_{0\, \sigma_n}^{(2)}\ .
\end{split}
\end{equation}
Let $P^n$ denote the ground state projection of $H^n$. It follows
from proposition~\ref{proposition:3.5} that, for $n\geq 1$ and
$g\leq \tilde{g}_\delta \leq g_\delta^{(1)}$,
\begin{equation}\label{eq:3.112}
  f_n(H_n - E_n) = P^n\otimes f_n(H_{0,\, n}^{(2)})\ .
\end{equation}
Note that
\begin{equation}\label{eq:3.113}
 E_n = E^n = \inf\sigma(H^n)\ .
\end{equation}
Set
\begin{equation}\nonumber
\begin{split}
 & a^n = a^{\sigma_n} \ , \\
 & a_n = a_{\sigma_n} \ , \\
 & A^n = A^{\sigma_n} \ , \\
 & A_n = A_{\sigma_n} \ , \\
 & \gF^n = \gF^{\sigma_n}\ , \\
 & \gF_n = \gF_{\sigma_n}\ .
\end{split}
\end{equation}
We have
\begin{equation}\nonumber
\begin{split}
 & \gF \simeq \gF^n \otimes \gF_n\ , \\
 & A =  A^n + A_n \ .
\end{split}
\end{equation}
We further note that
\begin{equation}\label{eq:3.117}
 a^n \tilde\chi^{\sigma_n} ( p_2 ) = a^n\ .
\end{equation}
By \eqref{eq:3.105}, \eqref{eq:3.107} and \eqref{eq:3.117}, we
obtain
\begin{equation}\nonumber
  [H, iA^n] = [H^n, i A^n] \otimes \1\ ,
\end{equation}
as sesquilinear forms with respect to $\gF = \gF^n \otimes\gF_n$.

Furthermore, it follows from the virial Theorem (see
\cite[Proposition~3.2]{Sahbani1997} and
Proposition~\ref{proposition:6.1}) that
\begin{equation}\label{eq:3.119}
 P^n [H^n, iA^n]P^n =0\ .
\end{equation}
By \eqref{eq:3.112} and \eqref{eq:3.119} we then get, for $g\leq
\tilde{g}_\delta \leq g_\delta^{(1)}$,
\begin{equation}\nonumber
 f_n(H_n - E_n) [H, iA^n] f_n(H_n - E_n) =0\ .
\end{equation}

We then have
\begin{proposition}\label{proposition:3.9}
 Suppose that the kernels $G^{(\alpha)}_{\ell,\epsilon,\epsilon'}$
 satisfy Hypothesis~\ref{hypothesis:2.1} and \ref{hypothesis:3.1}.
 Then there exists $\tilde{C}_\delta>0$ and
 $\tilde{g}_\delta^{(1)}>0$ such that $\tilde g_\delta^{(1)} \leq
 \tilde{g}_\delta$ and
\begin{equation}\nonumber
 f_n(H_n - E_n) [H, i A_n] f_n(H_n - E_n) \geq
 \tilde C_\delta \frac{\gamma^2}{N^2} \sigma_n
 f_n(H_n -E_n)^2
\end{equation}
for $n\geq 1$ and $g\leq \tilde g_\delta^{(1)}$.
\end{proposition}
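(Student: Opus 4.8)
The goal is a Mourre estimate for the conjugate operator $A_n = A_{\sigma_n}$ localized by $f_n(H_n - E_n)$. The strategy is the standard Fröhlich–Griesemer–Sigal induction on the scale $\sigma_n$, transferred to the present model exactly as in \cite{Frohlichetal2008}. The key structural facts already available are: the factorization $\gF \simeq \gF^n \otimes \gF_n$, the corresponding splitting $H_n = H^n \otimes \1_n + \1^n \otimes H_{0,n}^{(2)}$, the identity $f_n(H_n - E_n) = P^n \otimes f_n(H_{0,n}^{(2)})$ from \eqref{eq:3.112} (valid because of the spectral gap in Proposition~\ref{proposition:3.5}), and the commutator formulas \eqref{eq:3.105}--\eqref{eq:3.107}. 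The first step is to write, as a sesquilinear form on $\gF^n \otimes \gF_n$,
\begin{equation}\nonumber
 [H_n, iA_n] = \1^n \otimes [H_{0,n}^{(2)}, iA_n] + g\, [H_{I,\sigma_n}, iA_n]\ ,
\end{equation}
using that the piece of $H_0$ that does not involve the soft neutrino modes commutes with $A_n$, and that $[H_{0,n}^{(2)}, iA_n] = \d\Gamma((\eta_{\sigma_n})^2 w^{(2)})$ restricted to $\gF_n$.

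The second step is to bound below the free part. On $\mathrm{Ran}\, f_n(H_{0,n}^{(2)})$ one has $H_{0,n}^{(2)} \in$ (roughly) $[(\gamma - 2\epsilon_\gamma)^2 \sigma_n, (\gamma + 2\epsilon_\gamma)\sigma_n]$ by the support properties \eqref{eq:3.109} of $f$, and since $\eta_{\sigma_n}(p_2)^2 = 1$ on $|p_2| \leq 2\sigma_n$ — which covers the entire support of the relevant neutrino modes once one is inside $\gF_n$, i.e.\ $|p_2| \leq \sigma_n$ — we get $\d\Gamma((\eta_{\sigma_n})^2 w^{(2)}) = H_{0,n}^{(2)}$ on $\gF_n$ and hence, using $\gamma - \epsilon_\gamma > \gamma/N$ together with \eqref{eq:3.108},
\begin{equation}\nonumber
 f_n(H_{0,n}^{(2)})\, [H_{0,n}^{(2)}, iA_n]\, f_n(H_{0,n}^{(2)}) \geq c\, \frac{\gamma^2}{N^2}\, \sigma_n\, f_n(H_{0,n}^{(2)})^2
\end{equation}
for a suitable constant $c>0$ (this is where the $\gamma^2/N^2$ factor in the statement originates). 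Sandwiching with $P^n \otimes \1$ and using $f_n(H_n - E_n) = P^n \otimes f_n(H_{0,n}^{(2)})$ turns this into a lower bound $\tilde C_\delta \frac{\gamma^2}{N^2}\sigma_n f_n(H_n - E_n)^2$ for the free contribution.

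The third step is to control the interaction error term $g\, f_n(H_n - E_n)[H_{I,\sigma_n}, iA_n] f_n(H_n - E_n)$. Here one uses the explicit commutator $[H_{I,\sigma_n}, iA_n] = H_I(-i a_{\sigma_n}(\tilde\chi^{\sigma_n} G))$ from the analogue of Proposition~\ref{proposition:3.8}, and the $N_\tau$-type estimates of Propositions~\ref{proposition:2.4}--\ref{proposition:2.5} applied to the cutoff kernel: one must check that $\|a_{\sigma_n}(\tilde\chi^{\sigma_n} G)\|$ and the associated weighted $L^2$ norms (with weight $1/w^{(3)}$) are controlled, uniformly in $n$, by $\tilde K(G)$ times a factor $O(\sigma_n)$. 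The factor $\sigma_n$ comes from the vector field $v_{\sigma_n}(p_2) = \eta_{\sigma_n}(p_2)^2 p_2$, which is supported in $|p_2| \lesssim \sigma_n$ and therefore has size $O(\sigma_n)$; the derivative bounds on $\eta_{\sigma_n}$ produce the constant $\Gamma$ and the sup-factor $\sup(4\Lambda\gamma/(2m_1-\delta), 1)$ entering $\tilde D$. Thus this term is bounded in norm by $g\,\tilde D\,\sigma_n\, \|f_n(H_n - E_n)\|^2$ up to constants, i.e.\ by something of order $g \sigma_n$, and for $g$ small enough (this fixes $\tilde g_\delta^{(1)} \leq \tilde g_\delta$) it is absorbed into half of the main lower bound.

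\textbf{Main obstacle.} The delicate point is the uniformity in $n$ of the interaction bound: one needs the commutator $[H_{I,\sigma_n}, iA_n]$ — despite $A_n$ being an \emph{unbounded} second-quantized dilation generator — to be relatively $H$-bounded with a constant that scales like $\sigma_n$ rather than like a fixed constant, so that it genuinely competes with the $\sigma_n$-sized free gap rather than swamping it. This requires carefully exploiting that $a_{\sigma_n}$ acts only on the soft modes $|p_2| \leq 2\sigma_n$ and that $\tilde\chi^{\sigma_n} G$ further restricts to $|p_2| \geq \sigma_n$, so the commutator kernel $a_{\sigma_n}(\tilde\chi^{\sigma_n} G)$ lives on the dyadic shell $\sigma_n \leq |p_2| \leq 2\sigma_n$ where $|v_{\sigma_n}| = O(\sigma_n)$; Hypotheses~\ref{hypothesis:3.1}(iii.a) and (i) are exactly what make the relevant integrals finite and $O(\sigma_n)$-small. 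Once this scaling is established, the rest is the routine Mourre bookkeeping sketched above.
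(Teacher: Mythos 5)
Your decomposition of the commutator, the use of the factorization $\gF\simeq\gF^n\otimes\gF_n$, and the handling of the free part (lower bound $\d\Gamma((\eta_{\sigma_n})^2w^{(2)})\geq H_{0,n}^{(2)}$ and then the spectral support of $f_n$ giving the factor $\frac{\gamma^2}{N^2}\sigma_n$) match the paper. The gap is in step~3.

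You propose to bound the localized interaction commutator $g\,f_n(H_n-E_n)\,H_I(-ia_{\sigma_n}G)\,f_n(H_n-E_n)$ in operator norm by $O(g\sigma_n)$, attributing the $\sigma_n$-smallness to the shell support $\sigma_n\lesssim|p_2|\lesssim\sigma_n$ and $|v_{\sigma_n}|=O(\sigma_n)$. This does not follow from Hypotheses~\ref{hypothesis:3.1}(i) and (iii.a). Writing $a_{\sigma_n}G=\tfrac12(\mathrm{div}\,v_{\sigma_n})G+v_{\sigma_n}\cdot\nabla_{p_2}G$, the term $v_{\sigma_n}\cdot\nabla_{p_2}G=\eta_{\sigma_n}^2\,(p_2\cdot\nabla_{p_2}G)$ is controlled in $L^2$ only via $(iii.a)$, i.e.\ by $\|p_2\cdot\nabla_{p_2}G\|_{L^2}$, which carries no $\sigma_n$-decay: the size $|v_{\sigma_n}|=O(\sigma_n)$ is precisely cancelled by the $|\nabla_{p_2}G|\sim|p_2|^{-1}|p_2\cdot\nabla_{p_2}G|$ it multiplies. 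The $N_\tau$-type estimates (Propositions~\ref{proposition:2.4}--\ref{proposition:2.5}) then give $\|H_I(-ia_{\sigma_n}G)(H_0+1)^{-1}\|=O(1)$ uniformly in $n$, not $O(\sigma_n)$. Consequently your interaction error term is $O(g)$ rather than $O(g\sigma_n)$, and it swamps the $\sigma_n$-sized free gap for $n$ large no matter how small $g$ is.

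The paper sidesteps exactly this obstacle. Starting from \eqref{eq:3.106} it does \emph{not} try to bound $gH_I(-ia_\sigma G)$ in norm; instead, in \eqref{eq:5.1} a piece $g\,\d\Gamma((\eta_\sigma)^2w^{(2)})$ is peeled off from the free part, and the combination $\d\Gamma((\eta_\sigma)^2w^{(2)})+H_I(-ia_\sigma G)$ is then rewritten via the completion-of-squares identity \eqref{eq:5.5} as a manifestly nonnegative sum of $B^*B$-type operators minus a remainder $\sum\int\bigl(\tfrac{\mathcal{M}^*}{|p_2|^{1/2}}\bigr)\bigl(\tfrac{\mathcal{M}}{|p_2|^{1/2}}\bigr)\d\xi_2$. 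The dangerous cross-term (which is what you were trying to bound directly) is absorbed into the square, and only this remainder needs a norm bound. Crucially, the remainder carries the extra weight $|p_2|^{-1}$, and since its kernel lives in $|p_2|\lesssim\sigma$, the inequality $\tfrac{1}{|p_2|}\leq\tfrac{C\sigma}{|p_2|^2}$ together with Hypothesis~\ref{hypothesis:3.1}(i) (finiteness of $\tilde K(G)$) yields the needed $O(\sigma)$ bound \eqref{eq:5.9}. Without this positive algebraic structure, the $\sigma_n$-smallness is simply not there, and the proposed proof does not close.

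A secondary slip: your first display writes $[H_n,iA_n]$ and later $[H_{I,\sigma_n},iA_n]$, whereas the statement and the paper's proof use $[H,iA_n]$ with the full kernel $G$ (cf.\ \eqref{eq:3.106} and \eqref{eq:5.1}); this needs to be reconciled, though it is a minor issue compared with the missing completion-of-squares step.
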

Let $E_\Delta(H-E)$ be the spectral projection for the operator
$H-E$ associated with the interval $\Delta$, and let
\begin{equation}\label{eq:3.27-bis}
 \Delta_n = [ (\gamma -\epsilon_\gamma)^2\sigma_n, \,
 (\gamma+\epsilon_\gamma)\sigma_n],\ n\geq 1\ .
\end{equation}
Note that
\begin{equation}\label{eq:3.27-ter}
 [\sigma_{n+2},\sigma_{n+1}] \subset \left(
 (\gamma-\epsilon_\gamma)^2\sigma_n,\,
 (\gamma+\epsilon_\gamma)\sigma_n\right),\ n\geq 1\ .
\end{equation}
\begin{theorem}\label{thm:3.10}
Suppose that the kernels $G_{\ell,\epsilon,\epsilon'}^{(\alpha)}$
satisfy Hypothesis~\ref{hypothesis:2.1} and \ref{hypothesis:3.1}.
Then there exists $C_\delta>0$ and $\tilde g_\delta^{(2)}>0$ such
that $\tilde g_\delta^{(2)}\leq\tilde g_\delta^{(1)}$ and
\begin{equation}\nonumber
 E_{\Delta_n} (H-E) [H,\, iA] E_{\Delta_n} (H-E)
 \geq C_\delta \frac{\gamma^2}{N^2} \sigma_n E_{\Delta_n} (H-E)\ ,
\end{equation}
for $n\geq 1$ and $g\leq \tilde g_\delta^{(2)}$.
\end{theorem}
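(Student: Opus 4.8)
The plan is to transfer the Mourre estimate for the infrared--cutoff Hamiltonian $H_n$ and the conjugate operator $A_n$ from Proposition~\ref{proposition:3.9} to the full Hamiltonian $H$ and $A$, by replacing the spectral localization $f_n(H_n-E_n)$ by the spectral projection $E_{\Delta_n}(H-E)$. First I would combine Proposition~\ref{proposition:3.9} with the decomposition $A=A^n+A_n$ and with the identity $f_n(H_n-E_n)[H,iA^n]f_n(H_n-E_n)=0$ --- which follows from $[H,iA^n]=[H^n,iA^n]\otimes\1$, from \eqref{eq:3.112}--\eqref{eq:3.113}, and from the virial relation \eqref{eq:3.119} --- to obtain, as forms on $\cup_K E_K(H)\gF$,
\begin{equation}\nonumber
 f_n(H_n-E_n)\,[H,iA]\,f_n(H_n-E_n)\;\geq\;
 \tilde C_\delta\,\frac{\gamma^2}{N^2}\,\sigma_n\,f_n(H_n-E_n)^2\ ,\qquad n\geq 1,\ g\leq\tilde g_\delta^{(1)}\ .
\end{equation}
Here, by $f_n^\#:=f_n(H_n-E_n)$ mapping $\gF$ into $\cD(H_n)=\cD(H)$ and by the $C^1(A)$ regularity of $H$ (Theorem~\ref{thm:3.7} and Proposition~\ref{proposition:3.8}), the left-hand side is in fact a bounded operator on $\gF$.

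Next I would quantify the proximity of $H_n$, $E_n$ and $f_n^\#$ to $H$, $E$ and $f_n(H-E)$. Since $H-H_n=g\,H_I(\chi_{\sigma_n}(p_2)G)$ with $\chi_{\sigma_n}$ supported in $\{|p_2|\leq 2\sigma_n\}$, Hypothesis~\ref{hypothesis:3.1}(ii) gives $\|\chi_{\sigma_n}(p_2)G^{(\alpha)}_{\ell,\epsilon,\epsilon'}\|_{L^2}\leq 4C\sigma_n^2$, and then Proposition~\ref{proposition:2.5} (together with $w^{(3)}\geq m_W$) shows that $H-H_n$ is $H_0$--bounded with relative bound $O(g\sigma_n^2)$; in particular $\|(H-H_n)(H_n+i)^{-1}\|\leq c\,g\sigma_n^2$, and comparing the ground states of $H$ and $H_n$ (using that $(H_0+1)(H_n+i)^{-1}$ is bounded) also $|E-E_n|\leq c\,g\sigma_n^2$. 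Inserting these into the Helffer--Sj\"ostrand representation of $f((H-E)/\sigma_n)-f((H_n-E_n)/\sigma_n)$ and using $\sigma_n\leq\sigma_1\leq\Lambda$ to absorb the factor $\sigma_n^{-1}$ produced by the rescaling, I would obtain, with $\kappa_n:=c'\,g\sigma_n$,
\begin{equation}\nonumber
 \big\|\,f_n(H-E)-f_n^\#\,\big\|\leq\kappa_n\ ,\qquad
 \big\|(H+i)\big(f_n(H-E)-f_n^\#\big)(H+i)^{-1}\big\|\leq\kappa_n\ .
\end{equation}
Moreover $f_n\equiv 1$ on $\Delta_n$ by \eqref{eq:3.27-bis} and \eqref{eq:3.109}, so $f_n(H-E)E_{\Delta_n}(H-E)=E_{\Delta_n}(H-E)$; and since $\Delta_n\subset[0,(\gamma+\epsilon_\gamma)\sigma_1]$ and the support of $f_n^\#$ is likewise bounded, both $(H+i)E_{\Delta_n}(H-E)$ and $(H+i)f_n^\#$ are bounded uniformly in $n\geq 1$.

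Finally I would put $P_n:=E_{\Delta_n}(H-E)$, $\hat f_n:=f_n(H-E)$, use $\hat f_nP_n=P_n$, write $\hat f_n=f_n^\#+D$ with $D:=\hat f_n-f_n^\#$, $\|D\|\leq\kappa_n$, and expand $P_n[H,iA]P_n=(P_n\hat f_n)[H,iA](\hat f_nP_n)$. The leading term $P_nf_n^\#[H,iA]f_n^\#P_n$ is, by the first display, bounded below by $\tilde C_\delta\frac{\gamma^2}{N^2}\sigma_n\,P_n(f_n^\#)^2P_n\geq\tilde C_\delta\frac{\gamma^2}{N^2}\sigma_n(1-2\kappa_n-\kappa_n^2)P_n$, using $\|f_n^\#P_n-P_n\|\leq\kappa_n$. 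Each of the three remaining terms carries a factor $D$, and I would bound it in operator norm by $O(g\sigma_n)$, using Proposition~\ref{proposition:3.8} --- which, via Theorem~\ref{thm:3.7}, gives that $[H,iA]$ extends to a bounded operator $\cD(H)\to\gF$ with $\|[H,iA]\psi\|\leq M\|(H+i)\psi\|$ --- together with the uniform bounds on $(H+i)f_n^\#$, $(H+i)P_n$ and $(H+i)D(H+i)^{-1}$ from the previous step. Collecting the estimates and restricting to $\mathrm{Ran}\,P_n$ gives
\begin{equation}\nonumber
 E_{\Delta_n}(H-E)\,[H,iA]\,E_{\Delta_n}(H-E)\;\geq\;
 \Big(\tilde C_\delta\tfrac{\gamma^2}{N^2}\big(1-O(g\sigma_n)\big)-O(g)\Big)\,\sigma_n\,E_{\Delta_n}(H-E)\ ,
\end{equation}
with constants uniform in $n\geq 1$ (since $\sigma_n\leq\Lambda$); choosing $\tilde g_\delta^{(2)}\in(0,\tilde g_\delta^{(1)}]$ small enough that the bracket exceeds $\tfrac12\tilde C_\delta\gamma^2/N^2=:C_\delta\gamma^2/N^2$ for all $g\leq\tilde g_\delta^{(2)}$ completes the proof.

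The step I expect to be the main obstacle is the last one: controlling the unbounded commutator form $[H,iA]$ while commuting it past the difference $D$ of the two spectral cutoffs. This is precisely where the $C^1(A)$ regularity of $H$ from Theorem~\ref{thm:3.7} (through Proposition~\ref{proposition:3.8}) and the boundedness of $H$ on the relevant spectral subspaces come into play; the accompanying technical point is the functional-calculus estimate $\|(H+i)(f_n(H-E)-f_n(H_n-E_n))(H+i)^{-1}\|=O(g\sigma_n)$, which relies on the Helffer--Sj\"ostrand formula combined with the relative bounds $\|(H-H_n)(H_n+i)^{-1}\|=O(g\sigma_n^2)$ and $|E-E_n|=O(g\sigma_n^2)$ obtained from Proposition~\ref{proposition:2.5} and Hypothesis~\ref{hypothesis:3.1}(ii).
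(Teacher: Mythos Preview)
Your proposal is correct and follows essentially the same route as the paper: you combine Proposition~\ref{proposition:3.9} with the virial identity \eqref{eq:3.119} to obtain the Mourre estimate for $f_n(H_n-E_n)[H,iA]f_n(H_n-E_n)$, then transfer it to $H$ via the two key comparison estimates $|E-E_n|=O(g\sigma_n^2)$ and $\|f_n(H-E)-f_n(H_n-E_n)\|=O(g\sigma_n)$ (the paper isolates these as Lemmas~\ref{lemma:5.1} and \ref{lemma:5.2}), and finally multiply by $E_{\Delta_n}(H-E)$. The only organizational difference is that the paper first derives the inequality for $f_n(H-E)[H,iA]f_n(H-E)$ and then sandwiches with $E_{\Delta_n}$, whereas you insert $E_{\Delta_n}$ from the start and expand $\hat f_n=f_n^\#+D$; your version requires the slightly stronger estimate $\|(H+i)D(H+i)^{-1}\|=O(g\sigma_n)$, but this follows from the same Helffer--Sj\"ostrand argument.
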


\section{Existence of a ground state and location of the absolutely continuous spectrum}
\setcounter{equation}{0}

We now prove Theorem~\ref{thm:3.3}. The scheme of the proof is
quite well known (see \cite{Bachetal1999S}, \cite{Hiroshima2001}).
It follows from Proposition~\ref{proposition:3.5} that $H^n$ has
an unique ground state, denoted by $\phi^n$, in $\gF^n$,
\begin{equation}\nonumber
 H^n \phi^n = E^n \phi^n,\quad \phi^n\in\cD(H^n),\quad
 \|\phi^n\|=1,\quad n\geq 1\ .
\end{equation}
Therefore $H_n$ has an unique normalized ground state in $\gF$,
given by $\tilde\phi_n = \phi^n \otimes \Omega_n$, where
$\Omega_n$ is the vacuum state in $\gF_n$,
\begin{equation}\nonumber
 H_n \tilde\phi_n = E^n \tilde\phi_n,\quad \tilde\phi_n\in\cD(H_n),\quad
 \|\tilde\phi_n\|=1,\quad n\geq 1\ .
\end{equation}
Since $\| \tilde\phi_n \| =1$, there exists a subsequence
$(n_k)_{k\geq 1}$, converging to $\infty$ such that
$(\tilde\phi_{n_k})_{k\geq 1}$ converges weakly to a state
$\tilde\phi\in\gF$. We have to prove that $\tilde\phi\neq 0$. By
adapting the proof of Theorem~4.1 in \cite{Amouretal2007} (see
also \cite{Barbarouxetal2004}), the key point is to estimate $\|
c_{\ell,\epsilon}(\xi_2) \tilde\Phi_n\|_{\gF}$ in order to show
that
\begin{equation}\label{eq:4.0}
 \sum_{\ell=1}^3 \sum_\epsilon \int \|c_{\ell,\epsilon}(\xi_2)
\tilde\phi_n\|^2 \d\xi_2 = \mathcal{O}(g^2)\ ,
\end{equation}
uniformly with respect to $n$.

The estimate \eqref{eq:4.0} is a consequence of the so-called
``pull-through'' formula as it follows.

Let $H_{I_,n}$ denote the interaction $H_I$ associated with the
kernels $\1_{\{ |p_2|\geq \sigma_n\} }(p_2)
G^{(\alpha)}_{\ell,\epsilon,\epsilon'}$. We thus have
\begin{equation}\nonumber
\begin{split}
 & H_0 c_{\ell,\epsilon}(\xi_2) \tilde\phi_n =
 c_{\ell,\epsilon}(\xi_2) H_0\tilde\phi_n - w_\ell^{(2)}(\xi_2)
 c_{\ell,\epsilon}(\xi_2)\tilde\phi_n \\
 & g H_{I,n} c_{\ell,\epsilon}(\xi_2) \tilde\phi_n =
 c_{\ell,\epsilon}(\xi_2) g H_{I,n} \tilde\phi_n
 + g V_{\ell,\epsilon,\epsilon'}(\xi_2)\tilde\phi_n\ ,
\end{split}
\end{equation}
with
\begin{equation}\nonumber
\begin{split}
 V_{\ell,\epsilon,\epsilon'} (\xi_2)
 = & g \int G^{(1)}_{\ell,\epsilon'\epsilon}(\xi_2,\xi_2,\xi_3)
 b_{\ell,\epsilon'}^*(\xi_1) a_\epsilon(\xi_3) \d\xi_1\, \d\xi_3
 \\
 & + g \int  G^{(2)}_{\ell,\epsilon'\epsilon}(\xi_2,\xi_2,\xi_3)
 b_{\ell,\epsilon'}^* (\xi_1) a_\epsilon^*(\xi_3) \d\xi_1\,
 \d\xi_3\ .
\end{split}
\end{equation}
This yields
\begin{equation}\label{eq:4.8}
 \left(H_n - E_n + w_\ell^{(2)}(\xi_2)\right) c_{\ell,\epsilon}(\xi_2)
 \tilde\phi_n = V_{\ell,\epsilon,\epsilon'} (\xi_2)\tilde\phi_n\ .
\end{equation}
By adapting the proof of Propositions~\ref{proposition:2.4} and
\ref{proposition:2.5} we easily get
\begin{equation}\label{eq:4.9}
\begin{split}
 \| V_{\ell,\epsilon,\epsilon'} \psi\|_{\gF}
 & \leq \frac{g}{m_W{}^\frac12} \left( \sum_{\alpha=1,2}
 \| G^{(\alpha)}_{\ell,\epsilon,\epsilon'}
 (.,\xi_2,.)\|_{L^2(\Sigma_1\times\Sigma_2)}\right) \|
 H_0^\frac12\psi\| \\
 & + g\, \| G^{(2)}_{\ell,\epsilon,\epsilon'}
 (.,\xi_2,.)\|_{L^2(\Sigma_1\times\Sigma_2)} \|\psi\|\ ,
\end{split}
\end{equation}
where $\psi\in\cD(H_0)$.

Let us estimate $\| H_0\tilde\phi_n \|$. By \eqref{eq:2.79},
\eqref{eq:2.80}, \eqref{eq:3.33}, \eqref{eq:3.34} and
\eqref{eq:3.36}  we have
\begin{equation}\nonumber
 g \| H_{I, n} \tilde\phi_n\| \leq g K(G) (C_{\beta\eta} \|H_0
 \tilde\phi_n\| + B_{\beta\eta})
\end{equation}
and
\begin{equation}\nonumber
 \| H_0 \tilde\phi_n\| \leq |E_n| + g \| H_{I,n}\tilde\phi_n\|\ .
\end{equation}
Therefore
\begin{equation}\label{eq:4.12}
 \|H_0 \tilde\phi_n\| \leq \frac{|E_n|}{1 - g_1 K(G)
 C_{\beta\eta}} + \frac{g K(G) B_{\beta\eta}}{1 - g_1 K(G)
 C_{\beta\eta}}\ .
\end{equation}
By \eqref{eq:3.113}, \eqref{eq:A.18} and \eqref{eq:4.12}, there
exists $C>0$ such that
\begin{equation}\label{eq:4.13}
 \| H_0 \tilde\phi_n\| \leq C\ ,
\end{equation}
uniformly in $n$ and $g\leq g_1$.

By \eqref{eq:4.8}, \eqref{eq:4.9} and \eqref{eq:4.13} we get
\begin{equation}\nonumber
 \| c_{\ell,\epsilon} \tilde\phi_n\|
 \leq \frac{g}{|p_2|} \left( C^\frac12 \left(\sum_{\alpha=1}^2 \|
 G^{(\alpha)}_{\ell,\epsilon,\epsilon'}(.,\xi_2,.)\|_{L^2(\Sigma_1\times\Sigma_2)}\right)
 +
 \| G^{(2)}_{\ell,\epsilon,\epsilon'}(.,\xi_2,.)\|_{L^2(\Sigma_1\times\Sigma_2)}
 \right)
\end{equation}
By Hypothesis~\ref{hypothesis:3.1}(i), there exists a constant
$C(G)>0$ depending on the kernels
$G=(G^{(\alpha)}_{\ell,\epsilon,\epsilon'}
)_{\ell=1,2,3;\alpha=1,2;\epsilon\neq\epsilon'=\pm}$ and such that
\begin{equation}\nonumber
 \sum_{\ell=1}^3 \sum_\epsilon \int \| c_{\ell,\epsilon}(\xi_2)
 \tilde\phi_n\|^2 \d \xi_2 \leq C(G)^2 g^2\ .
\end{equation}
The existence of a ground state $\tilde\phi$ for $H$ follows by
choosing $g$ sufficiently small, i.e. $g\leq g_2$, as in
\cite{Amouretal2007} and \cite{Barbarouxetal2004}. By adapting the
method developed in \cite{Hiroshima2005} (see
\cite[Corollary~3.4]{Hiroshima2005}), one proves that the ground
state of $H$ is unique. We omit here the details.

Statements about $\sigma(H)$ are consequences of the existence of
a ground state and follows from the existence of asymptotic Fock
representations for the CAR associated with the
$c_{\ell,\epsilon}^\sharp(\xi_2)$'s. For $f\in L^2(\R^3,\,\C^2)$,
we define on $\cD(H_0)$ the operators
\begin{equation}\nonumber
 c_{\ell,\epsilon}^{\sharp\, t}(f) = \mathrm{e}^{i t H}
 \mathrm{e}^{- i t H_0} c_{\ell,\epsilon}^\sharp(f) \mathrm{e}^{i t
 H_0} \mathrm{e}^{_ i t H}\ .
\end{equation}
By mimicking the proof given in \cite{Hiroshima2001, Takaesu2008}
one proves, under the hypothesis of Theorem~\ref{thm:3.3} and for
$f\in C_0^\infty(\R^3\,\C^2)$, that the strong limits of
$c_{\ell,\epsilon}^{\sharp\, t}(f)$ when $t\rightarrow\pm\infty$
exist for $\psi\in\cD(H_0)$,
\begin{equation}\label{eq:4.17}
 \lim_{t\rightarrow\pm\infty} c_{\ell,\epsilon}^{\sharp\, t}(f)
 \psi := c_{\ell,\epsilon}^{\sharp\, \pm} (f) \psi\ .
\end{equation}
The operators $c_{\ell,\epsilon}^{\sharp\, \pm} (f)$ satisfy the
CAR and we have
\begin{equation}\label{eq:4.18}
 c_{\ell,\epsilon}^{\, \pm}(f)\tilde\phi = 0,\quad f\in
 C_0^\infty(\R^3\,\C^2)\, ,
\end{equation}
where $\tilde\phi$ is the ground state of $H$.

It then follows from \eqref{eq:4.17} and \eqref{eq:4.18} that the
absolutely continuous spectrum of $H$ equals to $[\inf\sigma(H),\,
\infty)$. We omit the details (see \cite{Hiroshima2001,
Takaesu2008}).
%%%%%%%%%%%%%%%%%%%%%%%%%%%%%%%%%%%%%%%%%%%%%%%%%%%%%%%%%%%%%
%%%%%%%%%%%%%%%%%%%%%%%%%%%%%%%%%%%%%%%%%%%%%%%%%%%%%%%%%%%%%

%%%%%%%%%%%%%%%%%%%%%%%%%%%%%%%%%%%%%%%%%%%%%%%%%%%%%%%%%%%%%
%%%%%%%%%%%%%%%%%%%%%%%%%%%%%%%%%%%%%%%%%%%%%%%%%%%%%%%%%%%%%
%%   Section 5 . PROOF OF THE MOURRE INEQUALITY       %%%%%%%
%%%%%%%%%%%%%%%%%%%%%%%%%%%%%%%%%%%%%%%%%%%%%%%%%%%%%%%%%%%%%
%%%%%%%%%%%%%%%%%%%%%%%%%%%%%%%%%%%%%%%%%%%%%%%%%%%%%%%%%%%%%
\section{Proof of the Mourre Inequality}
\setcounter{equation}{0}

We first prove Proposition~\ref{proposition:3.9}. In view of
Proposition~\ref{proposition:3.8}(a) (iii) and \eqref{eq:3.106},
we have, as sesquilinear forms,
\begin{equation}\label{eq:5.1}
 [H,\, iA_\sigma] = (1-g) \d\Gamma ( (\eta_\sigma)^2 w^{(2)}) +
 g (\d\Gamma ( (\eta_\sigma)^2 w^{(2)}) + g H_I(-i (a_\sigma G))\
 .
\end{equation}
Let $\gF_\ell^{(1)}$ (respectively $\gF_\ell^{(2)}$) be the Fock
space for the massive leptons $\ell$ (respectively the neutrinos
and antineutrinos $\ell$).

We have
\begin{equation}\nonumber
 \gF_\ell \simeq \gF_\ell^{(1)}\otimes\gF_\ell^{(2)}\ .
\end{equation}

Let
\begin{equation}\nonumber
 \gF^{(1)} = \gF_W \otimes (\otimes_{\ell=1}^3
 \, \gF_\ell^{(1)})\quad\mbox{and}\quad
 \gF^{(2)} = \otimes_{\ell=1}^3 \gF_\ell^{(2)}\ .
\end{equation}

We have
\begin{equation}\label{eq:5.4}
 \gF \simeq \gF^{(1)} \otimes \gF^{(2)}\ ,
\end{equation}
$\gF^{(1)}$ is the Fock space for the massive leptons and the
bosons $W^\pm$, and $\gF^{(2)}$ is the Fock space for the
neutrinos and antineutrinos.

We have, as sesquilinear forms and with respect to \eqref{eq:5.4},
\begin{equation}\label{eq:5.5}
\begin{split}
 & \d\Gamma ((\eta_\sigma)^2(p_2) w_\ell^{(2)})
 + H_I (-i (a_\sigma G)) \\
 & = \sum_{\ell=1}^3 \sum_\epsilon \int \eta_\sigma(p_2)^2 |p_2|
 c^*_{\ell,\epsilon}(\xi_2) c_{\ell,\epsilon}(\xi_2) \d\xi_2 \\
 & + \sum_{\ell=1}^3 \sum_{\epsilon\neq\epsilon'} \int |p_2|
 \left(\1_1\otimes\eta_\sigma(p_2) c^*_{\ell,\epsilon}(\xi_2)
 + \sum_{\alpha=1,2}
 \frac{\mathcal{M}^{(\alpha)\,*}_{\ell,\epsilon,\epsilon',\sigma}
 (\xi_2)}{|p_2|}
 \otimes\1_2\right) \\
 & \left( \1_1 \otimes \eta_\sigma(p_2) c_{\ell,\epsilon}(\xi_2)
 + \sum_{\alpha=1,2}
 \frac{\mathcal{M}^{(\alpha)}_{\ell,\epsilon,\epsilon',\sigma}
 (\xi_2)}{|p_2|}
 \otimes\1_2   \right) \d\xi_2 \\
 & - \sum_{\ell=1}^3 \sum_{\epsilon\neq\epsilon'} \int
 \left(\sum_{\alpha=1,2}
 \frac{\mathcal{M}^{(\alpha)\,*}_{\ell,\epsilon,\epsilon',\sigma}
 (\xi_2)}{|p_2|^\frac12}
 \otimes\1_2\right)
 \left(\sum_{\alpha=1,2}
 \frac{\mathcal{M}^{(\alpha)}_{\ell,\epsilon,\epsilon',\sigma}
 (\xi_2)}{|p_2|^\frac12}
 \otimes\1_2\right) \d\xi_2\ ,
\end{split}
\end{equation}
where
\begin{equation}\nonumber
 \mathcal{M}_{\ell,\epsilon,\epsilon',\sigma}^{(\alpha)}(\xi_2)
 = i \int \left( \sum_{\alpha=1,2} (a \, \eta_\sigma(p_2)
 G_{\ell,\epsilon,\epsilon'}^{(\alpha)}(\xi_2,\xi_2,\xi_3)) \right)
 b^*_{\ell,\epsilon'}(\xi_1) a_{\epsilon'}(\xi_3) \d\xi_1 \d\xi_3\
 ,
\end{equation}
and where $\1_j$ is the identity operator in $\gF^{(j)}$.

By mimicking the proofs of Proposition~\ref{proposition:2.4} and
\ref{proposition:2.5}, we get, for every $\psi\in\gD$,
\begin{equation}\nonumber
\begin{split}
  & \sum_{\ell=1}^3 \sum_{\epsilon\neq\epsilon'}
  \left(\psi,\, \int (\sum_{\alpha=1,2}
  \frac{\mathcal{M}_{\ell,\epsilon,\epsilon',\sigma}^{(\alpha)\,*}(\xi_2)}
  {|p_2|^\frac12} \otimes \1_2)
  (\sum_{\alpha=1,2}
  \frac{\mathcal{M}_{\ell,\epsilon,\epsilon',\sigma}^{(\alpha)}(\xi_2)}
  {|p_2|^\frac12} \otimes \1_2) \psi\, \d\xi_2 \right) \\
  & = \sum_{\ell=1}^3 \sum_{\epsilon\neq\epsilon'}
  \left\| \int
  (\sum_{\alpha=1,2}
  \frac{\mathcal{M}_{\ell,\epsilon,\epsilon',\sigma}^{\alpha}(\xi_2)}
  {|p_2|^\frac12} \otimes \1_2) \psi\, \d\xi_2 \right\|^2 \\
  & \leq \left( \int
  \frac{ | \sum_{\alpha=1,2} |(a\, \eta_\sigma(p_2)
  G_{\ell,\epsilon,\epsilon'}^{(\alpha)})(\xi_2,\xi_2,\xi_3)|^2}
  {w^{(3)}(\xi_3) |p_2|} \d\xi_1\d\xi_2\d\xi_3 \right) \,
  \| (H_0^{(3)})^\frac12\psi\|\ .
\end{split}
\end{equation}
Noting that $|(a\, \eta_\sigma)(p_2)|\leq C$ uniformly with
respect to $\sigma$, it follows from
hypothesis~\ref{hypothesis:2.1} and \ref{hypothesis:3.1} that
there exists a constant $C(G)>0$ such that
\begin{equation}\nonumber
 \int \frac{|\sum_{\alpha=1,2} (a\, \eta_\sigma(p_2)
 G^{(\alpha)}_{\ell,\epsilon,\epsilon'})(\xi_1,\xi_2,\xi_3)|^2}
 {w^{(3)}(\xi_3) |p_2|} \d\xi_1 \d\xi_2 \d\xi_3 \leq C(G) \sigma\
 .
\end{equation}
This yields
\begin{equation}\label{eq:5.9}
 - \int (\sum_{\alpha=1,2} \frac{\mathcal{M}^{(\alpha)\,
 *}_{\ell,\epsilon,\epsilon',\sigma}(\xi_2)}{|p_2|^\frac12}
 \otimes \1_2)
 (\sum_{\alpha=1,2} \frac{\mathcal{M}^{(\alpha)}
 _{\ell,\epsilon,\epsilon',\sigma}(\xi_2)}{|p_2|^\frac12}
 \otimes \1_2) \d\xi_2 \geq - C(G) \sigma\ .
\end{equation}

Combining \eqref{eq:5.1}, \eqref{eq:5.5} with \eqref{eq:5.9}, we
obtain
\begin{equation}\label{eq:5.10}
 [H,\, iA_n] \geq (1-g) \d\Gamma((\eta_{\sigma_n})^2 w_\ell^{(2)})
 - g C(G)\sigma_n\ .
\end{equation}

We have
\begin{equation}\label{eq:5.11}
 \d\Gamma( (\eta_{\sigma_n})^2 w_\ell^{(2)}) \geq H_{0\, n}^{(2)}\ .
\end{equation}

By \eqref{eq:3.108}, \eqref{eq:3.112} and \eqref{eq:5.11} we get
\begin{equation}\nonumber
\begin{split}
 f_n(H_n - E_n) \d\Gamma(\eta_{\sigma_n}{}^2 w_\ell^{(2)}) f_n(H_n-E_n)
 & \geq P_n\otimes f_n(H_{0\, n}^{(2)})
 \, H_{0\, n}^{(2)}\, f_n(H_{0\,
 n}^{(2)})\\
 & \geq \frac{\gamma^2}{N^2} \sigma_n f_n(H_n - E_n)^2\ ,
\end{split}
\end{equation}
for $g\leq g_\delta^{(1)}$.

This, together with \eqref{eq:5.10}, yields for $g\leq
g_\delta^{(1)}$
\begin{equation}\nonumber
\begin{split}
  & f_n(H_n - E_n) [H,\, iA_n] f_n(H_n-E_n) \\
  & \geq
  (1-g_\delta^{(1)})\frac{\gamma^2}{N^2} \sigma_n f_n(H_n-E_n)^2
  - g\, C(G)\, \sigma_n f_n(H_n - E_n)^2\ .
\end{split}
\end{equation}
Setting
 $$
 g_\delta^{(2)} = \inf (g_\delta^{(1)},\,
 \frac{1-g_\delta^{(1)}}{2\, C(G)} \frac{\gamma^2}{N^2})\ ,
 $$
we get
\begin{equation}\nonumber
 f_n( H_n - E_n) [H,\, iA_n] f_n(H_n - E_n) \geq
\frac{1 - g_\delta^{(1)}}{2} \frac{\gamma^2}{N^2}\, \sigma_n
f_n(H_n - E_n)^2\ ,
\end{equation}
for $g\leq g_\delta^{(2)}$.

Proposition~\ref{proposition:3.9} is proved by setting
$\tilde{g}_\delta^{(1)} = g_\delta^{(2)}$ and
$\tilde{C}_\delta=\frac{1 - g_\delta^{(1)}}{2}$.

The proof of Theorem~\ref{thm:3.10} is the consequence of the
following two lemmata.

%%%%%%%%%%%%%%%%%%%%%%%%%%%%%%%%%%%%%%%%%%%%%%%%
\begin{lemma}\label{lemma:5.1}
Assume that the kernels $G^{(\alpha)}_{\ell,\epsilon,\epsilon'}$
satisfy Hypothesis~\ref{hypothesis:2.1} and
\ref{hypothesis:3.1}(ii). Then there exists a constant $D>0$ such
that
\begin{equation}\nonumber
 |E - E_n| \leq g\, D\, \sigma_n{}^2\ ,
\end{equation}
for $n\geq 1$ and $g\leq g^{(2)}$.
\end{lemma}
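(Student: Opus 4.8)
The plan is a standard min-max comparison of the two ground state energies $E=\inf\sigma(H)$ and $E_n=\inf\sigma(H_n)$, using that both $H$ and $H_n$ have ground states for $g$ small. The starting observation is that, since $\tilde\chi^{\sigma_n}(p_2)=1-\chi_{\sigma_n}(p_2)$, we have
$H-H_n = g(H_I-H_{I,\sigma_n}) = g\,H_I(\chi_{\sigma_n}(p_2)G)$,
i.e.\ $g$ times the interaction built from the kernels $\chi_{\sigma_n}(p_2)G^{(\alpha)}_{\ell,\epsilon,\epsilon'}$; these vanish for $|p_2|\geq 2\sigma_n$ because $\chi_{\sigma_n}=\chi_0(\cdot/\sigma_n)$ is supported in $\{|p_2|\leq 2\sigma_n\}$. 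Hence Hypothesis~\ref{hypothesis:3.1}(ii), applied with $\sigma=2\sigma_n$, gives $\|\chi_{\sigma_n}(p_2)G^{(\alpha)}_{\ell,\epsilon,\epsilon'}\|_{L^2(\Sigma_1\times\Sigma_1\times\Sigma_2)}\leq 4C\sigma_n^2$.

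Next I would re-run the estimates of Proposition~\ref{proposition:2.5} together with \eqref{eq:2.73} and \eqref{eq:2.77}--\eqref{eq:2.78}, exactly as in the proof of essential self-adjointness of $H$, but with every occurrence of $G^{(\alpha)}_{\ell,\epsilon,\epsilon'}$ replaced by $\chi_{\sigma_n}(p_2)G^{(\alpha)}_{\ell,\epsilon,\epsilon'}$, and using also $\int|\chi_{\sigma_n}(p_2)G^{(\alpha)}|^2/w^{(3)}(\xi_3)\,\d\xi_1\d\xi_2\d\xi_3\leq \frac{1}{m_W}\|\chi_{\sigma_n}(p_2)G^{(\alpha)}\|^2$. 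Since the squared $L^2$-norms that enter now carry a factor $\sigma_n^4$, this yields a bound
\begin{equation*}
 \| (H_I - H_{I,\sigma_n})\psi \| \leq M\,\sigma_n^2\,\big(\|H_0\psi\| + \|\psi\|\big),\qquad \psi\in\cD(H_0)\ ,
\end{equation*}
with $M$ depending only on $m_W$, $m_1$, the constant $C$ of Hypothesis~\ref{hypothesis:3.1}(ii), and the (fixed) number of kernels, hence independent of $n$.

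Finally I would insert the two ground states, $\psi=\tilde\phi_n$ (the normalized ground state of $H_n$, which exists by Proposition~\ref{proposition:3.5}) and $\psi=\tilde\phi$ (the normalized ground state of $H$, which exists by Theorem~\ref{thm:3.3} as $g^{(2)}\leq g_2$). By \eqref{eq:4.13} we have $\|H_0\tilde\phi_n\|\leq C_0$ uniformly in $n$, and the same computation as in \eqref{eq:4.12}--\eqref{eq:4.13} --- or weak lower semicontinuity along the subsequence with $\tilde\phi_{n_k}$ converging weakly to $\tilde\phi$ --- gives $\|H_0\tilde\phi\|\leq C_0$ as well. Thus both matrix elements $(\psi,(H_I-H_{I,\sigma_n})\psi)$ with $\psi\in\{\tilde\phi_n,\tilde\phi\}$ are bounded in absolute value by $M(C_0+1)\sigma_n^2$, and the variational characterization of $E$ and $E_n$ gives
\begin{equation*}
 E \leq (\tilde\phi_n, H\tilde\phi_n) = E_n + g\,(\tilde\phi_n, (H_I-H_{I,\sigma_n})\tilde\phi_n), \qquad
 E_n \leq (\tilde\phi, H_n\tilde\phi) = E - g\,(\tilde\phi, (H_I-H_{I,\sigma_n})\tilde\phi)\ ,
\end{equation*}
whence $|E-E_n|\leq g\,D\,\sigma_n^2$ with $D=M(C_0+1)$.

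The only delicate point is the middle step: one has to check that, when the $N_\tau$ estimates behind Propositions~\ref{proposition:2.4}--\ref{proposition:2.5} are applied to the infrared-localized kernels, all constants remain uniform in $n$ and the scale $\sigma_n$ enters quadratically rather than linearly --- which is exactly what Hypothesis~\ref{hypothesis:3.1}(ii) is designed to provide. Everything else is a routine application of the min-max principle together with the a priori bounds already established.
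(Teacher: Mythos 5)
Your proposal is correct and follows essentially the same route as the paper: variational comparison of $E$ and $E_n$ via $(\tilde\phi_n,(H-H_n)\tilde\phi_n)$ and $(\phi,(H_n-H)\phi)$, identification of $H-H_n=gH_I(\chi_{\sigma_n}(p_2)G)$, the $N_\tau$-type bound of Propositions~\ref{proposition:2.4}--\ref{proposition:2.5} applied to the localized kernels, Hypothesis~\ref{hypothesis:3.1}(ii) to extract the $\sigma_n^2$ factor, and the uniform a priori bound \eqref{eq:4.13} on $\|H_0\tilde\phi_n\|$. The only place you are more explicit than the paper is in supplying the uniform bound on $\|H_0\tilde\phi\|$ for the ground state of $H$ itself, which the paper leaves implicit but which follows from the same Kato--Rellich computation as \eqref{eq:4.12}.
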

%%%%%%%%%%%%%%%%%%%%%%%%%%%%%%%%%%%%%%%%%%%%%%%%
%
%
%
%%%%%%%%%%%%%%%%%%%%%%%%%%%%%%%%%%%%%%%%%%%%%%%%
\begin{proof}
Let $\phi$ (respectively  $\tilde\phi_n$) be the unique normalized
ground state of $H$ (respectively $H_n$). We have
\begin{equation}\label{eq:5.15}
\begin{split}
 & E - E_n \leq (\tilde \phi_n, (H-H_n)\tilde \phi_n) \\
 & E_n - E \leq (\phi, (H_n - H) \phi )\ ,
\end{split}
\end{equation}
with
\begin{equation}\label{eq:5.16}
 H - H_n = g H_I (\chi_{\sigma_n}(p_2) G)\ .
\end{equation}
Combining \eqref{eq:2.79} and \eqref{eq:2.80} with
\eqref{eq:3.33}-\eqref{eq:3.36} and \eqref{eq:5.16}, we get
\begin{equation}\label{eq:5.17}
 \| (H-H_n)\tilde \phi_n\| \leq g\,
 K(\chi_{\sigma_n}(p_2) G) \, (C_{\beta\eta} \|H_0\tilde\phi_n\|
 + B_{\beta\eta})
\end{equation}
and
\begin{equation}\label{eq:5.18}
 \| (H-H_n) \phi \| \leq g\, K(\chi_{\sigma_n}(p_2) G) \, (C_{\beta\eta}
 \|H_0 \phi \|
 + B_{\beta\eta})
\end{equation}
It follows from Hypothesis~\ref{hypothesis:3.1}(ii),
\eqref{eq:4.13}, \eqref{eq:5.17} and \eqref{eq:5.18} that there
exists a constant $D>0$ such that
\begin{equation}\nonumber
 \max (\| (H-H_n) \tilde\phi_n\|,\, \|(H-H_n)\phi\|\leq g\, D\,
 \sigma_n{}^2\ ,
\end{equation}
for $n\geq 1$ and $g\leq g^{(2)}$.

By \eqref{eq:5.15}, this proves Lemma~\ref{lemma:5.1}.
\end{proof}
%%%%%%%%%%%%%%%%%%%%%%%%%%%%%%%%%%%%%%%%%%%%%%%%%

%%%%%%%%%%%%%%%%%%%%%%%%%%%%%%%%%%%%%%%%%%%%%%%%%
\begin{lemma}\label{lemma:5.2}
Suppose that the kernels $G_{\ell,\epsilon,\epsilon'}^{(\alpha)}$
satisfy Hypothesis~\ref{hypothesis:2.1} and
\ref{hypothesis:3.1}(ii). Then there exists a constant $C>0$ such
that
\begin{equation}
 \| f_n(H-E) - f_n(H_n - E_n)\| \leq g\, C\, \sigma_n\ ,
\end{equation}
for $n\geq 1$ and $g\leq g^{(2)}$.
\end{lemma}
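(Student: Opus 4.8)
The plan is to invoke the Helffer--Sjöstrand functional calculus to turn the difference $f_n(H-E)-f_n(H_n-E_n)$ into an integral of a resolvent difference, and then to exploit the scaling $f_n(\lambda)=f(\lambda/\sigma_n)$ together with Lemma~\ref{lemma:5.1}, Hypothesis~\ref{hypothesis:3.1}(ii) and the $N_\ell$-bounds behind \eqref{eq:2.79}--\eqref{eq:2.80}. Fix an almost analytic extension $\tilde f\in C_0^\infty(\C)$ of $f$ with $|\partial_{\bar z}\tilde f(z)|\leq C_k|\Im z|^k$ for every $k$, supported in $\{|\Im z|\leq 1\}$, and put $\tilde f_n(z)=\tilde f(z/\sigma_n)$, so that $\partial_{\bar z}\tilde f_n(z)=\sigma_n^{-1}(\partial_{\bar z}\tilde f)(z/\sigma_n)$ is supported in a set $S_n\subset\{|\Re z|\leq C\sigma_n,\ |\Im z|\leq\sigma_n\}$ of area $O(\sigma_n^2)$ and satisfies $|\partial_{\bar z}\tilde f_n(z)|\leq C_k\sigma_n^{-(k+1)}|\Im z|^k$. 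Writing $R(z)=(z-(H-E))^{-1}$, $R_n(z)=(z-(H_n-E_n))^{-1}$ and using the resolvent identity together with $(H-E)-(H_n-E_n)=(H-H_n)-(E-E_n)$, one gets
\begin{equation}\nonumber
 f_n(H-E)-f_n(H_n-E_n)=\frac{1}{\pi}\int_{S_n}\partial_{\bar z}\tilde f_n(z)\,R(z)\big[(H-H_n)-(E-E_n)\big]R_n(z)\,\d x\,\d y\ .
\end{equation}

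For the scalar term, Lemma~\ref{lemma:5.1} gives $|E-E_n|\leq gD\sigma_n^2$ and $\|R(z)\|\,\|R_n(z)\|\leq|\Im z|^{-2}$, so taking $k=2$ this contribution is bounded by $\frac{gD}{\pi}\int_{S_n}\sigma_n^{-3}|\Im z|^2\,|\Im z|^{-2}\,\sigma_n^2\,\d x\,\d y=O(g\sigma_n)$. For the operator term, write $H-H_n=gH_I(\chi_{\sigma_n}(p_2)G)$ as in \eqref{eq:5.16}; applying the estimates \eqref{eq:2.79}--\eqref{eq:2.80} to the kernels $\chi_{\sigma_n}(p_2)G$ yields $\|(H-H_n)\psi\|\leq g\,K(\chi_{\sigma_n}(p_2)G)\big(C_{\beta\eta}\|H_0\psi\|+B_{\beta\eta}\big)$ for $\psi\in\cD(H_0)$, and since $\chi_{\sigma_n}$ is supported in $\{|p_2|\leq2\sigma_n\}$, Hypothesis~\ref{hypothesis:3.1}(ii) gives $K(\chi_{\sigma_n}(p_2)G)\leq C\sigma_n^2$. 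Taking $\psi=R_n(z)\varphi$, using $H_n-E_n\geq0$ and the uniform (in $n$) relative bound $\|H_0(H_n-E_n+1)^{-1}\|\leq C_0$ — a consequence of $g_1K(G)C_{\beta\eta}<1$ and the uniform bound on $|E_n|$ — one obtains $\|H_0R_n(z)\|+\|R_n(z)\|\leq C_1|\Im z|^{-1}$ for $z\in S_n$ (recall $\sigma_n\leq\Lambda$), hence $\|(H-H_n)R_n(z)\|\leq gC_2\sigma_n^2|\Im z|^{-1}$.

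Collecting these bounds, the operator term is at most $\frac{1}{\pi}\int_{S_n}|\partial_{\bar z}\tilde f_n(z)|\,\|R(z)\|\,\|(H-H_n)R_n(z)\|\,\d x\,\d y\leq\frac{gC_2}{\pi}\int_{S_n}\sigma_n^{-3}|\Im z|^2\cdot|\Im z|^{-1}\cdot\sigma_n^2|\Im z|^{-1}\,\d x\,\d y=\frac{gC_2}{\pi\sigma_n}\,|S_n|=O(g\sigma_n)$, with all constants independent of $n$ and of $g\leq g^{(2)}$. Combining the two contributions gives the claim. The two points requiring care are the bookkeeping of powers of $\sigma_n$ — the extra $\sigma_n^2$ furnished by Hypothesis~\ref{hypothesis:3.1}(ii) exactly compensates the $\sigma_n^{-1}$ in $\partial_{\bar z}\tilde f_n$ and the loss $\|H_0R_n(z)\|\sim|\Im z|^{-1}$, leaving a net $O(\sigma_n)$ — and the uniformity in $n$ of the relative $H_0$-bound with respect to $H_n-E_n$; one must also take at least $k=2$ derivatives of decay in the almost analytic extension to make the $z$-integral converge.
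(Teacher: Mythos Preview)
Your proof is correct and follows essentially the same route as the paper: Helffer--Sj\"ostrand functional calculus, the resolvent identity, Hypothesis~\ref{hypothesis:3.1}(ii) to get $K(\chi_{\sigma_n}(p_2)G)=O(\sigma_n^{2})$, and the uniform relative $H_0$-bound with respect to $H_n-E_n$. The only cosmetic differences are that the paper rescales the resolvents (writing the integral in terms of $(H-E-z\sigma_n)^{-1}$ and the unscaled $\tilde f$) rather than rescaling the extension to $\tilde f_n$, and treats the two contributions $(H-H_n)$ and $(E_n-E)$ together; your bookkeeping is equivalent after the change of variables $z\mapsto z\sigma_n$. (Minor typo: in your displayed bound for $\|(H-H_n)\psi\|$ the term $B_{\beta\eta}$ should carry a factor $\|\psi\|$.)
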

%%%%%%%%%%%%%%%%%%%%%%%%%%%%%%%%%%%%%%%%%%%%%%%%%
%
%
%
%%%%%%%%%%%%%%%%%%%%%%%%%%%%%%%%%%%%%%%%%%%%%%%%%
\begin{proof} Let $\tilde f(.)$ be an almost analytic extension of
$f(.)$ given by \eqref{eq:3.109} satisfying
\begin{equation}\label{eq:5.20}
 \left| \partial_{\bar z} \tilde f(x+iy)\right| \leq C y^2\ .
\end{equation}
Note that $\tilde f(x+iy)\in C_0^\infty(\R^2)$. We thus have
\begin{equation}\label{eq:5.22}
 f(s) = \int \frac{\d\tilde f(z)}{ z -s },\quad
 \d\tilde f(z) = -\frac{1}{\pi} \frac{\partial \tilde f}{\partial
 \bar z}\, \d x\, \d y\ .
\end{equation}
Using the functional calculus based on this representation of
$f(s)$, we get
\begin{equation}\label{eq:5.23}
 f_n(H-E) - f_n (H_n - E_n)
 = \sigma_n \int \frac{1}{H - E - z\sigma_n}
 (H-H_n+E_n-E)
 \frac{1}{H_n - E_n - z\sigma_n} \d\tilde f(z)\ .
\end{equation}
Combining \eqref{eq:2.79} and \eqref{eq:2.80} with
\eqref{eq:3.33}-\eqref{eq:3.36} and
Hypothesis~\ref{hypothesis:3.1}(ii), we get, for every
$\psi\in\cD(H_0)$ and for $g\leq g^{(2)}$,
\begin{equation}\label{eq:5.24}
 g\| H_I (\chi_{\sigma_n} G) \psi \|
 \leq 2\, g\, C\, \sigma_n{}^2 K(G)\,
 (C_{\beta\eta} \|(H_0 +1) \psi\| +
 (C_{\beta\eta} + B_{\beta\eta}) \| \psi\|)\ .
\end{equation}
This yields
\begin{equation}\label{eq:5.25}
 g \|H_I(\chi_{\sigma_n}(p_2)G) (H_0+1)^{-1} \| \leq g \, C_1\,
 \sigma_n{}^2\ ,
\end{equation}
for some constant $C_1>0$ and for $g\leq g^{(2)}$.

By mimicking the proof of \eqref{eq:B.28} we show that there
exists a constant $C_2>0$ such that
\begin{equation}\label{eq:5.26}
 \| (H_0+1) (H_n - E_n - z\sigma_n)^{-1}\| \leq C_2 (1 +
 \frac{1}{|\Im z| \sigma_n})\ ,
\end{equation}
for $g\leq g^{(1)}$.

Combining Lemma~\ref{lemma:5.1} and \eqref{eq:5.23} with
\eqref{eq:5.24}-\eqref{eq:5.26} we obtain
\begin{equation}\nonumber
  \| f_n(H-E) - f_n(H_n -E_n)\| \leq g\, C\, \sigma_n
  \int \frac{|\frac{\partial \tilde f}{\partial \bar z} (x+i
  y)|}{y^2} \d x \d y\ ,
\end{equation}
for some constant $C>0$ and for $g\leq g^{(2)}$.

Using \eqref{eq:5.20} and $\tilde f(x+iy)\in C_0^\infty(\R^2)$ one
concludes the proof of Lemma~\ref{lemma:5.2}.
\end{proof}
%%%%%%%%%%%%%%%%%%%%%%%%%%%%%%%%%%%%%%%%%%%%%%%
%
%
%
%%  proof of theorem \ref{thm:3.10} - 3.11 -   %%%%

We now prove Theorem~\ref{thm:3.10}.
\begin{proof}
It follows from Proposition~\ref{proposition:3.9} that
\begin{equation}\nonumber
\begin{split}
 & f_n(H_n - E_n) [H,\, iA] f_n(H_n - E_n)\\
 & = f_n(H_n - E_n) [H,\, iA_n] f_n(H_n - E_n)
 \geq \tilde C_\delta \frac{\gamma^2}{N^2} \sigma_n\,
 f_n(H_n - E_n)^2\ ,
\end{split}
\end{equation}
for $n\geq 1$ and $g\leq \tilde g_\delta^{(1)}$.

This yields
\begin{equation}\nonumber
\begin{split}
 & f_n(H-E) [H, iA_n] f_n(H-E)
 \geq \tilde C_\delta \frac{\gamma^2}{N^2} \sigma_n \, f_n(H-E)^2
 \\
 & - f_n(H-E) [H,\, iA] (f_n(H_n-E_n) - f_n(H-E)) \\
 & - (f_n(H_n - E_n) - f_n(H-E)) [H,\, iA] f_n(H_n - E_n) \\
 & + \tilde C_\delta \frac{\gamma^2}{N^2} \sigma_n (f_n(H_n - E_n) -
 f_n(H-E))^2 \\
 & + \tilde C_\delta \frac{\gamma^2}{N^2} \sigma_n
 f_n(H-E) (f_n(H_n -E_n) - f_n(H-E))\\
 & + \tilde C_\delta \frac{\gamma^2}{N^2} \sigma_n (f_n(H_n - E_n)
 -f_n(H-E)) f_n (H-E)\ .
\end{split}
\end{equation}
Combining Proposition~\ref{proposition:3.8} (i) and
\eqref{eq:5.22} with \eqref{eq:5.25} and \eqref{eq:5.26} we show
that $[H,\, iA] f_n(H_n -E_n)$ and $f_n(H-E) [H,\, iA]$ are
bounded operators uniformly with respect to $n$. This, together
with Lemma~\ref{lemma:5.2}, yields
\begin{equation}\label{eq:5.29}
 f_n (H-E) [H,\, iA] f_n (H-E)
 \geq \tilde C_\delta \frac{\gamma^2}{N^2} \sigma_n f_n(H-E)^2 -
 \tilde C \,g\, \sigma_n\ ,
\end{equation}
for some constant $\tilde C>0$ and for $g\leq\inf(g^{(2)},\,
\tilde g_\delta^{(1)})$.

Multiplying both sides of \eqref{eq:5.29} with $E_{\Delta_n}
(H-E)$ we then get
\begin{equation}\nonumber
 E_{\Delta_n} (H-E) [H,\, iA] E_{\Delta_n}(H-E) \geq \tilde
 C_\delta \frac{\gamma^2}{N^2} \sigma_n E_{\Delta_n} (H-E) -
 \tilde C\, g\, \sigma_n E_{\Delta_n}(H-E)\ .
\end{equation}
Setting
\begin{equation}\nonumber
 \tilde g_\delta^{(2)} < \inf \left( \frac{\tilde C_\delta}{\tilde
 C} \frac{\gamma^2}{N^2},\, g^{(2)},\, \tilde
 g_\delta^{(1)}\right)\ ,
\end{equation}
Theorem~\ref{thm:3.10} is proved with $C_\delta= \tilde C_\delta -
\tilde C \frac{N^2}{\gamma^2} \tilde g_\delta^{(2)}>0$.
\end{proof}
%%%%%%%%%%%%%%%%%%%%%%%%%%%%%%%%%%%%%%%%%%%%%%%%%

%%%%%%%%%%%%%%%%%%%%%%%%%%%%%%%%%%%%%%%%%%%%%%%%%%%%%%%%%%%%%
%%%%%%%%%%%%%%%%%%%%%%%%%%%%%%%%%%%%%%%%%%%%%%%%%%%%%%%%%%%%%
%%%%%%%%%%       PROOF OF THEOREM 3.7      %%%%%%%%%%%%%%%%%%
%%%%%%%%%%%%%%%%%%%%%%%%%%%%%%%%%%%%%%%%%%%%%%%%%%%%%%%%%%%%%
%%%%%%%%%%%%%%%%%%%%%%%%%%%%%%%%%%%%%%%%%%%%%%%%%%%%%%%%%%%%%
\section{Proof of Theorem~\ref{thm:3.7}}
\setcounter{equation}{0}

We set
\begin{equation}\nonumber
\begin{split}
 & A_t = \frac{\mathrm{e}^{i t A} - 1}{t}\ , \\
 & \adat \cdot = [A_t,\, .\, ] \ , \\
 & A_t^\sigma = \frac{\mathrm{e}^{i t A^\sigma -1}}{t} \ , \\
 & A_{\sigma\, t} = \frac{\mathrm{e}^{i t A_\sigma} -1}{t}\ .
\end{split}
\end{equation}
The fact that $H$ is of class $C^1(A)$, $C^1(A^\sigma)$ and
$C^1(A_\sigma)$ in $(-\infty,\, m_1 -\frac{\delta}{2})$ is the
consequence of the following proposition
%%%%%%%%%%%%%%%%   PROPOSITION 6.1  %%%%%%%%%%%%%%%
\begin{proposition}\label{proposition:6.1}
Suppose that the kernels $G^{(\alpha)}_{\ell,\epsilon,\epsilon'}$
satisfy Hypothesis~\ref{hypothesis:2.1} and
\ref{hypothesis:3.1}(iii.a). For every $\vp\in
C_0^\infty((-\infty, m_1-\frac{\delta}{2}))$ and $g\leq g_1$, we
then have
\begin{equation}\nonumber
\begin{split}
 & \sup_{0<|t|\leq 1} \| [\vp(H),\, A_t]\|<\infty\, , \\
 & \sup_{0<|t|\leq 1} \| [\vp(H), A_t^\sigma]\| <\infty\, , \\
 & \sup_{0 < |t| \leq 1} \| [\vp(H),\, A_{\sigma\, t}]\|
 <\infty\, , \\
 & \sup_{0<|t|\leq 1} \| [\vp(H^\sigma),\, A_t^\sigma]\|<\infty\ .
\end{split}
\end{equation}
\end{proposition}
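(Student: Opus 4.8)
The plan is to deduce these uniform bounds from the Helffer--Sj\"ostrand functional calculus combined with Propositions~\ref{proposition:3.6} and \ref{proposition:3.8}. Fix $\vp\in C_0^\infty((-\infty, m_1-\delta/2))$ and let $\tilde\vp\in C_0^\infty(\C)$ be an almost analytic extension of $\vp$ chosen so that $|\partial_{\bar z}\tilde\vp(x+iy)|\leq C_k\,|y|^k$ for some fixed integer $k\geq 2$; as in \eqref{eq:5.22} one has $\vp(H)=\int (z-H)^{-1}\,\d\tilde\vp(z)$ with $\d\tilde\vp(z)=-\tfrac1\pi\,\partial_{\bar z}\tilde\vp(z)\,\d x\,\d y$, and likewise $\vp(H^\sigma)=\int (z-H^\sigma)^{-1}\,\d\tilde\vp(z)$ since $H^\sigma$ is self-adjoint on $\gF^\sigma$ with $\cD(H^\sigma)=\cD(H_0^\sigma)$.

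First I would record the domain invariance. By Proposition~\ref{proposition:3.6}(i), $\mathrm{e}^{itA}\cD(H_0)\subset\cD(H_0)=\cD(H)$, hence the bounded operator $A_t=(\mathrm{e}^{itA}-1)/t$ maps $\cD(H)$ into itself; similarly $A_t^\sigma$ and $A_{\sigma t}$ preserve $\cD(H)$ by Proposition~\ref{proposition:3.6}(ii)--(iii), and $A_t^\sigma$ preserves $\cD(H^\sigma)$. Consequently $[H,A_t]$, $[H,A_t^\sigma]$, $[H,A_{\sigma t}]$ are bona fide operators on $\cD(H)$ (and $[H^\sigma,A_t^\sigma]$ on $\cD(H^\sigma)$), and since $(z-H)^{-1}$ maps $\gF$ into $\cD(H)$ and $A_t$ preserves $\cD(H)$, a direct manipulation gives the operator identity $[(z-H)^{-1},A_t]=(z-H)^{-1}[H,A_t](z-H)^{-1}$ on all of $\gF$, and analogously for the other three operators. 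Proposition~\ref{proposition:3.8}(b) supplies the crucial uniform bounds $\sup_{0<|t|\leq 1}\|[H,A_t](H+i)^{-1}\|<\infty$ and the corresponding statements for $A_t^\sigma$, $A_{\sigma t}$ and $[H^\sigma,A_t^\sigma]$.

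Now I would commute $A_t$ through the integral: $[\vp(H),A_t]=\int (z-H)^{-1}[H,A_t](z-H)^{-1}\,\d\tilde\vp(z)$, and estimate the integrand by inserting a resolvent, $(z-H)^{-1}[H,A_t](z-H)^{-1}=(z-H)^{-1}\big([H,A_t](H+i)^{-1}\big)(H+i)(z-H)^{-1}$. Using $\|(z-H)^{-1}\|\leq|\Im z|^{-1}$, the identity $(H+i)(z-H)^{-1}=-\1+(z+i)(z-H)^{-1}$ (so its norm is $\leq 1+(1+|z|)|\Im z|^{-1}$), the boundedness of $|z|$ on $\mathrm{supp}\,\tilde\vp$, and the uniform bound on $\|[H,A_t](H+i)^{-1}\|$, the operator norm of the integrand is $\leq C\,|\partial_{\bar z}\tilde\vp(z)|\,(|\Im z|^{-1}+|\Im z|^{-2})\lesssim |\Im z|^{k-2}$, uniformly in $0<|t|\leq 1$. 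Since $\tilde\vp$ has compact support and $k\geq 2$, the integral of this majorant is finite and independent of $t$, giving $\sup_{0<|t|\leq 1}\|[\vp(H),A_t]\|<\infty$. The identical argument, with Proposition~\ref{proposition:3.6}(ii)--(iii) and Proposition~\ref{proposition:3.8}(b)(ii)--(iii) in place of their first items (and with $(z-H^\sigma)^{-1}$ and Proposition~\ref{proposition:3.8}(b)(iv) in the last case), yields the three remaining bounds.

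The step requiring genuine care is the domain bookkeeping behind the resolvent commutator identity: one must verify, from the invariance $\mathrm{e}^{itA}\cD(H_0)\subset\cD(H_0)$, that $A_t(z-H)^{-1}\psi\in\cD(H)$ so that $[H,A_t](z-H)^{-1}\psi=-(z-H)A_t(z-H)^{-1}\psi+A_t\psi$ makes sense and, after applying $(z-H)^{-1}$, equals $[(z-H)^{-1},A_t]\psi$ — and that interchanging $A_t$ with the norm-convergent Helffer--Sj\"ostrand integral is legitimate, which is justified precisely by the $t$-uniform dominating bound $\lesssim|\Im z|^{k-2}$ exhibited above. Once these points are settled, the rest is the routine Helffer--Sj\"ostrand estimate.
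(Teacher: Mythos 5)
Your proof is correct and follows essentially the same route as the paper: a Helffer--Sj\"ostrand representation of $\vp(H)$, the resolvent commutator identity $[(z-H)^{-1},A_t]=(z-H)^{-1}[H,A_t](z-H)^{-1}$, the uniform bound from Proposition~\ref{proposition:3.8}(b), and an $|\Im z|^{-2}$-weighted estimate made integrable by the almost analytic extension. Your added attention to domain invariance via Proposition~\ref{proposition:3.6} is a welcome clarification, but the underlying argument is the same.
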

%%%%%%%%%%%%%%%%%%%%%%%%%%%%%%%%%%%%%%%%%%%%%%%%%%%
%
%
%
%%%%%%%%%%%%%%%%%%%%%%%%%%%%%%%%%%%%%%%%%%%%%%%%%%%
\begin{proof}
We use the representation
\begin{equation}\nonumber
 \vp(H) = \int \d\phi(z) (z - H)^{-1}\ ,
\end{equation}
where $\phi(z)$ is an almost analytic extension of $\vp$ with
\begin{equation}\nonumber
 |\partial_{\bar z} \phi(x+iy)| \leq C |y|^2\quad\mbox{and}
 \quad\d\phi(z) = -\frac{1}{\pi}\frac{\partial}{\partial \bar z}
 \phi(z)\d x\d y  \ .
\end{equation}
Note that $\phi(x+iy)\in C_0^\infty(\R^2)$.

We get
\begin{equation}\nonumber
 \adat\vp(H) = \int\d\phi(z) (z-H)^{-1}
 [A_t,\, H] (z-H)^{-1}\ .
\end{equation}
This yields
\begin{equation}\nonumber
\begin{split}
 & \| \adat\vp(H)\| \\
 & \leq \sup_{0< |t|\leq 1}
 \| [A_t,\, H](i - H)^{-1}\| \,
 \int |\d\phi(z)|\, \|(z-H)^{-1}\|\, \|(i-H)(z-H)^{-1}\|\ .
\end{split}
\end{equation}
It is easy to prove that
\begin{equation}\label{eq:6.13}
 \int |\d\phi(z)|\, \|(z - H)^{-1}\| \,
 \|(i-H)(z-H)^{-1}\| \leq C \int\frac{|\d\phi(z)|}{|\Im z|^2}
 <\infty\ .
\end{equation}

By Proposition~\ref{proposition:3.8}(b)$(i)$ and \eqref{eq:6.13}
we finally get, for $g\leq g_1$
\begin{equation}\nonumber
 \sup_{0 < |t| \leq 1} \|  \adat\, \vp(H)\| <\infty\ .
\end{equation}
In a similar way we obtain, for $g\leq g_1$
\begin{equation}\nonumber
\begin{split}
 \sup_{0 <   |t| \leq 1} \| [A_t^\sigma,\, \vp(H)]\| <\infty \, ,
 \\
 \sup_{0<|t| \leq 1} \| [A_{\sigma\, t},\, \vp(H)\| <\infty\, ,\\
 \sup_{0<|t|\leq 1} \| [A_t^\sigma,\, \vp(H^\sigma)]\| <\infty\ .
\end{split}
\end{equation}
\end{proof}
%%%%%%%%%%%%%%%%%%%%%%%%%%%%%%%%%%%%%%%%%%%%%%%%%%%

The proof of Theorem~\ref{thm:3.7} is the consequence of the
following proposition
\begin{proposition}\label{proposition:6.2}
Suppose that the kernels $G^{(\alpha)}_{\ell,\epsilon,\epsilon'}$
satisfy Hypothesis~\ref{hypothesis:2.1} and
\ref{hypothesis:3.1}~(i)-(iii). We then have, for $g\leq g_1$,
\begin{equation}\nonumber
\begin{split}
 & \sup_{0<|t| \leq 1} \| [A_t,\, [A_t,\, H]] (H+i)^{-1} \| <\infty
 \, , \\
 & \sup_{0<|t|\leq 1} \| [A_t^\sigma, [A_t^\sigma,\, H](H+i)^{-1}\|
 <\infty \, ,\\
 & \sup_{0<|t|\leq 1} \| [A_{\sigma\,t}, [A_{\sigma\,t},\, H](H+i)^{-1}\|
 <\infty \, , \\
 & \sup_{0<|t|\leq 1} \| [A_t^\sigma, [A_t^\sigma,\, H^\sigma](H^\sigma+i)^{-1}\|
 <\infty \, ,
\end{split}
\end{equation}
\end{proposition}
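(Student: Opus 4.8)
The plan is to compute the two iterated commutators on the core $\gD$ by iterating the conjugation formulas \eqref{eq:3.76}--\eqref{eq:3.78} and \eqref{eq:3.81} (using that $A_t$ is a bounded function of $A$, so that $[A_t,\mathrm{e}^{itA}]=0$, and that $A$, $A^\sigma$, $A_\sigma$ are second quantizations of operators acting on the neutrino momenta through the flows $\phi_t$, $\phi_t^{\,\sigma}$, $\phi_{\sigma t}$), and then to estimate the resulting operators relatively to $H_0$ by means of Propositions~\ref{proposition:2.4} and~\ref{proposition:2.5}, absorbing the residual factor $\mathrm{e}^{itA}$ with Proposition~\ref{proposition:3.6}. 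We treat $[A_t,[A_t,H]](H+i)^{-1}$; the three other assertions go through verbatim after replacing $a,\phi_t,w^{(2)}$ by $a^\sigma,\phi_t^{\,\sigma}$, resp. $a_\sigma,\phi_{\sigma t}$, and, in the last one, $H$ by $H^\sigma$ and $G$ by $\tilde\chi^\sigma(p_2)G$. For $\vp\in\gD$, writing $[A_t,H]\vp=\tfrac1t(\mathrm{e}^{itA}H\mathrm{e}^{-itA}-H)\mathrm{e}^{itA}\vp$ and invoking \eqref{eq:3.76}, \eqref{eq:3.81} together with the additivity $H_I(G_1)-H_I(G_2)=H_I(G_1-G_2)$, one obtains
\[
 [A_t,H]\vp=\bigl(\d\Gamma(g_t)+g\,H_I(K_t)\bigr)\mathrm{e}^{itA}\vp,\qquad
 g_t=\tfrac1t\bigl(w^{(2)}\!\circ\phi_{-t}-w^{(2)}\bigr),\quad K_t=\tfrac1t\bigl(\mathrm{e}^{iat}G-G\bigr).
\]

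Since $[A_t,\mathrm{e}^{itA}]=0$, applying the Leibniz rule on $\gD$ together with $\mathrm{e}^{itA}\d\Gamma(g_t)\mathrm{e}^{-itA}=\d\Gamma(g_t\circ\phi_{-t})$ and $\mathrm{e}^{itA}H_I(K_t)\mathrm{e}^{-itA}=H_I(\mathrm{e}^{iat}K_t)$ yields, for $\vp\in\gD$,
\[
 [A_t,[A_t,H]]\vp=\bigl(\d\Gamma(g_{t,t})+g\,H_I(K_{t,t})\bigr)\mathrm{e}^{2itA}\vp,\qquad
 g_{t,t}=\tfrac1t\bigl(g_t\circ\phi_{-t}-g_t\bigr),\quad K_{t,t}=\tfrac1t\bigl(\mathrm{e}^{iat}K_t-K_t\bigr).
\]
Here $g_{t,t}=t^{-2}\bigl(w^{(2)}\!\circ\phi_{-2t}-2\,w^{(2)}\!\circ\phi_{-t}+w^{(2)}\bigr)$ is a second finite difference of $s\mapsto w^{(2)}\!\circ\phi_{-s}$, and $K_{t,t}=t^{-2}\bigl(\mathrm{e}^{2iat}G-2\mathrm{e}^{iat}G+G\bigr)$.

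It remains to bound both operators uniformly for $0<|t|\le1$. The function $g_{t,t}$ is real, and the radial structure of the vector field in \eqref{eq:3.63} (with $|p|^\alpha\frac{\d^\alpha}{\d|p|^\alpha}\tilde v$ bounded) together with \eqref{eq:3.69} gives $|g_{t,t}(p_2)|\le C\,w^{(2)}(p_2)$ uniformly; since $\d\Gamma(w^{(2)})=H_0^{(2)}\le H_0$, this yields $\|\d\Gamma(g_{t,t})(H_0+1)^{-1}\|\le C$. For the interaction term, the integral formula $K_{t,t}=\int_0^1\!\!\int_0^1\mathrm{e}^{i(u+v)at}(ia)^2G\,\d u\,\d v$ gives $\|K_{t,t}\|_{L^2}\le\|a^2G\|_{L^2}$, the finiteness of which is exactly guaranteed by Hypothesis~\ref{hypothesis:3.1}(iii.a)--(iii.b): for fixed $\xi_1,\xi_3$ the operator $a$ acts on $G^{(\alpha)}_{\ell,\epsilon,\epsilon'}(\xi_1,\cdot,\xi_3)$ only through the radial derivative $p_2\cdot\nabla_{p_2}$, and $a^2$ then produces, besides terms controlled by (iii.a) and by $G\in L^2$, the term $\sum_{i,j}p_{2,i}p_{2,j}\,\partial_{p_{2,i}}\partial_{p_{2,j}}G$, which lies in $L^2(\Sigma_1\times\Sigma_1\times\Sigma_2)$ by (iii.b) and Cauchy--Schwarz. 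Consequently, exactly as in the proof of self-adjointness of $H$ (using $w^{(3)}\ge m_W$), Propositions~\ref{proposition:2.4} and~\ref{proposition:2.5} give $\|H_I(K_{t,t})(H_0+1)^{-1}\|\le C\,\|K_{t,t}\|_{L^2}\le C\,\|a^2G\|_{L^2}$. Finally, $\cD(H)=\cD(H_0)$ and Proposition~\ref{proposition:3.6} give $\|(H_0+1)\mathrm{e}^{2itA}(H+i)^{-1}\|\le C\,\mathrm{e}^{2\Gamma|t|}\le C'$ for $|t|\le1$; multiplying the three estimates together bounds $\|[A_t,[A_t,H]](H+i)^{-1}\|$ uniformly in $t$.

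The cases of $A_t^\sigma$ and $A_{\sigma t}$ are identical, the vector fields $v^\sigma,v_\sigma$ again being radial with the required derivative bounds. For $H^\sigma$ the same computation applies once one notes that $\tilde\chi^\sigma(p_2)G\in\cD((a^\sigma)^2)$ with finite $L^2$ norm, which holds because $\eta^\sigma$ and $\tilde\chi^\sigma$ are smooth with bounded derivatives, so that multiplication by them and by their derivatives preserves $L^2$. The one genuinely delicate point is the square integrability of $a^2G$ (and of its $\sigma$-analogues): the double finite difference $K_{t,t}$ is a bona fide second-order object in the neutrino momentum, and it is precisely Hypothesis~\ref{hypothesis:3.1}(iii.b) that is designed to control it. The remaining work --- justifying the commutator identities above on $\gD$ and extending them to $\cD(H)$ --- is routine once one has the domain invariance of Proposition~\ref{proposition:3.6}.
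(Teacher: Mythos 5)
Your proposal is correct and follows essentially the same route as the paper: both compute the double commutator on the core $\gD$ via the conjugation formulas \eqref{eq:3.76} and \eqref{eq:3.81}, reducing it to the second finite difference of $s\mapsto w^{(2)}\circ\phi_s$ and $s\mapsto \mathrm{e}^{-isa}G$, bound the former by $\partial_s^2|\phi_s(p_2)|\le \mathrm{e}^{\Gamma|s|}|p_2|$ and the latter via Hypothesis~\ref{hypothesis:3.1}(iii) and the $N_\tau$-estimates of Propositions~\ref{proposition:2.4}--\ref{proposition:2.5}, and finally absorb the residual unitary with Proposition~\ref{proposition:3.6}. The only cosmetic differences are that the paper writes $\mathrm{e}^{2itA}$ on the left of the second difference (cf.~\eqref{eq:6.17}) while you place it on the right, and that the paper controls the second difference of $\mathrm{e}^{-isa}G$ by $\sup_{|s|\le 2}\|\partial_s^2 G^{(\alpha)}_{\ell,\epsilon,\epsilon';s}\|$ as in \eqref{eq:6.29}--\eqref{eq:6.32}, whereas you use the equivalent double-integral Taylor representation $K_{t,t}=\int_0^1\!\int_0^1 \mathrm{e}^{i(u+v)at}(ia)^2G\,\d u\,\d v$ together with unitarity of $\mathrm{e}^{isat}$ to get $\|K_{t,t}\|_{L^2}\le\|a^2G\|_{L^2}$; both routes invoke exactly the same hypotheses and give the same bound.
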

%%%%%%%%%%%%%%%%%%%%%%%%%%%%%%%%%%%%%%%%%%%%%%%%%%%
\begin{proof}
We have, for every $\psi\in\cD(H)$,
\begin{equation}\label{eq:6.17}
 [A_t,[A_t,H]]\psi =
 \frac{1}{t^2} \mathrm{e}^{2 i t A}
 (\mathrm{e}^{- 2 i t A} H \mathrm{e}^{ 2 i t A }
 - 2\mathrm{e}^{- i t A}H \mathrm{e}^{ i t A} + H)\psi\ .
\end{equation}
By \eqref{eq:3.76} we get
\begin{equation}\label{eq:6.18}
 [A_t,[A_t,H_0]]\psi =
 \frac{1}{t^2} \mathrm{e}^{2 i t A}
 (\d\Gamma( w^{(2)}\circ \phi_{2t} - 2 w^{(2)}\circ \phi_t
 + w^{(2)}))\psi\ ,
\end{equation}
where, for $\ell=1,2,3$,
\begin{equation}\label{eq:6.19}
 (w_\ell^{(2)} \circ \phi_{2 t})(p_2) -
 2 (w_\ell^{(2)}\circ\phi_t)(p_2) + w_\ell^{(2)}(p_2)
 = | \phi_{2 t}(p_2)| - 2 |\phi_t(p_2)| + |p_2|\ .
\end{equation}
We further note that
\begin{equation}\label{eq:6.20}
 \frac{1}{t^2} \big|\, |\phi_{2t}(p_2)| - 2 |\phi_t(p_2)| + |p_2|\,
 \big|
 \leq \sup_{|s|\leq 2|t|} \left| \frac{\partial^2}{\partial s^2}
 |\phi_s(p_2)|\, \right|\ ,
\end{equation}
and
\begin{equation}\label{eq:6.21}
 \frac{\partial^2}{\partial s^2} |\phi_s(p_2)|
 = |\phi_s(p_2)| \leq \mathrm{e}^{\Gamma |s|} |p_2|\ .
\end{equation}
Combining \eqref{eq:6.18} with \eqref{eq:6.19}-\eqref{eq:6.21} we
get
\begin{equation}\nonumber
 \| [ A_t,\, [A_t,\, H_0]](H_0+1)^{-1}\| \leq \mathrm{e}^{2\Gamma
 |t|}\ ,
\end{equation}
and
\begin{equation}\nonumber
 \sup_{0 < |t| \leq 1} \| [A_t,\, [A_t,\, H_0]](H_0+1)^{-1}\|
 \leq \mathrm{e}^{2\Gamma}\ .
\end{equation}
In a similar way we obtain
\begin{equation}\nonumber
 \sup_{0 <  | t |\leq 1}
 \| [A_t^\sigma,\, [A_t^\sigma, H_0]](H_0+1)^{-1}\| \leq
 C \mathrm{e}^{2\Gamma}\ ,
\end{equation}
\begin{equation}\nonumber
 \sup_{0 <  | t |\leq 1}
 \| [A_{\sigma\, t},\, [A_{\sigma\, t}, H_0]](H_0+1)^{-1}\| \leq
 C \mathrm{e}^{2\Gamma}\ .
\end{equation}
Here $C$ is a positive constant.

Let us now prove that
\begin{equation}\nonumber
 \sup_{0<|t|\leq 1} \| [A_t,\, [A_t,\, H_I(G)]](H+i)^{-1}\|
 <\infty
\end{equation}
By \eqref{eq:3.81} and \eqref{eq:6.17} we get, for every $\psi\in
\cD(H)$,
\begin{equation}\label{eq:6.26}
\begin{split}
  & [A_t,\, [A_t,\, H_I(G)]]\psi \\
  & =
   \sum_{\alpha=1,2}\sum_{\ell=1,2,3}
  \sum_{\epsilon\neq\epsilon'} \frac{\mathrm{e}^{2 i t A}}{t^2}
  \Big( \mathrm{e}^{- 2 i t A} H_I
  (G^{(\alpha)}_{\ell,\epsilon,\epsilon'}) \mathrm{e}^{2 i t A}
  - 2 \mathrm{e}^{- i t A} H_I
  (G^{(\alpha)}_{\ell,\epsilon,\epsilon'})\mathrm{e}^{i t A} \\
  & + H_I(G^{(\alpha)}_{\ell,\epsilon,\epsilon'})\Big)\psi \\
  & =
   \sum_{\alpha=1,2}\sum_{\ell=1,2,3}
  \sum_{\epsilon\neq\epsilon'} \frac{\mathrm{e}^{2 i t A}}{t^2}
  \Big( H_I(G^{(\alpha)}_{\ell,\epsilon,\epsilon';2t})
  - 2 H_I(G^{(\alpha)}_{\ell,\epsilon,\epsilon';t})
  + H_I(G^{(\alpha)}_{\ell,\epsilon,\epsilon';0})\Big)\psi\ ,
\end{split}
\end{equation}
where
\begin{equation}\nonumber
\begin{split}
 G^{(\alpha)}_{\ell,\epsilon,\epsilon';t}(\xi_1,\xi_2,\xi_3)
 & = (D\phi_t(p_2))^\frac12 G^{(\alpha)}_{\ell,\epsilon,\epsilon'}
 (\xi_1;\, \phi_t(p_2),s_2;\, \xi_3) \\
 & = (e^{-i t a}
 G^{(\alpha)}_{\ell,\epsilon,\epsilon'})(\xi_1,\xi_2,\xi_3)\ .
\end{split}
\end{equation}
Combining \eqref{eq:2.79} and \eqref{eq:2.80} with
\eqref{eq:3.33}-\eqref{eq:3.36} and \eqref{eq:6.26} we get
\begin{equation}\label{eq:6.28}
 \| [ A_t,\, [A_t,\, H_I(G)]]\psi\| \leq
 g\, K(G_t)
 (C_{\beta\eta} \|(H_0+I)\psi\|
 + (C_{\beta\eta} + B_{\beta\eta})\|\psi\|)\ .
\end{equation}
Here $K(G_t)>0$ and
\begin{equation}\label{eq:6.29}
 K(G_t)^2 =
 \sum_{\alpha=1,2}\sum_{\ell=1,2,3}\sum_{\epsilon\neq\epsilon'}
 \frac{1}{t^2}
 \| G^{(\alpha)}_{\ell,\epsilon,\epsilon';2t} - 2
 G^{(\alpha)}_{\ell,\epsilon,\epsilon';t}
 +
 G^{(\alpha)}_{\ell,\epsilon,\epsilon'}\|^2_{L^2(\Sigma_1\times\Sigma_1
 \times\Sigma_2)}\ .
\end{equation}
We further note that, for $0 < |t| \leq 1$,
\begin{equation}\label{eq:6.30}
 K(G_t) \leq
 \sup_{0<|s|\leq 2}
 \Big( \sum_{\alpha=1,2}\sum_{\ell=1,2,3}\sum_{\epsilon\neq\epsilon'}
 \left\| \frac{\partial^2}{\partial s^2}
 G^{(\alpha)}_{\ell,\epsilon,\epsilon';s}\right\|^2
 _{L^2(\Sigma_1\times\Sigma_1\times\Sigma_2)}\Big)^\frac12\ .
\end{equation}
We get
\begin{equation}\label{eq:6.31}
\begin{split}
 & \left(\frac{\partial}{\partial t}
 G^{(\alpha)}_{\ell,\epsilon,\epsilon';t}\right)\\
 & = \frac32 (\mathrm{e}^{- i t a}
 G^{(\alpha)}_{\ell,\epsilon,\epsilon'})
 + (\mathrm{e}^{- i t a} (p_2\cdot\nabla_{p_2}
 G^{(\alpha)}_{\ell,\epsilon,\epsilon'}))\, ,
 \end{split}
\end{equation}
and
\begin{equation}\label{eq:6.32}
\begin{split}
 & \left(\frac{\partial^2}{\partial t^2}
 G^{(\alpha)}_{\ell,\epsilon,\epsilon';t}\right)\\
 & = \frac94 (\mathrm{e}^{- i t a}
 G^{(\alpha)}_{\ell,\epsilon,\epsilon'})
 +\frac72 (\mathrm{e}^{-i t a} (p_2\cdot\nabla_{p_2}
 G^{(\alpha)}_{\ell,\epsilon,\epsilon'}))
 +\! \sum_{i,j=1,2,3}\mathrm{e}^{- i t a}\big(p_{2,i} p_{2,j}
 \partial^2_{p_{2,i}p_{2,j}}
 G^{(\alpha)}_{\ell,\epsilon,\epsilon'}\big) .
\end{split}
\end{equation}
Recall that $\mathrm{e}^{- i t a}$ is an one parameter group of
unitary operators in $L^2(\Sigma_1\times\Sigma_1\times\Sigma_2)$.

Combining Hypothesis~\ref{hypothesis:3.1}(iii.a) and (iii.b), with
\eqref{eq:6.28}-\eqref{eq:6.32} we finally get
\begin{equation}\nonumber
 \sup_{0<|t|\leq 1} \| [\,A_t,\, [A_t,\,
 H_I(G)]\,](H_0+1)^{-1}\|<\infty\ .
\end{equation}
In view of $\cD(H) = \cD(H_0)$ the operators $H_0(H+i)^{-1}$ and
$H(H_0 - 1)^{-1}$ are bounded and we obtain
\begin{equation}\nonumber
 \sup_{0 < |t|\leq 1} \| [\,A_t,\, [A_t,\, H_0]\,] (H+i)^{-1}\|<\infty\
 ,
\end{equation}
\begin{equation}\label{eq:6.35}
 \sup_{0 < |t|\leq 1} \| [\,A_t,\, [A_t,\, H_I(G)]\,] (H+i)^{-1}\|<\infty\
 .
\end{equation}
This yields
\begin{equation}\label{eq:6.36}
 \sup_{0<|t|\leq 1} \| [\, A_t,\, [A_t,\,
 H]\,](H+i)^{-1}\|<\infty\ ,
\end{equation}
for $g\leq g_1$.

Let $V(p_2)$ denote any of the two $C^\infty$-vector fields
$v^\sigma(p_2)$ and $v_\sigma(p_2)$ and let $\tilde a$ denote the
corresponding $a^\sigma$ and $a_\sigma$ operators. We get
\begin{equation}\nonumber
\begin{split}
 & \left( \frac{\partial^2}{\partial t^2} (\mathrm{e}^{- i \tilde a
 t} G^{(\alpha)}_{\ell,\epsilon,\epsilon'})\right)
 (\xi_1,\xi_2,\xi_3) \\
 & =
 \frac14 \left( \mathrm{e}^{- i \tilde a t}(
 (\mathrm{div} V(p_2))^2
 G^{(\alpha)}_{\ell,\epsilon,\epsilon'}) \right)
 (\xi_1,\xi_2,\xi_3) \\
 & +
 \frac12 \left( \mathrm{e}^{- i \tilde a t}(
 (\mathrm{div} V(p_2)) V(p_2)\cdot\nabla_{p_2}
 G^{(\alpha)}_{\ell,\epsilon,\epsilon'}) \right)
 (\xi_1,\xi_2,\xi_3) \\
 & +
 \frac12 \left(
 \mathrm{e}^{- i \tilde a t}(\sum_{i,j=1}^3
 (V_i(p_2) (\partial^2_{p_{2,i} p_{2,j}} V_j(p_2)))
 G^{(\alpha)}_{\ell,\epsilon,\epsilon'}) \right)
 (\xi_1,\xi_2,\xi_3)\\
 & +
 \frac12 \left(
 \mathrm{e}^{- i \tilde a t}(\sum_{i,j=1}^3
 V_i(p_2) \frac{\partial V_j}{\partial p_{2,i}}(p_2)
 \frac{\partial}{\partial p_{2,j}}
 G^{(\alpha)}_{\ell,\epsilon,\epsilon'})\right)
 (\xi_1,\xi_2,\xi_3)\\
 & + \frac12 \left(\mathrm{e}^{- i \tilde a t}
 (\sum_{i,j=1}^3 V_i(p_2) V_j(p_2) \frac{\partial^2}
 {\partial p_{2,i} \partial p_{2,j}}
 G^{(\alpha)}_{\ell,\epsilon,\epsilon'})\right)(\xi_1,\xi_2,\xi_3)\
 .
\end{split}
\end{equation}
Combining the properties of the $C^\infty$ fields $v^\sigma(p_2)$
and $v_\sigma(p_2)$ together with Hypothesis~\ref{hypothesis:2.1}
and \ref{hypothesis:3.1} we get, from \eqref{eq:6.35} and by
mimicking the proof of \eqref{eq:6.36},
\begin{equation}\label{eq:6.38}
 \sup_{0 <|t|\leq 1} \| \, [\, A_t^\sigma,\,
 [A_t^\sigma, \, H]\,] (H+i)^{-1} \| <\infty\ ,
\end{equation}
\begin{equation}\nonumber
 \sup_{0 <|t|\leq 1} \| \, [\, A_{\sigma\,t},\,
 [A_{\sigma\, t},\, H]\,] (H+i)^{-1} \| <\infty\ ,
\end{equation}
for $g\leq g_1$.

Similarly, by mimicking the proof of \eqref{eq:6.38}, we easily
get, for $g\leq g_1$,
\begin{equation}\nonumber
 \sup_{0 <|t|\leq 1} \| \, [\, A_{t}^\sigma,\,
 [A_{t}^\sigma,\, H^\sigma]\,] (H^\sigma+i)^{-1} \| <\infty\ .
\end{equation}
This concludes the proof of Proposition~\ref{proposition:6.2}
\end{proof}
%%%%%%%%%%%%%%%%%%%%%%%%%%%%%%%%%%%%%%%%%%%%%%%%%%%

%%%%%%%%%%%%%%%%%%%%%%%%%%%%%%%%%%%%%%%%%%%%%%%%%%%

%%%%%%%%   PROOF OF THEOREM 3.7 %%%%%%%%%%%%%%%%%%%
We now prove Theorem~\ref{thm:3.7}.

\smallskip

\noindent\textit{Proof of Theorem~\ref{thm:3.7}}. In view of
\cite[Lemma~6.2.3]{Amreinetal1996} (see also
\cite[Proposition~28]{Frohlichetal2008}), the proof of
Theorem~\ref{thm:3.7} will follow from
Proposition~\ref{proposition:6.1} and the following estimates
\begin{equation}\label{eq:6.41}
 \sup_{0< |t| \leq 1} \| \, [\, A_t,\, [A_t,\, \vp(H)]\, ]\,
 \|<\infty\ ,
\end{equation}
\begin{equation}\label{eq:6.42}
 \sup_{0< |t| \leq 1} \| \, [\, A_t^\sigma,\, [A_t^\sigma,\, \vp(H)]\, ]\,
 \|<\infty\ ,
\end{equation}
\begin{equation}\label{eq:6.43}
 \sup_{0< |t| \leq 1} \| \, [\, A_{\sigma\,t},\, [A_{\sigma\,t},\, \vp(H)]\, ]\,
 \|<\infty\ ,
\end{equation}
\begin{equation}\label{eq:6.44}
 \sup_{0< |t| \leq 1} \| \, [\, A_t^\sigma,\, [A_t^\sigma,\, \vp(H^\sigma)]\, ]\,
 \|<\infty\ ,
\end{equation}
for every $\vp\in C_0^\infty((-\infty, m_1-\delta/2))$ and for
$g\leq g_1$.

Let us prove \eqref{eq:6.41}. The inequalities
\eqref{eq:6.42}-\eqref{eq:6.44} can be proved similarly.

To this end, let $\phi$ be an almost analytic extension of $\vp$
satisfying
\begin{equation}\nonumber
 | \partial_{\bar z} \phi(x+iy)| \leq C |y|^3\ ,
\end{equation}
and
\begin{equation}\nonumber
 \vp(H) = \int (z-H)^{-1} \d\phi(z) \ ,\quad
 \d\phi(z) = -\frac{1}{\pi} \frac{\partial}{\partial \bar z}
 \phi(z) \d x \d y\ .
\end{equation}
It follows that
\begin{equation}\nonumber
\begin{split}
  & [A_t\, [A_t,\, \vp(H)]\, ]
  = \int  \Big( (z-H)^{-1}
 [A_t\, [A_t,\, H]\, ] (z-H)^{-1}\\
 & + 2 (z-H)^{-1} [A_t,\, H] (z-H)^{-1} [A_t,\, H]
 (z-H)^{-1}\Big) \d\phi(z)
\end{split}
\end{equation}

We note that
\begin{equation}\label{eq:6.56}
 \| (H+i) (H-z)^{-1}\| \leq \frac{C}{| \Im z |},\quad
 \mbox{for }z\in\mathrm{supp}\phi\ .
\end{equation}

We also have
\begin{equation}\label{eq:6.55}
\begin{split}
 & \sup_{0 < |t| \leq 1} \| \int  (z-H)^{-1}
 [A_t\, [A_t,\, H]\, ] (z-H)^{-1} \d\phi(z) \|
 \\
 & \leq \sup_{ 0<|t| \leq 1}
 \int
 \| [A_t\, [A_t,\, H]\, ](H+i)^{-1}\| \,
 \| (H+i)  (z-H)^{-1}\| \frac{ |\d\phi(z)|}{|\Im z|} \\
 & \leq C \sup_{0<|t|\leq 1} \|\, \left[A_t,\, [A_t,H]\,\right]
 (H+i)^{-1}\,\|
 \int\frac{|\d\phi(z)|}{|\Im z|^2}
 \ .
\end{split}
\end{equation}

Therefore, combining Proposition~\ref{proposition:3.8} (b)(i) and
\eqref{eq:6.56} we obtain
\begin{equation}\label{eq:6.57}
\begin{split}
 & \sup_{0 < |t| \leq 1}
 \| \int \d\phi(z) (H-z)^{-1}  [A_t,\, H]
 (H-z)^{-1} [A_t,\, H](H-z)^{-1} \| \\
 & = \sup_{ 0<|t|\leq 1}
 \| \int
 (H-z)^{-1}  [A_t,\, H]
 (H+i)^{-1} (H+i)(H-z)^{-1} \\
 & \quad\quad  [A_t,\, H]
 (H+i)^{-1} (H+i)(H-z)^{-1} \| \d\phi(z) \\
 & \leq C \left( \int \frac{|\d\phi(z)|}{|y| ^3}\right)
 \sup_{ 0 <|t|\leq 1} \| \, [A_t,\, H](H+i)^{-1}\|^2
 <\infty\ .
\end{split}
\end{equation}
Inequality \eqref{eq:6.57} together with \eqref{eq:6.55} yields
\eqref{eq:6.41}, and $H$ is locally of class $C^2(A)$ on
$(-\infty,\, m_1-\delta/2)$ for $g\leq g_1$.

In a similar way it follows from
Proposition~\ref{proposition:3.8}(b),
Proposition~\ref{proposition:6.1} and
Proposition~\ref{proposition:6.2} that $H$ is locally of class
$C^2(A^\sigma)$ and $C^2(A_\sigma)$ in $(-\infty, m_1-\delta/2)$
and that $H^\sigma$ is locally of class $C^2(A^\sigma)$ in
$(-\infty, m_1-\delta/2)$, for $g\leq g_1$. This ends the proof of
Theorem~\ref{thm:3.7}. \qed

%%%%%%%%%%%%%%%%%%%%%%%%%%%%%%%%%%%%%%%%%%%%%%%%%%%%%%%%%%%%%
%%%%%%%%%%%%%%%%%%%%%%%%%%%%%%%%%%%%%%%%%%%%%%%%%%%%%%%%%%%%%

%%%%%%%%%%%%%%%%%%%%%%%%%%%%%%%%%%%%%%%%%%%%%%%%%%%%%%%%%%%%%
%%%%%%%%%%%%%%%%%%%%%%%%%%%%%%%%%%%%%%%%%%%%%%%%%%%%%%%%%%%%%
%%%%%%%%%%    SECTION 7 PROOF OF THEOREM 3.4       %%%%%%%%%%
%%%%%%%%%%%%%%%%%%%%%%%%%%%%%%%%%%%%%%%%%%%%%%%%%%%%%%%%%%%%%
%%%%%%%%%%%%%%%%%%%%%%%%%%%%%%%%%%%%%%%%%%%%%%%%%%%%%%%%%%%%%
\section{Proof of Theorem~\ref{thm:3.4}}
\setcounter{equation}{0}

By \eqref{eq:3.27-ter}, $\cup_{n\geq1} \left(
(\gamma-\epsilon_\gamma)^2\sigma_n,\,
(\gamma+\epsilon_\gamma)\sigma_n)\right)$ is a covering by open
sets of any compact subset of $(E,\, m_1-\delta]$ and of the
interval $(E,\, m_1-\delta]$ itself. Theorem~\ref{thm:3.4}~(i) and
(ii) follow from Theorems~0.1 and 0.2 in \cite{Sahbani1997} and
Theorems~\ref{thm:3.7} and \ref{thm:3.10} above with $g_\delta =
\tilde g_\delta^{(2)}$, where $\tilde g_\delta^{(2)}$ is given in
Theorem~\ref{thm:3.10}. Theorem~\ref{thm:3.4}~(iii) follows from
Theorem~25 in \cite{Mourre1981}.

%%%%%%%%%%%%%%%%%%%%%%%%%%%%%%%%%%%%%%%%%%%%%%%%%%%%%%%%%%%%%
%%%%%%%%%%%%%%%%%%%%%%%%%%%%%%%%%%%%%%%%%%%%%%%%%%%%%%%%%%%%%

%%%%%%%%%%%%%%%%%%%%%%%%%%%%%%%%%%%%%%%%%%%%%%%%%%%%%%%%%%%%%
%%%%%%%%%%%%%%%%%%%%%%%%%%%%%%%%%%%%%%%%%%%%%%%%%%%%%%%%%%%%%
%%%%%%%%%%    APPENDIX       %%%%%%%%%%%%%%%%%%%%%%%%%%%%%%%%
%%%%%%%%%%%%%%%%%%%%%%%%%%%%%%%%%%%%%%%%%%%%%%%%%%%%%%%%%%%%%
%%%%%%%%%%%%%%%%%%%%%%%%%%%%%%%%%%%%%%%%%%%%%%%%%%%%%%%%%%%%%
\begin{appendix}
\section{}\label{appendix}
\setcounter{equation}{0}

In this appendix, we will prove Proposition~\ref{proposition:3.5}.
We apply the method developed in \cite{Bachetal2006} because every
infrared cutoff Hamiltonian that one considers has a ground state
energy which is a simple eigenvalue.

Let, for $n\geq 0$,
\begin{equation}\nonumber
\begin{split}
 & \gF^{\sigma_n} = \gF^n \, , \\
 & \Snn = \Sigma_1 \cap \{p_2;\
 \sigma_{n+1}\leq |p_2| < \sigma_n\} \ , \\
 & \Fldnn = \gF_a (L^2(\Snn))\otimes \gF_a(L^2(\Snn)) \, , \\
 %& \Flnn  = \gF_{\ell,1}\otimes \Fldnn \, , \\
 %& \FLnn  = \otimes_{\ell=1}^3 \Flnn \, , \\
 & \Fnn = \otimes_{\ell=1}^3 \, \Fldnn  .
\end{split}
\end{equation}

%Here $\gF_{\ell,1} =\otimes^2 \gF_a(L^2(\Sigma_1))$.

We have
\begin{equation}\nonumber
  \gF^{n+1} \simeq \gF^n \otimes \Fnn\ .
\end{equation}
Let $\Omega^n$ (respectively $\Omega_n^{n+1}$) be the vacuum state
in $\gF^n$ (respectively in $\Fnn$). We now set
\begin{equation}\nonumber
 H_{0\, n}^{\ \, n+1} = H_0^{(1)} + H_0^{(3)} + \sum_{\ell=1}^3
 \sum_{\epsilon=\pm} \int_{\sigma_{n+1}\leq |p_2| < \sigma_n}
 \!\!\! w_\ell^{(2)}(\xi_2) c_{\ell,\epsilon}^*
 (\xi_2) c_{\ell,\epsilon}(\xi_2) \d \xi_2\ .
\end{equation}
%where
%\begin{equation}\nonumber
% H_{0}^{(1)} = \sum_{\ell=1}^3
% \sum_{\epsilon=\pm} \int
% \!\!\! w_\ell^{(1)}(\xi_2) b_{\ell,\epsilon}^*
% (\xi_1) b_{\ell,\epsilon}(\xi_1) \d \xi_1\ .
%\end{equation}
The operator $H_{0\,n}^{\ \, n+1}$ is a self-adjoint operator in
$\Fnn$.

Let us denote by $H_I^n$ and $H_{I\, n}^{\ \, n+1}$ the
interaction $H_I$ given by \eqref{eq:2.35}-\eqref{eq:2.37} but
associated with the following kernels
\begin{equation}\nonumber
 \tilde\chi^{\sigma_n}(p_2) G^{(\alpha)}_{\ell,\epsilon,\epsilon'}
 (\xi_1, \xi_2,\xi_3)\ ,
\end{equation}
and
\begin{equation}\nonumber
 (\tilde\chi^{\sigma_{n+1}}(p_2) - \tilde\chi^{\sigma_n}(p_2))
 G^{(\alpha)}_{\ell,\epsilon,\epsilon'}(\xi_1,\xi_2,\xi_3)\ ,
\end{equation}
respectively, where $\tilde\chi^{\sigma_{n+1}}$ is defined by
\eqref{def:chitilde2}.

Let for $n\geq 0$,
\begin{equation}\nonumber
\begin{split}
 & H_+^n = H^n - E^n\ ,\\
 & \tilde H_+^{n} = H_+^n \otimes \1_n^{n+1} + \1_n \otimes H_{0\,
 n}^{\ \, n+1}\ .
\end{split}
\end{equation}
The operators $H_+^n$ and $\tilde H_+^n$ are self-adjoint
operators in $\gF^n$ and $\gF^{n+1}$ respectively. Here $\1^n$ and
$\1_n^{n+1}$ are the identity operators in $\gF^n$ and $\Fnn$
respectively.

Combining \eqref{eq:2.79} and \eqref{eq:2.80} with
\eqref{eq:3.33}-\eqref{eq:3.36} we obtain for $n\geq 0$,
\begin{equation}\label{eq:A.14}
 g \| H_I^n \psi\| \leq g K(G) ( C_{\beta\eta} \| H_0\psi\| +
 B_{\beta\eta} \|\psi\|)\ ,
\end{equation}
for every $\psi\in\cD(H_0^n) \subset \gF^n$.

It follows from \cite[\S V, Theorem~4.11]{Kato1966} that
\begin{equation}\nonumber
 H^n \geq -\frac{g K(G) B_{\beta\eta}}{1 - g_1 K(G) C_{\beta\eta}}
 \geq -\frac{ g_1 K(G) B_{\beta\eta}}{1 - g_1 K(G) C_{\beta\eta}}\
 ,
\end{equation}
and
\begin{equation}\nonumber
  E^n \geq -\frac{ g K(G) B_{\beta\eta}}{1 - g_1 K(G)
  C_{\beta\eta}}\ .
\end{equation}

We have
\begin{equation}\label{eq:A.17}
 (\Omega^n,\ H^n\Omega^n )=0\ .
\end{equation}
Therefore
\begin{equation}\nonumber
  E^n \leq 0\ ,
\end{equation}
and
\begin{equation}\label{eq:A.18}
 |E^n| \leq
 \frac{ g K(G) B_{\beta\eta}}{1 - g_1 K(G) C_{\beta\eta}}\ .
\end{equation}

Let
\begin{equation}\label{eq:A.19}
  K_n^{n+1} (G) = K(\1_{\sigma_{n+1} \leq |p_2| \leq 2 \sigma_n}\,
  G)\ .
\end{equation}

Combining \eqref{eq:2.79} and \eqref{eq:2.80} with
\eqref{eq:3.33}, \eqref{eq:3.34} and \eqref{eq:A.19} we obtain for
$n\geq 0$
\begin{equation}\label{eq:A.20}
 g \| H_{I\, n}^{\ \, n+1} \psi \| \leq g\, K_n^{n+1}(G)\,
 (C_{\beta\eta} \| H_0^{n+1} \psi\|
 + B_{\beta\eta} \| \psi \|) \ ,
\end{equation}
for $\psi\in \cD(H_0^{n+1}) \subset \gF^{n+1}$, where we remind
that $H_0^{n+1} = H_0|_{\gF^{\sigma_{n+1}}}$ as defined in
\eqref{def:h0n}.

We have for every $\psi\in\cD(H_0^{n+1})$,
\begin{equation}\label{eq:A.21}
 H_0^{n+1} \psi = \tilde H_+^n \psi + E^n \psi - g(H_I^n \otimes
 \1_n^{n+1})\psi\ ,
\end{equation}
and by \eqref{eq:A.14}
\begin{equation}\label{eq:A.22}
 g \| (H_I^n \otimes \1_n^{n+1}) \psi \| \leq g\, K(G)\,
 (C_{\beta\eta} \| H_0^{n+1} \psi\| +
 B_{\beta\eta} \|\psi\| )\ .
\end{equation}

In view of \eqref{eq:A.18} and \eqref{eq:A.21} it follows from
\eqref{eq:A.22} that
\begin{equation}\label{eq:A.23}
\begin{split}
  & g \| (H_I^n\otimes \1_n^{n+1})\psi\| \\
  & \leq
  \frac{g\, K(G)\, C_{\beta\eta}}{1 - g_1\, K(G)\, C_{\beta\eta}}
  \| \tilde H_+^n \psi\|
  + \frac{g\, K(G)\, B_{\beta\eta}}{1 - g_1\, K(G)\,
  C_{\beta\eta}} \big( 1 +
  \frac{g\, K(G)\, B_{\beta\eta}}{1 - g_1\, K(G)\, C_{\beta\eta}} \big)
  \| \psi\|\ .
\end{split}
\end{equation}
By \eqref{eq:3.38}, \eqref{eq:3.39}, \eqref{eq:A.20},
\eqref{eq:A.21}, \eqref{eq:A.23} we finally get
\begin{equation}\label{eq:A.24}
 g \| H_{I\, n}^{\ \, n+1} \psi\| \leq g K_n^{n+1} (G)
 (\tilde C_{\beta\eta} \| \tilde H_+^n \psi\|
 + \tilde B_{\beta\eta} \|\psi\|)\ .
\end{equation}

For $n\geq 0$, a straightforward computation yields
\begin{equation}\label{eq:A.25}
  K_n^{n+1} (G) \leq \sigma_n \tilde{K}(G)
  \leq  \sup (\frac{4\Lambda\gamma}{2m_1-\delta},\
  1)\, \tilde K(G) \frac{\sigma_{n+1}}{\gamma}\ .
\end{equation}

Recall that for $n\geq 0$,
\begin{equation}\label{eq:A.26}
 \sigma_{n+1} < m_1\ .
\end{equation}

By \eqref{eq:A.24}, \eqref{eq:A.25} and \eqref{eq:A.26}, we get,
for $\psi\in\cD(H_0)$,
\begin{equation}\nonumber
 g\, \|H_{I\, n}^{\ \, n+1} \psi\| \leq g\,
 K_n^{n+1}(G)\, \big(\, \tilde C_{\beta\eta} \| (\tilde H_+^n +
 \sigma_{n+1})\psi\| + (\tilde C_{\beta\eta}\, m_1 + \tilde
 B_{\beta\eta}) \|\psi\|\, \big)\ ,
\end{equation}
and for $\phi\in\gF$,
\begin{equation}\label{eq:B.28}
\begin{split}
  g\| H_{I\, n}^{\ \, n+1} (\tilde H_+^n +
 \sigma_{n+1})^{-1}\phi\|
 & \leq g\, K_n^{n+1}(G)\, \big(\, \tilde C_{\beta\eta} +
 \frac{ m_1 \tilde C_{\beta\eta} + \tilde
 B_{\beta\eta}}{\sigma_{n+1}}\, \big) \|\phi\| \\
 & \leq \frac{g}{\gamma}\,
 \sup(\frac{4\Lambda\gamma}{2m_1-\delta},\, 1)\,
 \tilde K(G) (2m_1 \tilde C_{\beta\eta} + \tilde B_{\beta\eta})
 \|\phi\|\ .
\end{split}
\end{equation}

Thus, by \eqref{eq:B.28}, the operator $H_{I\, n}^{\ \, n+1}
(\tilde H_+^n + \sigma_{n+1})^{-1}$ is bounded and
\begin{equation}\nonumber
 g \| H_{I\, n}^{\ \, n+1} (\tilde
H_+^n + \sigma_{n+1})^{-1} \| \leq g\frac{\tilde D}{\gamma}\ ,
\end{equation}
where $\tilde D$ is given by (see \eqref{def:Dtilde}
\begin{equation}\nonumber
 \tilde D = \, \sup (\frac{4\Lambda\gamma}{2m_1-\delta},\, 1)\,
 \tilde K(G)\,
 (2m_1 \tilde C_{\beta\eta} + \tilde B_{\beta\eta}) .
\end{equation}

This yields, for $\psi\in\cD(\tilde H_+^n)$,
\begin{equation}\nonumber
 g \| H_{I\, n}^{\ \, n+1} \psi\| \leq g \frac{\tilde D}{\gamma}
  \| (\tilde H_+^n + \sigma_{n+1})\psi\| \ .
\end{equation}
Hence it follows from \cite[\S V, Theorems~4.11 and
4.12]{Kato1966} that
\begin{equation}\label{eq:B.22}
  g | (H_{I\, n}^{\ \, n+1}\psi,\, \psi)| \leq g\frac{\tilde
  D}{\gamma} (\, (\tilde H_+^n + \sigma_{n+1})\psi,\,\psi\,)\ .
\end{equation}
Let $g_\delta^{(2)}>0$ be such that
\begin{equation}\nonumber
 g_\delta^{(2)} \frac{\tilde D}{\gamma} <1\quad\mbox{and}\quad
 g_\delta^{(2)} \leq g_\delta^{(1)}\ .
\end{equation}
By \eqref{eq:B.22} we get, for $g\leq g_\delta^{(2)}$,
\begin{equation}\label{eq:B.23}
 H^{n+1} = \tilde H_+^n + E^n + g H_{I\, n}^{\ \, n+1}
 \geq E^n - \frac{g\, \tilde D}{\gamma}\, \sigma_{n+1}
 + (1- \frac{g\, \tilde D}{\gamma}) \tilde H_+^n\ .
\end{equation}
Because $(1- g \tilde D / \gamma) \tilde H_+^n \geq 0$ we get from
\eqref{eq:B.23}
\begin{equation}\label{eq:B.24}
 E^{n+1} \geq E^n - \frac{g\, \tilde D}{\gamma}\, \sigma_{n+1},\ n\geq
 0\ .
\end{equation}
Suppose that $\psi^n\in\gF^n$ satisfies $\|\psi^n\|=1$ and for
$\epsilon>0$,
\begin{equation}\label{eq:B.25}
 (\psi^n,\, H^n \psi^n) \leq E^n + \epsilon \ .
\end{equation}
Let
\begin{equation}\label{eq:B.26}
 \tilde\psi^{n+1} = \psi^n\otimes\Omega_n^{n+1} \in \gF^{n+1}\ .
\end{equation}
We obtain
\begin{equation}\label{eq:B.27}
 E^{n+1} \leq (\tilde\psi^{n+1},\, H^{n+1} \tilde\psi^{n+1})
 \leq E^n + \epsilon + g(\tilde\psi^{n+1},\ H_{I\, n}^{\ \, n+1}
 \,\tilde\psi^{n+1})
\end{equation}
By \eqref{eq:B.22}, \eqref{eq:B.25}, \eqref{eq:B.26} and
\eqref{eq:B.27} we get, for every $\epsilon>0$,
\begin{equation}\nonumber
 E^{n+1} \leq E^n + \epsilon(1 + \frac{g\,\tilde D}{\gamma})
 + \frac{g\, \tilde D}{\gamma} \, \sigma_{n+1}\ ,
\end{equation}
where $g\leq g_\delta^{(2)}$.

This yields
\begin{equation}\label{eq:A.37}
 E^{n+1} \leq E^n + \frac{g\, \tilde D}{\gamma}\, \sigma_{n+1}\ ,
\end{equation}
and by \eqref{eq:B.24}, we obtain
\begin{equation}\nonumber
 |E^n - E^{n+1}| \leq \frac{g\, \tilde D}{\gamma}\, \sigma_{n+1}\
 .
\end{equation}

For $n=0$, since $\sigma_0 = \Lambda$, remind that $H_0^{\, 0} =
H_0^{n=0} = H_0^{\sigma_0} = H_0 |_{\gF^\Lambda}$. Thus, the
ground state energy of $H_0^{\, 0}$ is $0$ and it is a simple
isolated eigenvalue of $H_0^{\, 0}$ with $\Omega^0$, the vacuum in
$\gF^0$, as eigenvector. Moreover, since $\Lambda>m_1$,
\begin{equation}\nonumber
 \inf \left(\sigma(H_0^{\, 0}\right) \setminus \{0\}) = m_1\ ,
\end{equation}
thus $(0,m_1)$ belongs to the resolvent set of $H_0^{\, 0}$.

By Hypothesis~\ref{hypothesis:3.1}(iv) we have $H^0 = H_0^{\, 0}$.
Hence $E^0=\{0\}$ is a simple isolated eigenvalue of $H^0$ and
$H^0 = H_+^{\, 0}$. We finally get
\begin{equation}
 \inf\left( \sigma(H_+^{\, 0}) - \{0\}\right)
 = m_1 > m_1 - \frac{\delta}{2} =\sigma_1\ .
\end{equation}

We now prove Proposition ~\ref{proposition:3.5} by induction in
$n\in\N^*$. Suppose that $E^n$ is a simple isolated eigenvalue of
$H^n$ such that
\begin{equation}\nonumber
  \inf \left( \sigma(H_+^n)\setminus\{0\} \right)
  \geq (1 -\frac{3 g \tilde D}{\gamma})\sigma_n,\quad n\geq 1\ .
\end{equation}
Since \eqref{eq:prop-sigman} gives $\sigma_{n+1} < (1 - \frac{3 g
\tilde D}{\gamma})\sigma_n$ for $g\leq g_\delta^{(2)}$, $0$ is
also a simple isolated eigenvalue of $\tilde{H}_+^n$ such that
\begin{equation}\label{eq:A.47}
  \inf \left( \sigma(\tilde H_+^n)\setminus\{0\} \right)
  \geq \sigma_{n+1}\ .
\end{equation}
We must now prove that $E^{n+1}$ is a simple isolated eigenvalue
of $H^{n+1}$ such that
\begin{equation}\nonumber
 \inf \left( \sigma(H_+^{n+1}) \setminus\{0\}\right)
 \geq (1-\frac{3 g \tilde D}{\gamma})\sigma_{n+1}\ .
\end{equation}
Let
\begin{equation}\nonumber
 \lambda^{(n+1)} = \sup_{\psi\in\gF^{n+1};\, \psi\neq 0}
 \ \ \inf_{(\phi,\psi)=0;\, \phi\in\cD(H^{n+1});\, \|\phi\|=1}
 (\phi,\, H_+^{n+1}\phi)\ .
\end{equation}
By \eqref{eq:B.23} and \eqref{eq:A.37}, we obtain, in $\gF^{n+1}$
\begin{equation}\label{eq:A.50}
\begin{split}
 H_+^{n+1} & \geq E^n - E^{n+1}-\frac{g \tilde D}{\gamma}\sigma_{n+1}
 + (1- \frac{g \tilde D}{\gamma})\tilde H_+^{n}\\
 & \geq (1-\frac{g \tilde D}{\gamma}) \tilde H_+^n -
  \frac{2 g \tilde D}{\gamma} \sigma_{n+1}\ .
\end{split}
\end{equation}
By \eqref{eq:B.26}, $\tilde\psi^{n+1}$ is the unique ground state
of $\tilde H_+^n$ and by \eqref{eq:A.47} and \eqref{eq:A.50}, we
have, for $g\leq g_\delta^{(2)}$,
\begin{equation}\nonumber
\begin{split}
 \lambda^{(n+1)} & \geq \inf_{(\phi,\tilde\psi^{n+1})=0;\,
 \phi\in\cD(H^{n+1});\, \|\phi\|=1} (\phi, H_+^{n+1}\phi) \\
  & \geq (1-\frac{g\tilde D}{\gamma})\sigma_{n+1} - \frac{2 g
  \tilde D}{\gamma} \sigma_{n+1} = (1- \frac{3 g \tilde
  D}{\gamma})
  \sigma_{n+1} >0\ .
\end{split}
\end{equation}
This concludes the proof of Proposition~\ref{proposition:3.5} by
choosing $g_\delta = g_\delta^{(2)}$, if one proves that $H^1$
satisfies Proposition~\ref{proposition:3.5}. By noting that $0$ is
a simple isolated eigenvalue of $\tilde H_+^0$ such that
$\inf(\sigma(\tilde H_+^0)\setminus\{0\} ) =\sigma_1$, we prove
that $E^1$ is indeed an isolated simple eigenvalue of $H^1$ such
that $\inf(\sigma(H_+^1)\setminus\{0\}) \geq (1-\frac{3 g \tilde
D}{\gamma}) \sigma_1$ by mimicking the proof given above for
$H_+^{n+1}$. \par\qed
\end{appendix}
%%%%%%%%%%%%%%%%%%%%%%%%%%%%%%%%%%%%%%%%%%%%%%%%%%%%%%%%%%%%%
%%%%%%%%%%%%%%%%%%%%%%%%%%%%%%%%%%%%%%%%%%%%%%%%%%%%%%%%%%%%%

%%%%%%%%%%%%%%%%%%%%%%%%%%%%%%%%%%%%%%%%%%%%%%%%%%%%%%%%%%%%%
%%%%%%%%%%%%%%%%%%%%%%%%%%%%%%%%%%%%%%%%%%%%%%%%%%%%%%%%%%%%%
%%%%%%%%%%    BIBLIOGRAPHY   %%%%%%%%%%%%%%%%%%%%%%%%%%%%%%%%
%%%%%%%%%%%%%%%%%%%%%%%%%%%%%%%%%%%%%%%%%%%%%%%%%%%%%%%%%%%%%
%%%%%%%%%%%%%%%%%%%%%%%%%%%%%%%%%%%%%%%%%%%%%%%%%%%%%%%%%%%%%

%\bibliographystyle{plain} \bibliography{coulomb}

\end{document}